%
\documentclass[runningheads]{llncs}
\usepackage[T1]{fontenc}
%
\usepackage{graphicx}
%
%
\usepackage[square,sort,comma,numbers]{natbib}
\usepackage[ruled]{algorithm2e} 
\usepackage{framed}
\usepackage{xcolor}
\usepackage{apxproof}
\usepackage{amsmath,amsfonts}

\usepackage[letterpaper, margin=1in]{geometry}

\newcommand\yw[1]{\textcolor{blue}{[Ying: #1]}}

\newcommand{\mechanism}{\mathcal{M}}
\newcommand{\tvd}{\text{TVD}}
\newcommand{\vek}[1]{\boldsymbol{#1}}
\newcommand{\accept}{\mathbf{A}}
\newcommand{\reject}{\mathbf{R}}
\newcommand{\stateL}{{{L}}}
\newcommand{\stateH}{{{H}}}
\newcommand{\signalL}{{{l}}}
\newcommand{\signalH}{{{h}}}
\newcommand{\typeL}{{{\mathcal{L}}}}
\newcommand{\typeH}{{{\mathcal{H}}}}
\newcommand{\Bin}{\mathbf{Bin}} 
\newcommand{\thetaMajority}{\theta_{\texttt{maj}}}
\newcommand{\thetaOurs}{\theta^\ast}
\newcommand{\config}{\{(\alpha_\typeL, \alpha_\typeH), (\mu, P_\signalH^\stateH, P_\signalH^\stateL)\}}

\newtheoremrep{theorem}{Theorem}
\newtheoremrep{lemma}{Lemma}
\newtheoremrep{claim}{Claim}
\newtheoremrep{proposition}{Proposition}
\newtheoremrep{remark}{Remark}


\begin{document}
\title{Aggregation of Antagonistic Contingent Preferences: When Is It Possible?}
%
%
\author{Xiaotie Deng\inst{1}\orcidID{0000-0002-5282-6467} \and
Biaoshuai Tao\inst{2}\orcidID{0000-0003-4098-844X} \and
Ying Wang\inst{1}\orcidID{0009-0005-0834-4531}}
\authorrunning{X. Deng et al.}
%
\institute{
Center on Frontiers of Computing Studies, School of Computer Science, Peking University, Beijing, China 
\and
John Hopcroft Center for Computer Science, School of Electronic Information and Electrical Engineering, Shanghai Jiao Tong University, Shanghai, China
}
\maketitle              

\begin{abstract}
We study a two-alternative voting game where voters' preferences depend on an unobservable world state and each voter receives a private signal correlated to the true world state. We consider the collective decision when voters can collaborate in a group and have antagonistic preferences---given the revealed world state, voters will support different alternatives. 
We identify sharp thresholds for the fraction of the majority-type voters necessary for preference aggregation.

We specifically examine the majority vote mechanism (where each voter has one vote, and the alternative with more votes wins) and pinpoint a critical threshold, denoted as $\thetaMajority$, for the majority-type proportion. When the fraction of majority-type voters surpasses $\thetaMajority$, there is a symmetric strategy for the majority-type that leads to strategic equilibria favoring informed majority decisions. Conversely, when the majority-type proportion falls below $\thetaMajority$, equilibrium does not exist, rendering the aggregation of informed majority decisions impossible.

Additionally, we propose an easy-to-implement mechanism that establishes a lower threshold $\thetaOurs$ (with $\thetaOurs \leq \thetaMajority$) for both equilibria and informed majority decision aggregation. We demonstrate that $\thetaOurs$ is optimal by proving a general impossibility result: if the majority-type proportion is below $\thetaOurs$, with mild assumptions, no mechanism can aggregate the preferences, meaning that no equilibrium leads to the informed majority decision for any mechanism.

\keywords{Crowdsourcing  \and Social Choice \and Information Aggregation.}
\end{abstract}

\section{Introduction}

Consider a virtual community with $n$ residents, a decision about whether to impose a housing tax is to be made. 
The community residents consist of two groups, 
landlords and tenants, with antagonistic interests with respect to the rental prices.
Landlords favor higher rents, while tenants prefer lower rents. 
The community decides by the votes of the residents and a desired decision is the alternative supported by more than half of the residents after the housing tax effect is revealed.
If the tax effect on the rental prices is known ex-ante,  
the desired decision can easily be made, 
for example in the case where the housing tax lowers the rental price. If more than half of the residents are landlords, they reject the housing tax policy. On the other hand, if more than half of the residents are tenants, they will accept the tax policy.

Such a majority vote mechanism provides each resident with two alternatives (either ``accept'' or ``reject'') and designates the one with more votes as the winner, successfully revealing the preference of the majority. Also, this majority vote mechanism is strategy-proof for this case, where the tax effect is known ex-ante.  

However, the policy effect is uncertain before deployment. In practice, it is usually unclear to the residents whether the housing tax will increase or decrease the rental price. 
Instead, each resident independently has a hunch of the policy effect, which is obtained based on public information and individual experience. 
If residents truthfully report their hunches, by the well-known Condorcet Jury Theorem~\cite{de2014essai}, 
when the majority of agents hold a correct hunch, the policy effect can be correctly aggregated with a high probability
via the majority vote.

The real-life situations are even more complex than the idealized scenario described above.
First, when the majority of agents hold an incorrect hunch, the majority vote mechanism fails to reveal the true policy effect.
Second, as is typical in most social choice settings, voters often have differing preferences.
In two alternative cases, the preferences may be antagonistic, as in our housing tax example.
While the majority group aims to jointly determine the true effect and vote for their preferred alternative, the minority group with opposing preferences may try to prevent this through strategic behaviors.

Examples of antagonistic preferences abound in the real world: salary adjustment, two-party presidential election, the tradeoff between freedom and security, and more.  
Often in these cases, the correlation between alternatives and voters' utilities is not immediately clear: 
the impact of a policy is uncertain before deployment, 
the performance of a president-elect remains unpredictable before the inauguration, 
and the consequences of decisions like Brexit were unforeseen by many who abstained from voting against it.
Given these complex scenarios, is it still possible to identify a good collective decision?

Specifically, we study the two alternative voting games where voters/agents' preferences are endogenous and depend on an unobservable world state. In our context, 
\begin{itemize}
    \item Voters are \emph{partially informed}: they receive signals correlated to the world state, and their preferences are contingent, depending on the unknown state.
    \item Preferences are \emph{antagonistic}: when the state is revealed, some voters prefer one alternative, while the remaining voters prefer the other alternative. This naturally gives a partition of the voters into two types: the majority type and the minority type.
\end{itemize}
The objective is to uncover the \emph{informed majority decision}: the alternative favored by more than half of the agents if the world state is revealed.
In our context, the informed majority decision is the alternative favored by the majority-type voters given the revealed world state.

Our work considers \emph{coalitional behaviors}. The two antagonistic preferences lead to two voter groups. 
Voters in a group have the same preference for alternatives and may naturally form a coalition to cooperate for better benefits. This corresponds to the solution concept of \emph{strong Nash equilibrium}, where a strategy profile forms an equilibrium if any subset of agents have no incentives to deviate. While the Nash equilibrium ensures no \emph{individual} voter can benefit from changing strategy, the \emph{strong} Nash equilibrium guarantees that no \emph{group} of voters can jointly deviate to improve their outcomes.


In a sentence, we study the following social choice problem.
\begin{center}
    \textbf{Can good collective decisions be made for partially informed voters \\who have antagonistic preferences and can form coalitions?}
\end{center}

\subsection{Our Contribution}

The two features of voters, \emph{antagonistic preferences} and \emph{coalitional behaviors}, distinguish our work from previous literature.

\paragraph{\textbf{Antagonistic Preferences vs. Homogeneous Preferences.}}
When voters' preferences are homogeneous rather than antagonistic, previous work has shown that good collective decisions can be achieved across different scenarios.
As mentioned, when most agents have a correct hunch on the world state, the informed majority decision can be successfully aggregated based on the Condorcet Jury Theorem~\cite{Condorcet1785:Essai}.
When the majority may be wrong, the surprisingly popular mechanism proposed by Prelec et al.~\cite{prelec2017solution} can ensure correct decisions. 
These methods assume that agents are non-strategic.
In another perspective, when agents can act strategically, issues of incentives do not fail the information aggregation.
Although truthfully revealing information is not an equilibrium~\cite{austen1996information}, the majority vote mechanism (or its variants) can still lead to the desirable collective decision in many settings, both with and without coalitional behaviors~\cite{feddersen1997voting,wit1998rational, mclennan1998consequences, myerson1998extended, duggan2001bayesian, pesendorfer1996swing,feddersen1998convicting,martinelli2002convergence,gerardi2000jury,meirowitz2002informative,coughlan2000defense, han2023wisdom,austen1996information}. 

In contrast, when voters' preferences are antagonistic, the feasibility of information aggregation and the incentives issue become two distinct challenges. Even if there are some strategy profiles for the community to figure out the true world state, minority-type agents are unlikely to follow such strategies.

\paragraph{\textbf{Coalitional Agents vs. Non-Coalitional Agents.}}
Antagonistic preferences have also been explored in previous studies~\cite{acharya2016information,kim2007swing,bhattacharya2013preference, bhattacharya2023condorcet,ali2018adverse}, 
where agents are assumed to be non-coalitional and the solution concept of Nash equilibrium is applied. 
In contrast, our work considers agents' coalitional behaviors with the strong Nash equilibrium concept.

Voters' reasoning for Nash equilibria and strong Nash equilibria is fundamentally different.
For Nash equilibria, a voter reasons his/her strategy as if (s)he is the ``pivotal voter''.
That is, (s)he assumes the remaining voters' votes on the two alternatives are half-half distributed, which is the only case where his/her vote matters.
Conditioning on the half-half vote distribution of the remaining voters, a voter forms a posterior belief on the world state, which eventually affects his/her strategy reasoning.
This pivotal-voter reasoning inherently disregards possible coalitional behaviors of agents. For strong Nash equilibria, on the other hand, the deviating group size plays an important role in the reasoning, as different group sizes have different powers for influencing the outcome of an election.

The following example illustrates the difference between the two equilibrium concepts.

\begin{example}
    Consider the housing tax example mentioned at the beginning, and suppose there are two possible world states $\{L,H\}$ where $L$ indicates the imposition of housing tax lowers the rent and $H$ means that the housing tax makes the rent higher.
    Each resident votes between acceptance or rejection of the housing tax proposal.
    They do not see the world states, and each of them receives a signal from $\{l,h\}$ which is correlated to the world state in the following way:
    \begin{itemize}
        \item if the actual world state is $H$, a resident receives signal $h$ with probability $0.9$ and receives signal $l$ with probability $0.1$;
        \item if the actual world state is $L$, a resident receives signal $h$ with probability $0.4$ and receives signal $l$ with probability $0.6$.
    \end{itemize}
    For simplicity, suppose all the residents are the landlords, who would like to accept the housing tax proposal if the actual world state is $H$ and would like to reject the proposal otherwise, and suppose the number of the residents/landlords is sufficiently large.

    Consider the voting strategy profile where each resident votes for acceptance if receiving signal $h$ and votes for rejection if receiving signal $l$.
    If the actual world state is $H$, about $90\%$ (which is more than $50\%$) of the residents will vote for acceptance, which leads to a good outcome (the house tax proposal will be accepted) for all the residents.
    If the actual world state is $L$, about $40\%$ (which is less than $50\%$) of the residents will vote for acceptance, and the outcome (the house tax proposal will be rejected) is again good for all the residents.
    Therefore, under this strategy profile, the alternative favored by the residents (i.e., the informed majority decision) wins with high probability.

    However, the above strategy profile is not a Bayes Nash equilibrium:
    if the remaining residents vote following this strategy, a resident's best response is to always vote for rejection due to the ``pivotal voter reasoning''.
    To see this, a voter's vote only matters if the remaining voters' votes over the two alternatives are half-half distributed.
    Given that the remaining voters are voting according to their signals and the votes are half-half distributed, it is much more likely the actual world is $L$ (in which case the expected distribution of the votes is $0.4$ versus $0.6$, which is much closer to half-half compared with $0.9$ versus $0.1$ in the case where the actual world is $H$).
    Therefore, if the remaining voters vote according to their signals, the best response is to always vote for rejection instead of voting according to the signal, and the above-mentioned strategy profile is not a Bayes Nash equilibrium.

    When we are considering coalitional agents, we normally focus on groups of voters with non-negligible sizes and disregard the deviation of a single voter (whose behavior is very unlikely to affect the outcome of the election if the number of voters is large).
    The above-mentioned strategy profile is an (approximate) strong Bayes Nash equilibrium.
    This is because the ``correct'' alternative is output with probability approaching $1$ (when the number of voters approaches infinity) under this strategy profile, and no group of voters has an incentive to deviate (given that each voter's ``happiness'' is almost maximized).

    On the other hand, the strategy profile where all residents deterministically vote for acceptance is clearly a Bayes Nash equilibrium, as no resident's deviation can possibly change the outcome (in this case, a voter can never be ``pivotal'', and the pivotal voter analysis is trivial).
    However, this strategy profile is not a strong Bayes Nash equilibrium when talking about coalitional agents.
    Under this strategy profile, all residents do not favor the election outcome when the actual world is $L$.
    Therefore, the group of all residents can deviate and choose the vote-according-to-signal strategy profile instead, which is more beneficial to all of them. \qed
\end{example}

\paragraph{\textbf{Summary.}}
To summarize, when voters have homogeneous preferences, the informed majority decision can be reached even when considering coalitional agents. On the other hand, for antagonistic preferences, the study of coalitional agents is missing from the previous literature.

Does the majority vote mechanism still work?
If not, are there better mechanisms that aggregate agents' preferences and reach the informed majority decision?
In this work, we investigate the aggregation of antagonistic preferences in the presence of coalitional behaviors, which is fundamentally more challenging.

\paragraph{\textbf{Our Results.}} 
We find the aggregation of the informed majority decision is not always possible. 
Deciding informed majority decision in equilibrium becomes impossible for any mechanism when the fractions of the two types of agents (with antagonistic preferences) are close.
This outcome is notably unexpected because, a simple majority vote mechanism can effectively elicit the informed majority decision when preferences are antagonistic but predetermined preferences, or contingent but aligned~\cite{han2023wisdom}.  
We reveal sharp thresholds for the fractions that make the aggregation of antagonistic preference possible.


(1) We first study voters’ strategic behavior in the majority vote mechanism and identify a sharp required threshold $\thetaMajority$ for the proportion of the majority. 
When the fraction of the majority agents surpasses $\thetaMajority$, we show that all $\epsilon$-strong Bayes Nash equilibria are ``good'', i.e., leading to the informed majority decision with a high probability.
We provide a characterization of the symmetric strategy for the majority-type agents to ensure this kind of equilibrium.
For the other cases where the majority proportion is below the threshold, we reveal the nonexistence of the strong Bayes Nash equilibrium. 

(2) We also identify a sharp required threshold $\thetaOurs$ for all ``reasonable'' mechanisms. When the fraction of the majority-type agents surpasses $\thetaOurs$, we present an easy-to-implement mechanism that ensures all equilibria are ``good''. Specifically, truthful reporting forms an equilibrium in this mechanism. On the other hand, when the majority proportion is below $\thetaOurs$, our general impossibility result shows that, under some mild technical assumptions, no mechanism can lead to ``good'' equilibria.

Our truthful mechanism design is not just a direct application of the revelation principle to the majority vote mechanism. 
Notably, our mechanism has a lower threshold, i.e., $\thetaOurs \le \thetaMajority$, with strict inequality in some cases. We explain the distinctions further in Section~\ref{section: connection}. 

\subsection{Related Work}
The information aggregation approach of the social choice problem can be traced back to the well-known Condorcet Jury Theorem~\cite{Condorcet1785:Essai}, which highlights the efficiency of collective decision-making. According to the theorem, in a large election where each voter is more likely to vote correctly than incorrectly, the majority vote mechanism almost surely aggregates the group's common preference. Following this theorem, a series of works in democratic theory have explored collective wisdom in general models~\cite{grofman1983thirteen, miller1986information, boland1989modelling, owen1989proving, berg1993condorcet, ladha1992condorcet, ladha1993condorcet, nitzan1985collective}. 
Austen-Smith and Banks~\cite{austen1996information} are the first to consider the Condorcet Jury Theorem in a game-theoretic framework, revealing that agents might have incentives to deviate from informative voting. Surprisingly, subsequent studies~\cite{feddersen1997voting, wit1998rational, myerson1998extended, mcmurray2017ideology} find that the Condorcet jury theorem's conclusions still hold in equilibrium. An extensive literature has since discussed voting behavior and the effectiveness of the election outcome in various contexts, considering different information environments~\cite{feddersen1997voting, meirowitz2002informative}, uncertain population size~\cite{myerson1998extended}, voluntary election~\cite{pesendorfer1996swing, mcmurray2013aggregating}, the communication of voters~\cite{coughlan2000defense}, and endogenous information setting where agents exert efforts to obtain information~\cite{persico2004committee} or receive public or private information from a politician (persuader)~\cite{alonso2016persuading, chan2019pivotal}. 

\paragraph{\textbf{Conditions for Full Information Equivalence.}}
The aggregation of the informed majority decision, also known as Full Information Equivalence (FIE), is a continuing focus in the literature. 
A body of work examines the conditions for successful aggregation~\cite{mclennan1998consequences, martinelli2002convergence, chakraborty2003efficient, barelli2022full, han2023wisdom}. Barelli et al.~\cite{barelli2022full} highlight that the complexity of information structure can hinder the feasibility of information aggregation. They provide an if-and-only-if condition for information structures to allow a strategy profile that aggregates voters' interests in multi-state and multi-signal settings. Han et al.~\cite{han2023wisdom} consider group deviation. Their work surprisingly reveals that any strategy profile, being a strong equilibrium, can lead to the informed majority decision when contingent voters have aligned preferences. 
Our work also explores how the information structure affects aggregation, taking both coalitional deviation and adversarial preferences into account.
While Barelli et al. provide conditions for some Nash equilibrium to aggregate completely homogeneous preferences, our work identifies conditions for the existence of \emph{strong} Nash equilibrium that can aggregate conflicting preferences. We demonstrate that less informative signals and the closeness of the proportion between two preference groups pose barriers to information aggregation. 
Other sources of aggregation failure include unanimity rules~\cite{feddersen1998convicting}, alternative voters motivations~\cite{razin2003signaling, callander2008majority}, state-dependent number of of voters~\cite{ekmekci2020manipulated}, and so on.

\paragraph{\textbf{Non-common Value Environments.}}
Most studies consider information aggregation with either completely homogeneous preferences~\cite{duggan2001bayesian} or preferences retaining the common values feature~\cite{gerardi2000jury}. However, there are also many works addressing non-common value settings. Acharya considers redistribution in a social mobility model with high and low income voters~\cite{acharya2016information}. Kim and Fey study voluntary election with adversarial voter preferences~\cite{kim2007swing}. Bhattacharya extends the Condorcet Jury Theorem for more general preferences, where each voter's utility is independently drawn from some distribution~\cite{bhattacharya2013preference, bhattacharya2023condorcet}. Ali, Mihm, and Siga study cases where agents' payoffs are negatively correlated~\cite{ali2018adverse}. 
All these works assume agents are non-coalitional and study properties of Nash equilibrium, while ours considers coalitional agents and the strong Nash equilibrium concept. Our preference setting is similar to Bhattacharya's work~\cite{bhattacharya2013preference}. This work offers an if-and-only-if condition on the distribution of voters' utilities to ensure that any symmetric Nash equilibrium makes the informed majority decision prevail.

\paragraph{\textbf{Additional Related Work.}}
Other studies aggregate preferences and information using different techniques. Recent research adopts the “surprisingly popular” answer to aggregate information~\cite{prelec2004bayesian, Hosseini2021, schoenebeck2021wisdom}. Schoenebeck and Tao's mechanism elicits voters' information and preferences using a simple questionnaire and makes a binary decision that aligns with the informed majority decision with a high probability~\cite{schoenebeck2021wisdom}. Prediction markets are also used to aggregate information for decision-making~\cite{hanson1999decision, othman2010decision, monroe2024public}.

\section{Model}
There are two possible world states, $\Omega = \{\stateL, \stateH\}$, where $\stateL$ indicates the imposition of housing tax lowers the rent and $\stateH$ means that the housing tax makes the rent higher. 

Residents are required to vote for one of two alternatives, either accept($\accept$) or reject($\reject$) the housing tax proposal. 
Residents do not know the true world state. 
Instead, each of them has a private judgment, referred to as a random signal. 
This signal, denoted by a random variable $S_i$, takes value from $\mathcal{S} = \{\signalL, \signalH\}$. The signal value $S_i=\signalL$ indicates that resident $i$ believes the world is in state $\stateL$, and similarly, $S_i=\signalH$ implies the belief that state $\stateH$ happens.
We assume the signals that residents receive are identically and independently generated, following a distribution that depends on the actual world state $\omega\in\Omega$. All residents share a common knowledge of the joint distribution between world state and signals.

To streamline the expression, we use $P_{\signalL}^{\stateL}$ to denote the conditional probability $\Pr[S_i = \signalL \mid \omega = \stateL]$, which is the probability that an agent receives signal $\signalL$ given that the world is in state $\stateL$.
Analogously, we define $P_{\signalL}^\stateH$, $P_{\signalH}^\stateL$, and $P_{\signalH}^\stateH$.
By the law of total probability, we have $P_\signalL^\stateH+P_\signalH^\stateH=1$ and $P_\signalL^\stateL+P_\signalH^\stateL=1$.
Lastly, we use $\mu$ to denote the common prior probability that the world is in state $\stateH$.
Correspondingly, $1-\mu$ is the probability that the true world state is $\stateL$.

It is important to note that signals may exhibit asymmetrical favorability 
(one signal, for example, signal $\signalL$, may have a higher probability than another signal in both world states, i.e.,
both $P_\signalL^\stateL > P_\signalH^\stateL$ and  $P_\signalL^\stateH > P_\signalH^\stateH$ hold).
This implies the possibility that most people make mistakes (i.e., the majority of received signals can be inconsistent with the true world state). 
In this work, we consider such biased signals and assume signals are positively correlated with the world states: 
$$P_\signalL^\stateL > P_\signalL^\stateH \text{~and~}P_\signalH^\stateH > P_\signalH^\stateL.$$ 
Let $\Delta = P_\signalL^\stateL- P_\signalL^\stateH = P_\signalH^\stateH - P_\signalH^\stateL$ be the difference in signal frequencies across two world states. Our assumption is equivalent to $\Delta > 0$, implying a general tendency for the signals to align with the true world state.

Each resident $i$ has a utility function $v_i: \Omega \times \{\mathbf{A}, \mathbf{R}\} \to \{0, 1, \ldots, B\}$,  which assigns a payoff value based on the world state and the chosen alternative.\footnote{In many social choice settings, only ordinal preferences are defined for the voters. However, in our setting, cardinal preferences for the voters need to be defined in order to define the equilibrium concepts.} 
In our model, we consider \emph{contingent} agents, whose preferred alternative varies across states. There are two types of agents, i.e., $\mathcal{T} = \{\typeL, \typeH\}$:
\begin{itemize}
    \item An type-$\typeL$ agent (analogous to a tenant in the example) prefers alternative $\accept$ in state $\stateL$ and prefers alternative $\reject$ in state $\stateH$. Formally, $$v_i(\stateL, \accept) > v_i(\stateL, \reject)\text{ and }v_i(\stateH, \reject) > v_i(\stateH, \accept)\text{ for }i\text{ such that }t_i = \typeL.$$
    \item An type-$\typeH$ agent (analogous to a landlord) prefers alternative $\reject$ in state $\stateL$ and prefers $\accept$ in state $\stateH$. Formally, $$v_i(\stateL, \reject) > v_i(\stateL, \accept)\text{ and }v_i(\stateH, \accept) > v_i(\stateH, \reject)\text{ for }i\text{ such that }t_i = \typeH.$$ 
\end{itemize}
As it is standard in the Bayesian game setting, the types of agents and thus their utility functions are private information.

Let $\alpha_\typeL$ denote the approximate fraction of type-$\typeL$ agents, $\alpha_\typeH$ denote that of type-$\typeH$. Since there are only two types of agents, $\alpha_\typeL + \alpha_\typeH = 1$.
Let $\alpha = \max\{\alpha_\typeL, \alpha_\typeH\}$ denote the proportion of the majority-type in the population. We consider cases with condition $\alpha > \frac12$, indicating a clear majority. Formally, given the number of agents $n$, the number of majority-type agents is $\lfloor \alpha\cdot n \rfloor$.

An \emph{instance} of our model includes: 
\begin{itemize}
\item the number of agents, $n$;
\item the common prior of the world state, $(\mu, 1 - \mu)$, where $\mu$ is the probability of state $\stateH$;
\item the signal distributions, $(P_\signalH^\stateH, P_\signalL^\stateH)$ and $(P_\signalH^\stateL, P_\signalL^\stateL)$, where $P_s^\stateH$ and $P_s^\stateL$ denote the conditional probabilities of receiving signal $s \in \mathcal{S}$ in two possible states;
\item the utility functions of agents, $\{v_i\}_{i = 1}^n$ (assumed to be private);
\item and the fraction of each type, $(\alpha_\typeL, \alpha_\typeH)$ (consistent with the utilities).
\end{itemize}

We define a \emph{configuration} of our model as parameters $\{(\alpha_\typeL, \alpha_\typeH), (\mu, P_\signalH^\stateH, P_\signalH^\stateL)\}$. This configuration refers to all instances equipped with the same parameters, where the number of agents $n$ can be arbitrarily large, and the utility functions can be diverse but adhere to the type proportions $(\alpha_\typeH, \alpha_\typeL)$.

Our analyses will delve into agents' voting behaviors and the resulting outcome under different configurations. 
It may be helpful to think of $n \to \infty$ at first and the results for finite $n$ are provided in the appendix.


\subsection{Mechanism, Strategy, and Equilibrium} 
In this section, we will elaborate on the concepts of mechanisms, strategies, and equilibria, which are defined in the standard manner as in game theory.

\paragraph{\textbf{Mechanism.}}
A \emph{mechanism} $\mechanism: \mathcal{R}^n \to \Delta(\{\accept, \reject\})$ is a decision-making process that outputs a distribution over alternatives based on the report profile $(r_1, r_2, \ldots, r_n)\in \mathcal{R}^n$ of voters. The \emph{report space} $\mathcal{R}$ is specified by the questions posed to agents, which can be about their received signals, votes on alternatives, preferences in certain states, or beliefs about others' signals. The mechanism only sees agents' reports and does not know any component of the instance. 


\paragraph{\textbf{Strategy.}}
A (mix) \emph{strategy} for an agent is a function $\sigma: \mathcal{T}\times\mathcal{S} \to \Delta(\mathcal{R})$ that maps the type and the signal received by the agent to a distribution on possible reports. A strategy is said to be \emph{truthful} if it honestly answers all questions based on the agent's knowledge and signal.
When the type of the agents is clear from the context, we slightly abuse the notation and describe a strategy by the function $\sigma:\mathcal{S}\to\Delta(\mathcal{R})$ with the first function input (i.e., the type) omitted.

\paragraph{\textbf{Expected Utility Function.}}
Given a strategy profile $\Sigma = (\sigma_1, \sigma_2, \ldots, \sigma_n)$, we denote the expected utility of agent $i$ under mechanism $\mechanism$ as $u_i^{\mathcal{M}}(\Sigma)$, where the expectation is taken over the sampling of agents’ signals (with agents' strategies,  it decides the report distributions) and the mechanism's outcome. 
The expected utility is concerned with \emph{ex-ante} outcomes, as opposed to the \emph{ex-post} utility $v_i$, which is considered after the world state is realized. In this paper, we focus on \emph{ex-ante} utilities when talking about any equilibrium
solution concept.

\paragraph{\textbf{Equilibrium.}}
In a social choice setting with a large number of agents, it is often the case that any single agent's deviation has a negligible impact on the overall outcome. Therefore, rather than considering the typical Bayes Nash equilibrium, we consider a much stronger concept: the strong Bayes Nash equilibrium, where the collective deviations of subsets of agents are taken into account.

\begin{definition}[$\epsilon$-Strong Bayes Nash Equilibrium]\label{def:BNE}
    A strategy profile $\Sigma = (\sigma_1, \ldots, \sigma_n)$ is an $\epsilon$-strong Bayes Nash equilibrium if no subset of agents $D$ and alternative profile $\Sigma' = (\sigma'_1, \ldots, \sigma'_n)$ exist such that
    \begin{enumerate}
        \item $\sigma_i = \sigma_i'$ for each $i \not \in D$,
        \item $u_i(\Sigma') \ge u_i(\Sigma)$ for each $i \in D$,
        \item there exist $i \in D$ such that $u_i(\Sigma') > u_i(\Sigma) + \epsilon$.
    \end{enumerate}
\end{definition}

\subsection{Objective}
The objective of our work is to unearth the \emph{informed majority decision}, which is the decision favored by the majority if the true world state $\omega$ were known. 
For example, when type-$\typeH$ is the majority type, 
the informed majority decision is $\accept$ if the state is $\stateH$ and $\reject$ otherwise. 
Conversely, when the majority type is type-$\typeL$, the informed majority decision is $\accept$ in state $\stateL$ and $\reject$ otherwise.

We identify the necessary and sufficient conditions on parameters for ensuring the informed majority decision in equilibrium, through both the equilibrium analysis of the majority vote mechanism and the mechanism design view.




\subsection{Majority Vote Mechanism}\label{section: Majority Vote Intro}

The \emph{majority vote mechanism} is a common and natural approach to generating collective decisions. In the majority vote, each voter $i$ votes for either $\accept$ or $\reject$ (the report space $\mathcal{R}=\{\accept,\reject\}$). The alternative supported by the majority will be the output. In the case of a tie, where exactly half of the agents vote for $\accept$, both the alternatives have an equal chance of being selected.

In this context, an agent's (mix) strategy is a mapping from the received signal to a distribution over $\{\accept, \reject\}$. 
We denote such strategy as a vector $(\beta_\signalL, \beta_\signalH)$, where $0 \le \beta_s \le 1$ corresponds to the probability of voting $\accept$ when receiving signal $s \in \mathcal{S}$.

We use alternative notation $(\delta_\signalL, \delta_\signalH)$ to denote the strategy $(\beta_\signalL,\beta_\signalH)$, where $\delta_s \in [-\frac{1}{2}, \frac{1}{2}]$ indicates the deviation of $\beta_s$ from $\frac{1}{2}$. The deviation direction aligns with the possible shift in the preferred alternative of type-$\typeH$ agents when receiving signal $s$. That is, we have 
\[\beta_\signalL = \frac12 - \delta_\signalL\text{ and }\beta_\signalH = \frac12 + \delta_\signalH.\]
We are following using $(\delta_\signalL, \delta_\signalH)$ rather than $(\beta_\signalL, \beta_\signalH)$ to denote strategy for technical simplicity.


Under the majority vote mechanism, it is not clear what strategies are considered truthful: what does it mean by saying that ``reporting $\accept$/$\reject$ truthfully reflects an agent's preference when receiving a signal $\signalL$/$\signalH$''?
Below we provide some candidates for the ``truthful'' strategy in the majority vote to familiarize readers with our context.

A natural strategy that can be considered as ``truthful'' is to report the alternative that exactly corresponds to the signal.
For example, for a type-$\typeH$ agent, this strategy is $(\beta_\signalL,\beta_\signalH)=(0,1)$, i.e., the agent votes for $\accept$ (with probability $1$) when receiving signal $\signalH$ and $\reject$ (with probability $1$) when receiving signal $\signalL$.
This is called \emph{the informative strategy}. In our alternative notation, it can be represented as $(\delta_\signalL, \delta_\signalH) = (\frac12,\frac12)$.
The informative strategy profile does not always lead to the informed majority decision, even when all the agents have the same type, say, type-$\typeH$.
As we have mentioned, the signals may be biased (e.g., it is possible that both $P_\signalH^\stateH$ and $P_\signalH^\stateL$ are more than $\frac12$), voting according to such signals clearly may lead to the wrong outcome.

Another natural ``truthful'' strategy is to cast a vote that maximizes the agent's expected utility, assuming (s)he is the only voter who decides the outcome. An agent, knowing the parameters $\mu, P_\signalH^\stateH$, and $P_\signalH^\stateL$, can update his/her belief of the true world state upon the received private signal in a Bayesian way. Then, (s)he votes for the alternative that maximizes his/her expected utility.
This strategy is called \emph{the sincere strategy}\footnote{The sincere strategy is asymmetric. Voters with the same type and the same received signal may vote differently due to the difference in utility functions.}. 

Both of these ``truthful'' strategy profiles do not lead to the informed majority decision even when all agents have the same type~\cite{han2023wisdom}.
In fact, in order to achieve the informed majority decision, agents need to be more sophisticated than being truthful. Han et al. \cite{han2023wisdom} show that, with common value environment and an infinite number of agents, \emph{a strategy profile leads to the informed majority decision if and only if it is a strong Bayes Nash equilibrium}.
Neither the informative strategy profile nor the sincere strategy profile leads to the informed majority decision, so neither of them is a strong Bayes Nash equilibrium. 
Their work also characterizes the strategy profile that leads to the informed majority decision (or, is a strong Bayes Nash equilibrium) for single-type voters.
We briefly go through these, as many intuitions will be useful to the more general case with two types of agents studied in this paper.

We assume all the agents are of type $\typeH$. According to Han et al.~\cite{han2023wisdom}, the symmetric strategy profile $\{(\delta_\signalL,\delta_\signalH)\}$ aggregates the informed majority decision if it satisfies the following two inequalities: 
\begin{equation}\label{eqn:onetypestrategy}
   \Delta_\accept^\stateH(\delta_\signalL, \delta_\signalH) := P_\signalH^\stateH \cdot\delta_\signalH - P_\signalL^\stateH\cdot\delta_\signalL > 0\text{ and }
    \Delta_\accept^\stateL(\delta_\signalL, \delta_\signalH) := P_\signalH^\stateL \cdot\delta_\signalH - P_\signalL^\stateL\cdot\delta_\signalL < 0.   
\end{equation}

In these inequalities, $\Delta_\accept^\stateH(\delta_\signalL, \delta_\signalH)$ is the expected increment in $\accept$'s vote share in state $\stateH$, compared to the neutral point $\frac12$. The amount $\Delta_\accept^\stateL(\delta_\signalL, \delta_\signalH)$ captures the same for state $\stateL$.

Why does this work? Notice that the difference in expected vote share across two states is $\Delta_\accept^\stateH(\delta_\signalL, \delta_\signalH) -\Delta_\accept^\stateL(\delta_\signalL, \delta_\signalH) = \Delta\cdot (\delta_h + \delta_l)$ (recall that $\Delta=P_\signalH^\stateH-P_\signalH^\stateL = P_\signalL^\stateL - P_\signalL^\stateH$ is the signal frequency difference). The difference in signal frequency, $\Delta$, is utilized to produce the vote share difference, and particularly, the opposite sign between $\Delta_\accept^\stateH(\delta_\signalL, \delta_\signalH)$ and $\Delta_\accept^\stateL(\delta_\signalL, \delta_\signalH)$. 
Such opposite directions ensure the informed majority decision is supported by more than half of the agents in both states. 
In details, given the strategy profile $\{(\delta_\signalL, \delta_\signalH)\}$, the expected vote share of $\accept$ in state $\stateH$ and the vote share of $\reject$ in state $\stateL$ (denoted by $p_\accept^\stateH(\delta_\signalL, \delta_\signalH)$ and $p_\reject^\stateL(\delta_\signalL, \delta_\signalH)$ respectively) can be calculated as
\begin{equation}\label{eqn:onetypep}
 p_\accept^\stateH(\delta_\signalL, \delta_\signalH)
        = \frac12 + \Delta_\accept^\stateH(\delta_\signalL, \delta_\signalH) \text{ and } p_\reject^\stateL(\delta_\signalL, \delta_\signalH)
        = \frac12 -\Delta_\accept^\stateL(\delta_\signalL, \delta_\signalH).
\end{equation}

Therefore, \eqref{eqn:onetypestrategy} implies that the expected vote share of the informed majority decision is more than one half.
By the Law of Large Numbers, the exact vote share concentrates on the expected share as $n\rightarrow\infty$, the informed majority decision is aggregated with a probability approaching $1$. Such a strategy profile is also a strong Bayes Nash equilibrium for common-interest agents.

The above reasoning holds if all agents have the same type $\mathcal{H}$.
The setting with two antagonistic types studied in this paper is much more complex.


\section{Equilibrium Analysis under Majority Vote}

In this section, we study the majority vote mechanism with two types of voters. We identify a sharp threshold, $\thetaMajority$, for the majority proportion. This threshold determines when the majority vote mechanism can consistently aggregate the informed majority decision.

\begin{theorem}\label{thm:MAJ_main}
In the majority vote mechanism, a critical threshold exists for the majority proportion, denoted as $\thetaMajority = \frac1{2M}$ where 
\begin{equation*}
M = \begin{cases}
\frac{P_\signalL^\stateL}{P_\signalL^\stateL + P_\signalL^\stateH}, & \text{if } P_\signalH^\stateL + P_\signalH^\stateH \le 1,  \\
\frac{P_\signalH^\stateH}{P_\signalH^\stateL + P_\signalH^\stateH}, & \text{otherwise}.
\end{cases}
\end{equation*}

For this threshold, there exist functions $\epsilon(n)$, $p(n)$, and $\gamma(n)$ all dependent on the number of agents $n$, such that as $n \to \infty$, $\epsilon(n) \to 0$, $p(n) \to 1$ and $\gamma(n)$ does not approach to $0$. For these functions, the following two statements hold:
\begin{itemize}
    \item If the majority proportion $\alpha$ exceeds this threshold $\thetaMajority$, i.e., $\alpha > \thetaMajority$, an $\epsilon(n)$-strong Bayes Nash equilibrium exists. Furthermore, any such equilibrium leads to the informed majority decision with a probability of at least $p(n)$.
    \item Conversely, if the majority proportion is below this threshold, i.e., $\alpha \le \thetaMajority$, there is no $\gamma(n)$-strong Bayes Nash equilibrium. 
\end{itemize}
\end{theorem}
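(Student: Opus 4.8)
The plan is to reduce the whole problem to a contest over expected vote shares. Assume without loss of generality that type-$\typeH$ is the majority (the type-$\typeL$ case is symmetric under the relabelling $\stateH\leftrightarrow\stateL$, $\signalH\leftrightarrow\signalL$, $\accept\leftrightarrow\reject$, which also swaps the two cases defining $M$). If the majority plays the symmetric strategy $(\delta_\signalL^\typeH,\delta_\signalH^\typeH)$ and the minority plays $(\delta_\signalL^\typeL,\delta_\signalH^\typeL)$, then the expected $\accept$-share in state $\stateH$ equals $\tfrac12+\alpha\,\Delta_\accept^\stateH(\delta_\signalL^\typeH,\delta_\signalH^\typeH)+(1-\alpha)\,\Delta_\accept^\stateH(\delta_\signalL^\typeL,\delta_\signalH^\typeL)$, and analogously in $\stateL$ with $\Delta_\accept^\stateL$. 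By the Law of Large Numbers, made quantitative by a Chernoff/Berry--Esseen bound to produce the functions $p(n),\epsilon(n),\gamma(n)$, the winner in each state is determined by the sign of its expected margin whenever that margin is bounded away from $0$, and each agent's expected utility becomes, up to $o(1)$, additively separable across the two states with weights $\mu,1-\mu$. Thus the majority wants the margins to realize $(\accept,\reject)$ (the informed majority decision), while the minority wants to flip at least one state.

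The threshold is a max-min value. Over majority strategies, $(\Delta_\accept^\stateH,\Delta_\accept^\stateL)=(P_\signalH^\stateH c-P_\signalL^\stateH d,\,P_\signalH^\stateL c-P_\signalL^\stateL d)$ with $(c,d)=(\delta_\signalH^\typeH,\delta_\signalL^\typeH)\in[-\tfrac12,\tfrac12]^2$ sweeps a parallelogram, and I would compute $\max\min(\Delta_\accept^\stateH,-\Delta_\accept^\stateL)$. Using the determinant identity $P_\signalL^\stateL P_\signalH^\stateH-P_\signalL^\stateH P_\signalH^\stateL=\Delta$, the optimum sits at $c=\tfrac12$ where $\Delta_\accept^\stateH=-\Delta_\accept^\stateL$, with common value $\tfrac{\Delta}{2(P_\signalL^\stateL+P_\signalL^\stateH)}$ when $P_\signalH^\stateH+P_\signalH^\stateL\le1$ (and the $\signalH$-symmetric expression otherwise, so state $\stateL$ becomes the binding one). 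A single minority strategy, in contrast, can shift the $\accept$-margin of any \emph{one} state by at most $\tfrac{1-\alpha}{2}$ in the adversarial direction — its extreme sabotages are ``always $\reject$'' and ``always $\accept$'', each of which actually helps the majority in the other state. Setting the majority's best guaranteed margin equal to the minority's maximal sabotage, $\alpha\cdot\tfrac{\Delta}{2(P_\signalL^\stateL+P_\signalL^\stateH)}=\tfrac{1-\alpha}{2}$, and simplifying via $\Delta=P_\signalL^\stateL-P_\signalL^\stateH$, gives exactly $\alpha=\tfrac{P_\signalL^\stateL+P_\signalL^\stateH}{2P_\signalL^\stateL}=\tfrac1{2M}=\thetaMajority$.

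For the first bullet ($\alpha>\thetaMajority$) I let the majority play the max-min-optimal strategy, so that $\alpha\min(\Delta_\accept^\stateH,-\Delta_\accept^\stateL)>\tfrac{1-\alpha}{2}$; then for \emph{every} minority strategy both states realize the informed majority decision with margin bounded away from $0$, hence with probability at least $p(n)\to1$. This profile is an $\epsilon(n)$-strong Bayes Nash equilibrium: each majority agent is within $o(1)$ of its maximal utility, and no minority or mixed coalition can change the outcome, so no deviation helps anyone by more than $\epsilon(n)$. The same robust strategy shows every equilibrium is good: if some $\epsilon(n)$-strong equilibrium produced a wrong winner in either state with non-vanishing probability, the grand majority coalition could deviate to it and strictly raise every majority member's utility by a state-independent constant, exceeding $\epsilon(n)$ for large $n$ — a contradiction.

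For the second bullet ($\alpha\le\thetaMajority$) I would show every profile admits a coalition deviation improving some member by a fixed constant $>\gamma(n)$, ruling out equilibria, by casing on the limiting winner pair. If it is $(\accept,\reject)$, then $\alpha\le\thetaMajority$ forces $\min(\Delta_\accept^\stateH,-\Delta_\accept^\stateL)\le\tfrac{1-\alpha}{2\alpha}$, so the minority can drive the weaker state's margin to at most $0$; even a resulting tie strictly raises every (identically-preferring) minority member's utility by a constant. If it is $(\reject,\accept)$, the symmetric move works for the majority: being of size $\alpha>\tfrac12$ it can flip the weaker state into its favor while the other stays wrong, a strict gain. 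The main obstacle is the two \emph{uninformative} outcomes $(\accept,\accept)$ and $(\reject,\reject)$, where exactly one state is wrong for each side and each coalition's extreme flip collaterally flips the other state, merely trading one state for the other with ambiguous net utility. To close these I plan to exploit that the ``correct'' state is then strictly over-secured — e.g.\ in $(\accept,\accept)$ the minority's inability to flip $\stateH$ forces $\alpha\,\Delta_\accept^\stateH\ge\tfrac{1-\alpha}{2}$ — giving the unhappy coalition enough slack to fix the wrong state without losing the over-secured one; here the fact that the minority's reachable pair is itself a parallelogram whose coupling (the extreme $\Delta_\accept^\stateH=-\tfrac12$ forces $\Delta_\accept^\stateL=-\tfrac12$, not $+\tfrac12$) rules out the worst corner and guarantees the needed reachable point. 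This geometric comparison of the two parallelograms, together with choosing a deviation that strictly improves \emph{all} members of a heterogeneous coalition and the tie-handling bookkeeping for $\gamma(n)$, is the delicate part; the threshold computation and the other cases are routine by comparison.
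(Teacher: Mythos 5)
Your threshold computation, the characterization of the majority's max--min strategy (equalized margins in the two states, value $M-\tfrac12$), and the positive direction for $\alpha>\thetaMajority$ essentially match the paper's argument. One minor under-justification there: you dismiss mixed coalitions with ``no minority or mixed coalition can change the outcome.'' A mixed coalition containing enough majority members \emph{can} change the outcome; what closes this case is a quantitative ex-ante accounting showing that any outcome shift that does not strictly hurt a majority member can benefit a minority member by at most $O(B^2 e^{-cn})$. This is not automatic, because ex-post antagonism does not imply ex-ante antagonism --- the paper exhibits a profile change that both types strictly prefer ex ante when their utility scales in the two states differ --- so the step needs the bound, not just the observation that majority members are already near-optimal.

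The substantive gap is in the second bullet. You organize the nonexistence proof as a case analysis on the limiting winner pair and yourself flag the uninformative cases $(\accept,\accept)$ and $(\reject,\reject)$ (plus ties and non-degenerate outcome probabilities) as unresolved. The inference you propose there --- that in $(\accept,\accept)$ ``the minority's inability to flip $\stateH$ forces $\alpha\,\Delta_\accept^\stateH\ge\tfrac{1-\alpha}{2}$'' --- does not follow from the equilibrium hypothesis: the minority may well be able to flip $\stateH$, only at the cost of also flipping $\stateL$, and whether that trade is ex-ante profitable for every coalition member depends on cardinal utilities, so equilibrium gives you no such bound. The idea that closes all of these cases at once is a mirroring counter-strategy for the majority: a $\tfrac{1-\alpha}{\alpha}$ fraction of the majority plays the pointwise negation $(-\delta_\signalL,-\delta_\signalH)$ of the minority's aggregate strategy, so that this sub-group together with the minority contributes an expected $\accept$-share of exactly $\tfrac12$ of their combined count in both states, while the remaining $(2\alpha-1)n$ majority agents play the optimal strategy; the overall share of the informed majority decision is then $\tfrac12+(2\alpha-1)(M-\tfrac12)>\tfrac12$ in both states, for \emph{every} $\alpha>\tfrac12$, not just $\alpha>\tfrac1{2M}$. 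This shows directly that any (approximate) strong equilibrium must output the informed majority decision with probability tending to $1$ --- no four-way case split, no parallelogram geometry, no tie bookkeeping --- after which your own $(\accept,\reject)$ case (the minority deterministically voting to drive the weaker state's margin to at most $0$, using $\alpha M\le\tfrac12$) yields the contradiction. Without the mirroring construction, your plan for the uninformative cases is a sketch, not a proof.
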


The remaining part of this section is devoted to proving Theorem~\ref{thm:MAJ_main}.
In Section~\ref{sec:MAJ_optimal_strategy}, we characterize the best strategy for the majority-type agents that is most robust against the possible adversarial actions from the minority-type agents.
In Section~\ref{sec:MAJ_infinite_n}, we prove Theorem~\ref{thm:MAJ_main} assuming $n\rightarrow\infty$, in which case $\epsilon(n)$ and $p(n)$ in the theorem become just $0$ and $1$ respectively.
The assumption $n\rightarrow\infty$ simplifies the proof by avoiding many technical details, and the proof with $n\rightarrow\infty$ already contains many intuitions behind the theorem.
In Section~\ref{sec:MAJ_finite_n}, we prove the original theorem with finite $n$.

\subsection{Optimal Strategy for Majority-type Agents}
\label{sec:MAJ_optimal_strategy}
Without loss of generality, we assume type-$\typeH$ agents are the majority. 

If the majority type-$\typeH$ agents use a strategy that satisfies inequalities in \eqref{eqn:onetypestrategy}, then a natural choice for the type-$\typeL$ agents is a strategy that reverses the directions of the two inequalities in \eqref{eqn:onetypestrategy}. Both types of agents would like to choose strategies that maximize the ``margins'' of vote shares, which are $\Delta_\accept^\stateH(\delta_\signalL, \delta_\signalH)$ and $\Delta_\accept^\stateL(\delta_\signalL, \delta_\signalH)$, to guarantee their favored alternatives.


However, the minority type-$\typeL$ agents are less powerful. They will lose completely if they play in this way, as the margins they create are smaller than the margins created by the majority type-$\typeH$ agents.
To avoid a crushing defeat, type-$\typeL$ agents need to choose a more aggressive strategy---vote for $\accept$ or $\reject$ with probability $1$---to hopefully win in one world state. 
The majority type-$\typeH$ agents, on the other hand, would select $\delta_\signalL$ and $\delta_\signalH$ to maximize both margins $\Delta_\accept^\stateH(\delta_\signalL, \delta_\signalH)$ and $\Delta_\accept^\stateL(\delta_\signalL, \delta_\signalH)$ for a greater chance of victory.

We refer to the symmetric strategy $(\delta_\signalL^\ast,\delta_\signalH^\ast)$ that maximizes the margins for the majority type-$\typeH$ agents as \emph{the optimal strategy}. The following lemma characterizes this strategy.

\begin{lemma}[Characterization of Optimal Strategy]
\label{lem: optimal strategy}
    The strategy $(\delta^*_\signalL,\delta^*_\signalH)$ that maximizes the function $P(\cdot,\cdot)$ is 
\begin{equation}\label{eqn:optimalstrategy}
\left\{
\begin{array}{lcl}
\delta_\signalL^* = \frac{1}{2}\cdot \frac{P_\signalH^\stateL + P_\signalH^\stateH}{P_\signalL^\stateL + P_\signalL^\stateH},~\delta_\signalH^* =\frac{1}{2}  & & {\text{if~} \frac{P_\signalH^\stateL + P_\signalH^\stateH}{2}\le \frac{1}{2},} \\
\delta_\signalL^* = \frac{1}{2},~\delta_\signalH^* =\frac{1}{2}\cdot \frac{P_\signalL^\stateL + P_\signalL^\stateH}{P_\signalH^\stateL + P_\signalH^\stateH} & & {\text{otherwise.}}
\end{array}
\right.
\end{equation}
where $P(\delta_\signalL,\delta_\signalH)=\min\{p_\accept^\stateH(\delta_\signalL,\delta_\signalH),p_\reject^\stateL(\delta_\signalL,\delta_\signalH)\}$ for $p_\accept^\stateH(\delta_\signalL,\delta_\signalH)$ and $p_\reject^\stateL(\delta_\signalL,\delta_\signalH)$ defined in \eqref{eqn:onetypep}.
\end{lemma}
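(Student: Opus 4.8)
The plan is to treat this as a concave maximization of the piecewise-linear objective $P(\delta_\signalL,\delta_\signalH)=\min\{p_\accept^\stateH,p_\reject^\stateL\}$ over the box $[-\frac12,\frac12]^2$. Since $p_\accept^\stateH=\frac12+\Delta_\accept^\stateH$ and $p_\reject^\stateL=\frac12-\Delta_\accept^\stateL$, maximizing $P$ is the same as maximizing $\min\{A,B\}$ where $A:=\Delta_\accept^\stateH=P_\signalH^\stateH\delta_\signalH-P_\signalL^\stateH\delta_\signalL$ and $B:=-\Delta_\accept^\stateL=P_\signalL^\stateL\delta_\signalL-P_\signalH^\stateL\delta_\signalH$. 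Both are linear, so the objective is concave. Before starting I would record two facts that drive the argument: the identity $A+B=\Delta(\delta_\signalL+\delta_\signalH)$, which shows both coordinates ``want'' to be as large as possible, and the positive-correlation inequality $P_\signalL^\stateL P_\signalH^\stateH>P_\signalL^\stateH P_\signalH^\stateL$ (equivalent to $\Delta>0$), which is the engine of the boundary step.

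First I would show that any maximizer has $\delta_\signalL=\frac12$ or $\delta_\signalH=\frac12$, via a local-improvement argument at a point with both coordinates strictly below $\frac12$. If $A<B$, increasing $\delta_\signalH$ slightly raises $A$ (coefficient $P_\signalH^\stateH>0$) while preserving $A<B$, strictly increasing the min; symmetrically, if $B<A$ one increases $\delta_\signalL$. If $A=B$, the positive-correlation inequality guarantees a feasible ascent direction with both components positive---for instance $(d_L,d_H)=(P_\signalH^\stateH+P_\signalH^\stateL,\,P_\signalL^\stateL+P_\signalL^\stateH)$, for which both $\nabla A\cdot(d_L,d_H)$ and $\nabla B\cdot(d_L,d_H)$ equal $P_\signalL^\stateL P_\signalH^\stateH-P_\signalL^\stateH P_\signalH^\stateL>0$---so again the min strictly increases. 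Hence no point interior in both coordinates can be optimal.

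Next, fixing the active boundary (say $\delta_\signalH=\frac12$; the case $\delta_\signalL=\frac12$ is symmetric), I reduce to a one-dimensional problem. Along this segment $A$ is strictly decreasing and $B$ strictly increasing in $\delta_\signalL$, so $\min\{A,B\}$ is unimodal and maximized at the crossing $A=B$ whenever it lies in $[-\frac12,\frac12]$. Solving $A=B$ gives $\delta_\signalL=\frac12\cdot\frac{P_\signalH^\stateL+P_\signalH^\stateH}{P_\signalL^\stateL+P_\signalL^\stateH}$, which is feasible ($\le\frac12$) exactly when $P_\signalH^\stateL+P_\signalH^\stateH\le1$; the symmetric computation on $\delta_\signalL=\frac12$ yields $\delta_\signalH=\frac12\cdot\frac{P_\signalL^\stateL+P_\signalL^\stateH}{P_\signalH^\stateL+P_\signalH^\stateH}$, feasible exactly when $P_\signalH^\stateL+P_\signalH^\stateH\ge1$. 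These two ranges are complementary, matching the two cases of \eqref{eqn:optimalstrategy}.

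Finally I would tie the cases together: the corner $(\frac12,\frac12)$ lies on both boundary segments, and the feasible crossing maximizes the min over its own segment (which contains that corner), so the crossing value dominates the corner value; on the segment whose crossing is infeasible, the best attainable point is precisely that corner. Hence the feasible crossing is the global maximizer, giving exactly \eqref{eqn:optimalstrategy}. The hard part will be the first step---ruling out interior optima---and specifically exhibiting the simultaneous-ascent direction at a tie $A=B$, since that is where the two linear pieces compete and where the positive-correlation assumption is indispensable; the remaining steps are routine single-variable optimization and a feasibility case split.
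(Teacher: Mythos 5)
Your proof is correct. It establishes the same two structural facts as the paper --- that at the optimum the two shares are equal ($p_\accept^\stateH = p_\reject^\stateL$, your $A=B$) and that one coordinate saturates at $\tfrac12$ --- but in the reverse order. The paper first proves the equal-shares condition as a standalone proposition (by perturbing a single coordinate \emph{downward} when $A\neq B$), then observes that $A=B$ fixes the ratio $\delta_\signalH/\delta_\signalL = (P_\signalL^\stateL+P_\signalL^\stateH)/(P_\signalH^\stateL+P_\signalH^\stateH)$ and that the common value is increasing along that ray, so one scales up until a coordinate hits $\tfrac12$. You instead first rule out points with both coordinates below $\tfrac12$ and then do one-dimensional optimization on each boundary segment. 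The two arguments are nearly dual: your simultaneous-ascent direction $(P_\signalH^\stateH+P_\signalH^\stateL,\,P_\signalL^\stateL+P_\signalL^\stateH)$ at a tie is exactly the paper's equal-shares ray direction, and the determinant $P_\signalL^\stateL P_\signalH^\stateH - P_\signalL^\stateH P_\signalH^\stateL>0$ you invoke is the same positivity that makes the paper's scaled-up objective increasing. What your ordering buys is a cleaner case analysis at the end (the feasibility of each crossing directly reproduces the case split $P_\signalH^\stateL+P_\signalH^\stateH\lessgtr 1$, and the corner comparison handles the transition); what the paper's ordering buys is that the equal-shares condition is isolated as Proposition~\ref{prop: pAH and pRL are equal}, which it reuses later (e.g., in Lemma~\ref{lem:infinitynoBNE}). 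One cosmetic caveat: your claim that $A$ is \emph{strictly} decreasing along $\delta_\signalH=\tfrac12$ fails in the degenerate case $P_\signalL^\stateH=0$, but the unimodality of $\min\{A,B\}$ and the location of its maximizer survive (the paper's proof has the same implicit degeneracy), so nothing breaks.
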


Lemma~\ref{lem: optimal strategy} demonstrates the optimal strategy for the majority type-$\typeH$ agents. This strategy varies depending on the relative frequency of signals $\signalH$ and $\signalL$: if signal $\signalH$ is comparatively rare, agents should always vote for $\accept$ upon receiving signal $\signalH$. However, when they receive signal $\signalL$, they do not always vote for $\reject$, but vote for $\accept$ with a certain probability. For the other case where signal $\signalL$ is rare, agents should always vote for $\reject$ when they receive signal $\signalL$ and vote for $\reject$ with a certain probability when receiving signal $\signalH$.

The following is an example of the optimal strategy and other strategies satisfying inequalities in \eqref{eqn:onetypestrategy}.


\begin{example}[Optimal Strategy and Other Strategies]
    Consider the signal distributions in Table~\ref{tab: signal distribution}, where signal $\signalH$ is generally more common.
\begin{table}[h]
\vspace{-0.5cm}
    \centering
    \begin{tabular}{ccc}
    \hline
         & Signal $\signalH$ & Signal $\signalL$\\
    \hline
        State $\stateH$ & $0.75$ & $0.25$\\
    \hline
        State $\stateL$ & $0.5$ & $0.5$\\
    \hline
    \end{tabular}
    \caption{Signal distributions.}
    \label{tab: signal distribution}
    \vspace{-1cm}
\end{table}

According to inequalities in \eqref{eqn:onetypestrategy}, any strategy $(\delta_\signalL,\delta_\signalH)$ satisfying $\delta_\signalH < \delta_\signalL < 3\delta_\signalH$ will ensure the wish of type-$\typeH$ agents when no type-$\typeL$ agents exist. For example, $(\delta_\signalL,\delta_\signalH)$ can be $(\frac{1}{6}, \frac{1}{8})$. Then agents vote for $\accept$ with a probability of $\frac12 + \frac18 = \frac{5}{8}$ upon signal $\signalH$ and with a probability of $\frac12-\frac16=\frac{1}{3}$ upon signal $\signalL$. Such a strategy makes the expected vote share of $\accept$ in state $\stateH$, and that of $\reject$ in state $\stateL$ be $0.75\times \frac{5}{8} + 0.25\times\frac{1}{3} = \frac{53}{96}$ and $0.5\times\frac{3}{8} +0.5 \times\frac{2}{3} = \frac{25}{48}$, respectively.

On the other hand, the optimal strategy for type-$\typeH$ agents is $(\delta_\signalL^*,\delta_\signalH^*) = (\frac{1}{2}, \frac{3}{10})$. Type-$\typeH$ agents deterministically vote for $\reject$ upon the rarer signal $\signalL$. The expected vote share of $\accept$ in state $\stateH$ and that of $\reject$ in state $\stateL$ are both $\frac{3}{5}$. \qed
\end{example}

In this example, the optimal strategy leads to equal vote shares of the informed majority decision in both world states. This is not a coincidence. The equality of shares is a necessary condition for the optimal strategy. 

\begin{proposition}\label{prop: pAH and pRL are equal}
    The optimal strategy $(\delta_\signalL^*, \delta_\signalH^*)$ satisfies $p_\accept^\stateH(\delta_\signalL^*, \delta_\signalH^*) = p_\reject^\stateL(\delta_\signalL^*, \delta_\signalH^*)$.
\end{proposition}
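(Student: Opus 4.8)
The plan is to read the proposition as a statement about where the scalar objective $P(\delta_\signalL,\delta_\signalH)=\min\{f,g\}$ attains its maximum over the box $[-\frac12,\frac12]^2$, where I abbreviate $f:=p_\accept^\stateH(\delta_\signalL,\delta_\signalH)=\frac12+P_\signalH^\stateH\delta_\signalH-P_\signalL^\stateH\delta_\signalL$ and $g:=p_\reject^\stateL(\delta_\signalL,\delta_\signalH)=\frac12-P_\signalH^\stateL\delta_\signalH+P_\signalL^\stateL\delta_\signalL$ (these are just \eqref{eqn:onetypep} with $\Delta_\accept^\stateH,\Delta_\accept^\stateL$ expanded). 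Both $f$ and $g$ are affine in $(\delta_\signalL,\delta_\signalH)$, so $P$ is a minimum of two affine functions, continuous on a compact box, hence a maximizer exists; the real content is to show that at any maximizer the two pieces coincide, i.e.\ $f=g$. I would prove this as a \emph{necessary condition for optimality} rather than by plugging the closed form of Lemma~\ref{lem: optimal strategy} into $f$ and $g$, since the optimality argument explains \emph{why} equal shares arise and covers every maximizer; the direct substitution can be kept as a one-line sanity check afterward (in each case of the lemma, $f=g$ reduces to exactly the stated formula for the free coordinate).

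The key observation is that $f$ and $g$ have opposing monotonicities. Assuming the nondegenerate case in which all four conditional probabilities are strictly positive, $f$ is strictly increasing in $\delta_\signalH$ and strictly decreasing in $\delta_\signalL$, while $g$ is strictly decreasing in $\delta_\signalH$ and strictly increasing in $\delta_\signalL$. I would then argue by contradiction: suppose a maximizer $(\delta_\signalL,\delta_\signalH)$ has $f\neq g$, and by symmetry assume $f<g$, so that $P=f$. Because $g-f>0$ is strict, any sufficiently small feasible move that strictly increases $f$ preserves $f<g$ and therefore strictly increases the minimum $P$, contradicting optimality. Two candidate moves each raise $f$: increasing $\delta_\signalH$ (feasible when $\delta_\signalH<\frac12$) and decreasing $\delta_\signalL$ (feasible when $\delta_\signalL>-\frac12$).

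The only way optimality is not immediately contradicted is if both moves are blocked, i.e.\ the maximizer sits at the corner $(\delta_\signalL,\delta_\signalH)=(-\frac12,\frac12)$; ruling this out is the main obstacle. It is handled by direct evaluation: using $P_\signalH^\stateH+P_\signalL^\stateH=1$ and $P_\signalH^\stateL+P_\signalL^\stateL=1$, one gets $f=1$ and $g=0$ at that corner, so there $f>g$, contradicting the assumed $f<g$. Hence a maximizer with $f<g$ cannot be the blocking corner, an improving feasible direction always exists, and $f<g$ is untenable; the symmetric argument (the blocking corner for $f>g$ is $(\frac12,-\frac12)$, where $f=0$ and $g=1$, again contradicting $f>g$) disposes of $f>g$. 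Therefore every maximizer satisfies $f=g$, which is precisely $p_\accept^\stateH(\delta_\signalL^*,\delta_\signalH^*)=p_\reject^\stateL(\delta_\signalL^*,\delta_\signalH^*)$. I expect the only mildly delicate bookkeeping to be the evaluation at these two corners and the remark that a strictly improving feasible direction exists whenever the corner is avoided; everything else is elementary linear reasoning on affine functions.
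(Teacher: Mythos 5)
Your proof is correct and uses essentially the same idea as the paper's: a local perturbation argument exploiting the opposite monotonicities of the two affine pieces, showing that whichever of $p_\accept^\stateH$ and $p_\reject^\stateL$ is strictly smaller at an optimum could be strictly increased, a contradiction. The only difference is in how feasibility of the perturbation is secured at the boundary --- the paper first proves $\delta_\signalL^*,\delta_\signalH^*\in(0,\tfrac12]$ so that the decreasing moves are always available, whereas you evaluate the two corners where both improving moves would be blocked and rule them out directly; both routes are sound.
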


\begin{proof}
We first refine the possible range for the optimal strategy. Taking arbitrary values of $\delta_\signalL$ and $\delta_\signalH$ with conditions $\delta_\signalL > 0$, $\delta_\signalH > 0$, and $\frac{P_\signalL^\stateH}{P_\signalH^\stateH} < \frac{\delta_\signalH}{\delta_\signalL} < \frac{P_\signalL^\stateL}{P_\signalH^\stateL}$, we have \(p_\accept^\stateH(\delta_\signalL, \delta_\signalH) = \frac{1}{2} + P_\signalH^\stateH \cdot\delta_\signalH - P_\signalL^\stateH\cdot\delta_\signalL > \frac{1}{2}\) and \(p_\reject^\stateL(\delta_\signalL, \delta_\signalH) = \frac{1}{2} + P_\signalL^\stateL\cdot\delta_\signalL - P_\signalH^\stateL\cdot\delta_\signalH > \frac{1}{2}\).
Then the optimal values $\delta_\signalL^*, \delta_\signalH^*$ must satisfy $\min \{p_\accept^\stateH(\delta_\signalL^*, \delta_\signalH^*), p_\reject^\stateL(\delta_\signalL^*, \delta_\signalH^*) \} \ge \min \{p_\accept^\stateH(\delta_\signalL, \delta_\signalH), p_\reject^\stateL(\delta_\signalL, \delta_\signalH) \}>\frac{1}{2}.$
Thus, we can deduce that $\delta_\signalL^*$ and $\delta_\signalH^*$  fall within the range $(0, \frac{1}{2}]$. (If they are outside this range, it is impossible to ensure that both $p_\accept^\stateH(\delta_\signalL^*, \delta_\signalH^*) > \frac{1}{2}$ and $p_\reject^\stateL(\delta_\signalL^*, \delta_\signalH^*) > \frac{1}{2}$.)

Then we show strategy $(\delta_\signalL, \delta_\signalH)$ with $\delta_l, \delta_h \in (0, \frac{1}{2}]$ and $p_\accept^\stateH(\delta_\signalL, \delta_\signalH) \not= p_\reject^\stateL(\delta_\signalL, \delta_\signalH)$ is sub-optimal by providing a better strategy.
\begin{itemize}
    \item If $p_\accept^\stateH(\delta_\signalL, \delta_\signalH) < p_\reject^\stateL(\delta_\signalL, \delta_\signalH)$, given that $\delta_\signalL$ lies in the range $(0,\frac{1}{2}]$, we can adjust $\delta_\signalL$ slightly to create a new strategy $(\delta_\signalL^{+}, \delta_\signalH^{+})$ where $\delta_\signalL^{+} = \delta_\signalL - \epsilon$ (for some small $\epsilon$ such that $0 < \epsilon < \delta_\signalL$), and $\delta_\signalH^{+} = \delta_\signalH$. 
    For a sufficiently small $\epsilon$, it follows that $p_\accept^\stateH(\delta_\signalL, \delta_\signalH) < p_\accept^\stateH(\delta_\signalL^{+}, \delta_\signalH^{+}) < p_\reject^\stateL(\delta_\signalL^{+}, \delta_\signalH^{+}) < p_\reject^\stateL(\delta_\signalL, \delta_\signalH).$ Therefore, the minimum of $p_\accept^\stateH$ and $p_\reject^\stateL$ for $(\delta_\signalL^{+}, \delta_\signalH^{+})$ is greater than that for $(\delta_\signalL^{*}, \delta_\signalH^{*})$, indicating a better strategy.
    \item If $p_\accept^\stateH(\delta_\signalL, \delta_\signalH) > p_\reject^\stateL(\delta_\signalL, \delta_\signalH)$, we adjust $\delta_\signalH$ to $\delta_\signalH^{+} = \delta_\signalH - \epsilon$ (where $0 < \epsilon < \delta_\signalH$) and keep $\delta_\signalL^{+} = \delta_\signalL$. For a sufficiently small $\epsilon$, we get $p_\accept^\stateH(\delta_\signalL, \delta_\signalH) > p_\accept^\stateH(\delta_\signalL^{+}, \delta_\signalH^{+}) > p_\reject^\stateL(\delta_\signalL^{+}, \delta_\signalH^{+}) > p_\reject^\stateL(\delta_\signalL, \delta_\signalH).$ Hence, the minimum of $p_\accept^\stateH$ and $p_\reject^\stateL$ for $(\delta_\signalL^{+}, \delta_\signalH^{+})$ is again greater than that for $(\delta_\signalL, \delta_\signalH)$, proving the superiority of the new strategy.
\end{itemize}
\qedhere
\end{proof}

With this condition, Lemma~\ref{lem: optimal strategy} follows from some calculations.
\begin{proof}[of Lemma~\ref{lem: optimal strategy}]
By Proposition~\ref{prop: pAH and pRL are equal}, we have $$\frac{1}{2} + P_\signalH^\stateH\cdot\delta_\signalH^* - P_\signalL^\stateH\cdot\delta_\signalL^* = \frac{1}{2} + P_\signalL^\stateL\cdot\delta_\signalL^* - P_\signalH^\stateL\cdot\delta_\signalH^*.$$
This leads us to the ratio
$$\frac{\delta_\signalH^*}{\delta_\signalL^*} = \frac{P_\signalL^\stateL + P_\signalL^\stateH}{P_\signalH^\stateL + P_\signalH^\stateH}  = \frac{2 - (P_\signalH^\stateL + P_\signalH^\stateH)}{P_\signalH^\stateL + P_\signalH^\stateH} = \frac{1}{(P_\signalH^\stateL + P_\signalH^\stateH)/2} - 1.$$
So we can express $p_\accept^\stateH$ (also $p_\reject^\stateL$) as
$$\frac{1}{2} + P_\signalH^\stateH\cdot\delta_\signalH^* - P_\signalL^\stateH\cdot\delta_\signalL^* = \frac{1}{2} + P_\signalH^\stateH\cdot\left(\delta_\signalH^* + \delta_\signalL^*\right) - \delta_\signalL^* 
= \frac{1}{2} + \delta_\signalL^*\cdot \left[P_\signalH^\stateH \left(\frac{\delta_\signalH^*}{\delta_\signalL^*} + 1\right)- 1\right],$$ whose value is maximized by selecting the largest possible value for $\delta_\signalL^*$.

Both $\delta_\signalL^*$ and $\delta_\signalH^*$ fall within $(0, \frac{1}{2}]$,  we can determine their optimal values based on the value of $\frac{P_\signalH^\stateL + P_\signalH^\stateH}2$: \begin{itemize}
    \item If $\frac{P_\signalH^\stateL + P_\signalH^\stateH}2 \le \frac{1}{2}$ (indicating that $\delta_\signalH^* \ge \delta_\signalL^*$), the optimal values are $\delta_\signalH^* = \frac{1}{2}$ and $\delta_\signalL^* = \frac{1}{2}\cdot \frac{P_\signalH^\stateL + P_\signalH^\stateH}{P_\signalL^\stateL + P_\signalL^\stateH}$.
    \item If $\frac{P_\signalH^\stateL + P_\signalH^\stateH}2 > \frac{1}{2}$ (indicating that $\delta_\signalH^* < \delta_\signalL^*$), the optimal values are $\delta_\signalL^* = \frac{1}{2}$ and $\delta_\signalH^* = \frac{1}{2}\cdot \frac{P_\signalL^\stateL + P_\signalL^\stateH}{P_\signalH^\stateL + P_\signalH^\stateH}$.
\end{itemize}\qedhere
\end{proof}

\subsection{Equilibrium Analysis for $n \to \infty$}
\label{sec:MAJ_infinite_n}
The previous subsection characterizes the optimal strategy for the majority Type-$\typeH$ agents, which generates the maximum margin for the informed majority decision. Intuitively, when the proportion of minority agents is small, the created margin by the majority is large enough to counter the minority's aggressive strategy, so the informed majority decision will always win. On the other hand, when the minority proportion is relatively large, the margin cannot counter their deterministic voting. Then no equilibrium exists: if the minority agents always vote for $\accept$, then the majority agents will reduce their probability of voting $\accept$; if the majority agents play in this way, the minority agents will vote for $\reject$ instead...


Now we provide a threshold on the majority (or minority) proportion for the existence of the strong Bayes Nash equilibrium. We focus on the cases where $n$ goes to infinity, which captures many ideas and intuitions. 
For cases with large but finite $n$, substantially more analyses are required. We defer them to the next subsection. 

We first derive the vote share of the informed majority decision, among the majority type-$\typeH$ agents, when they use the optimal strategy. 
This value, denoted as $M=P(\delta_\signalL^\ast,\delta_\signalH^\ast)$, can be computed by Equations \eqref{eqn:onetypep} and \eqref{eqn:optimalstrategy}:
\begin{equation}\label{eqn:M}
    M = \left\{
\begin{array}{rcl}
\frac{P_\signalL^\stateL}{P_\signalL^\stateL + P_\signalL^\stateH}& & {\text{if~} \frac{P_\signalH^\stateL + P_\signalH^\stateH}{2} \le \frac{1}{2},} \\
\frac{P_\signalH^\stateH}{P_\signalH^\stateL + P_\signalH^\stateH}& & {\text{otherwise.}}
\end{array}
\right.
\end{equation}

\subsubsection{When the Majority Proportion is Large...} In other words, when the expected vote share of the informed majority decision, cast by the majority type agents, is already greater than one-half ($\alpha M > \frac{1}{2}$), the majority agents ensure the informed majority decision. Since there is no room for the majority type agents to improve utilities, any strategy profile of such a case forms a strong Bayes Nash equilibrium.


\begin{lemma}\label{lem:infinityBNE}
    For configuration $\config$ where the majority proportion $\alpha > \frac1{2M}$ and the number of agents $n \to \infty$, any strategy profile $(\sigma_1, \sigma_2, \ldots, \sigma_n)$ such that the majority type-$\typeH$ agents adopt the optimal strategy forms a strong Bayes Nash equilibrium in the majority vote mechanism. Such equilibria lead to the informed majority decision with a probability of $1$.
\end{lemma}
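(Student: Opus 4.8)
The plan is to split the lemma into its two assertions---that the profile elects the informed majority decision with probability $1$, and that it is a (zero-)strong Bayes Nash equilibrium---and to drive both from a single fact: under the optimal strategy, the majority type-$\typeH$ agents \emph{alone} cast a strict majority of all votes for the informed majority decision in each state.

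First I would record the vote-share computation. By Proposition~\ref{prop: pAH and pRL are equal} the optimal strategy satisfies $p_\accept^\stateH(\delta_\signalL^*,\delta_\signalH^*) = p_\reject^\stateL(\delta_\signalL^*,\delta_\signalH^*)$, so both equal the quantity $M$ of \eqref{eqn:M}. Since the type-$\typeH$ agents form a fraction $\alpha$ of the population and vote independently (their signals being i.i.d.\ given the state), the expected fraction of all $n$ votes these majority agents cast for $\accept$ in state $\stateH$ is $\alpha\cdot p_\accept^\stateH = \alpha M$, and likewise $\alpha M$ for $\reject$ in state $\stateL$. The hypothesis $\alpha > \thetaMajority = \frac{1}{2M}$ gives $\alpha M > \frac12$, so by the Law of Large Numbers the realized fraction concentrates on its mean as $n\to\infty$; with probability $1$ the majority agents alone supply more than half of all votes for $\accept$ in $\stateH$ and for $\reject$ in $\stateL$. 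This forces the informed majority decision to win with probability $1$ no matter how the minority votes, settling the second assertion.

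For the equilibrium, I would observe that the informed majority decision hands every type-$\typeH$ agent its preferred alternative in each state, so each attains its maximal expected utility $U^{\typeH}_{\max} = \mu\, v_i(\stateH,\accept) + (1-\mu)\, v_i(\stateL,\reject)$, while each type-$\typeL$ agent gets its worst outcome. Suppose for contradiction a coalition $D$ and deviation $\Sigma'$ meet Definition~\ref{def:BNE}. If some $i\in D$ is type-$\typeH$, then $u_i(\Sigma')\ge u_i(\Sigma)=U^{\typeH}_{\max}$ forces $u_i(\Sigma')=U^{\typeH}_{\max}$, which requires $\Sigma'$ to still elect the informed majority decision in both states with probability $1$; but then every type-$\typeL$ member of $D$ again receives its worst outcome exactly as under $\Sigma$, and every type-$\typeH$ member stays at $U^{\typeH}_{\max}$, so no member strictly improves---a contradiction. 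If instead $D$ is all type-$\typeL$, then every type-$\typeH$ agent lies outside $D$ and keeps the optimal strategy, so by the vote-share bound their votes alone still elect the informed majority decision with probability $1$ under $\Sigma'$; every member of $D$ remains at its unchanged worst-case utility and no one strictly improves---again a contradiction. Hence no profitable strong deviation exists and the profile is a $0$-strong Bayes Nash equilibrium.

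I expect the main obstacle to be the equilibrium argument rather than the vote-share estimate. The conceptual crux is that the ``majority already at maximal utility'' property locks the outcome whenever $D$ contains a majority agent, reducing the only real threat to an all-minority coalition, which is neutralized precisely by the strict inequality $\alpha M > \frac12$. Some care is also needed to confirm that the concentration bound on the majority's votes is insensitive to how the minority coordinates, since that bound does not depend on the minority's behavior at all.
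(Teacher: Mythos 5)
Your proposal is correct and follows essentially the same route as the paper: concentration of the majority-type vote share at $\alpha M > \frac12$ forces the informed majority decision regardless of the minority's behavior, the majority-type agents are then at their maximal ex-ante utility so any coalition containing one of them cannot change the outcome without hurting that member, and an all-minority coalition is powerless. Your explicit two-case coalition analysis is just a slightly more detailed writing of the paper's same argument.
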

\begin{proof}
    When $n \to \infty$, the vote share for each alternative concentrates on the expected share. 
    The optimal strategy of the majority type guarantees that at least $M$ fraction of the majority agents' votes support the informed majority decision.
    When $\alpha > \frac{1}{2M}$, no matter what strategy the minority agents take, the fraction of total votes that support the informed majority decision surpasses $\alpha M > \frac{1}{2}$, and the informed majority decision prevails.
    The majority-type agents always get their favored alternative, so their utility will not improve through deviation.
    In addition, none of the majority-type agents can be in the deviation group ($D$ in Definition~\ref{def:BNE}), as any change in the outcome of the voting game will hurt the utility of each majority-type agent.
    On the other hand, the minority-type agents are unable to influence the outcome without the help of the majority-type agents. 
    Therefore, no coalition of agents can benefit by deviating from this strategy profile. 
\end{proof}


Lemma~\ref{lem:infinityBNE} characterized a specific kind of ``good'' equilibria that ensure the informed majority decision. One may ask if there exists some ``bad'' equilibrium. The answer is no. In fact, any strong Bayes Nash equilibrium can ensure the informed majority decision in the majority vote mechanism.

\begin{lemma}\label{lem: BNE to majority wish}
    For the number of agents $n \to \infty$, any strong Bayes Nash equilibrium in the majority vote mechanism leads to the informed majority decision with a probability of $1$.
\end{lemma}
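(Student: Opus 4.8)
The plan is to argue by contradiction, exploiting the fact that once the majority's guaranteed share $\alpha M$ exceeds $\frac12$, the majority type-$\typeH$ agents can \emph{unilaterally} force the informed majority decision. Assume without loss of generality that type-$\typeH$ is the majority, so the informed majority decision is $\accept$ in state $\stateH$ and $\reject$ in state $\stateL$. Suppose $\Sigma$ is a strong Bayes Nash equilibrium that does \emph{not} lead to the informed majority decision with probability $1$. Since $n\to\infty$, the realized vote share of each alternative concentrates on its expectation, so this failure is equivalent to the expected share of the informed majority decision being at most $\frac12$ in at least one state; say it fails in state $\stateH$, i.e.\ the expected $\accept$-share in state $\stateH$ under $\Sigma$ is at most $\frac12$ (the state-$\stateL$ case is symmetric). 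Because every type-$\typeH$ agent has $v_i(\stateH,\accept)>v_i(\stateH,\reject)$, $\mu\in(0,1)$, and $\reject$ then wins in state $\stateH$ with probability bounded away from $0$, each majority agent's ex-ante utility under $\Sigma$ is \emph{strictly} below its maximum possible value $\mu\,v_i(\stateH,\accept)+(1-\mu)\,v_i(\stateL,\reject)$.

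Next I would exhibit the deviating coalition $D$ consisting of \emph{all} type-$\typeH$ agents, who switch to the optimal symmetric strategy $(\delta_\signalL^\ast,\delta_\signalH^\ast)$ of Lemma~\ref{lem: optimal strategy} while the minority keeps its $\Sigma$-strategies. By the value of $M$ in \eqref{eqn:M}, together with \eqref{eqn:onetypep} and Proposition~\ref{prop: pAH and pRL are equal}, this strategy makes at least an $M$-fraction of the majority vote for the informed majority decision in \emph{each} state, so the majority alone contributes at least $\alpha M$ to its total share. In the regime where equilibria exist we have $\alpha>\thetaMajority=\frac1{2M}$, i.e.\ $\alpha M>\frac12$; hence, exactly as in the proof of Lemma~\ref{lem:infinityBNE}, the informed majority decision wins in \emph{both} states with probability approaching $1$, regardless of the minority's play. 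Under this deviation every type-$\typeH$ agent attains the joint maximum utility $\mu\,v_i(\stateH,\accept)+(1-\mu)\,v_i(\stateL,\reject)$ (the type fixes this ordinal preference uniformly across all majority agents, so they improve \emph{simultaneously}). Thus all members of $D$ are strictly better off, contradicting conditions~2--3 of Definition~\ref{def:BNE}; therefore no ``bad'' equilibrium can exist and every strong Bayes Nash equilibrium is good.

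Finally, for completeness I would dispose of the remaining regime $\alpha\le\thetaMajority$: there the guaranteed share $\alpha M\le\frac12$ no longer forces the outcome, so the deviation above need not succeed. However, in this regime one shows separately (the non-existence half of Theorem~\ref{thm:MAJ_main}) that no strong Bayes Nash equilibrium exists at all, so the statement of the lemma holds vacuously. I expect the main obstacle to be the second paragraph: one must verify that the single deviation to $(\delta_\signalL^\ast,\delta_\signalH^\ast)$ forces the informed majority decision in \emph{both} states at once---the simultaneous $\ge M$ share guarantee in each state, which is exactly what Proposition~\ref{prop: pAH and pRL are equal} and the threshold $\alpha M>\frac12$ deliver---and that this strictly raises the ex-ante utility of \emph{every} majority agent, not merely changes the winning alternative.
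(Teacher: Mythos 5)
There is a genuine gap, and it sits exactly where the lemma does its real work. Your deviation --- all type-$\typeH$ agents switch to the optimal strategy $(\delta_\signalL^\ast,\delta_\signalH^\ast)$ --- only yields the guarantee $\alpha M$ for the informed majority decision's vote share, so it forces the outcome only when $\alpha M>\frac12$, i.e.\ $\alpha>\thetaMajority$. But the lemma is stated for every configuration, and its principal use in the paper is precisely in the complementary regime: the proof of Lemma~\ref{lem:infinitynoBNE} (non-existence for $\alpha\le\thetaMajority$) \emph{begins} by invoking Lemma~\ref{lem: BNE to majority wish} to conclude that any hypothetical equilibrium would have to aggregate the informed majority decision, and then derives a contradiction. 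Your plan to ``dispose of the remaining regime'' by citing the non-existence half of Theorem~\ref{thm:MAJ_main} is therefore circular within the paper's logical structure; you would need an independent non-existence argument, which you do not supply.

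The missing idea is the counter-strategy (mirroring) trick, which makes the majority's deviation succeed for \emph{every} $\alpha>\frac12$ without ever needing $\alpha M>\frac12$. Replace the minority's aggregate behavior by an equivalent symmetric strategy $(\delta_\signalL,\delta_\signalH)$; then let a $\frac{1-\alpha}{\alpha}$ fraction of the majority (a block of size $(1-\alpha)n$) play the mirror strategy $(-\delta_\signalL,-\delta_\signalH)$. For each signal the two blocks' acceptance rates sum to $1$, and both blocks receive signals from the same conditional distribution, so the combined block of size $2(1-\alpha)n$ is exactly neutral in expectation in both states. The remaining $(2\alpha-1)n$ majority agents play the optimal strategy of Lemma~\ref{lem: optimal strategy}, delivering a fraction $M>\frac12$ of \emph{their} votes to the informed majority decision, so the total expected share is $(1-\alpha)+(2\alpha-1)M>\frac12$ in both states for any $\alpha>\frac12$. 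The deviation then forces the informed majority decision with probability $1$ as $n\to\infty$ and strictly improves every type-$\typeH$ agent, which is the contradiction. Your concentration argument and the strict-improvement step in the first paragraph are fine; it is the construction of a deviation that works uniformly over $\alpha>\frac12$ that your proposal lacks.
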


Intuitively, for whatever strategies the minority agents play, the majority agents can ``fine-tune'' the values of $\delta_\signalL^*$ and $\delta_\signalH^*$ to absorb the impact made by the minority. In this way, the majority can improve the outcome if the informed majority decision is not guaranteed to win. Therefore, no ``bad'' equilibrium exists.

\begin{proof}
    We show the majority agents can benefit from deviation if the winning probability of the informed majority decision is less than 1.

    Suppose the minority type-$\typeL$ agents take strategies (either symmetric or asymmetric) such that the expected fraction of $\accept$ votes upon signal $\signalL$ and signal $\signalH$ are $\chi_\signalL^{(\typeL)}$ and $\chi_\signalH^{(\typeL)}$. In this case, the impact of the minority agents is equivalent to that when they adopt a symmetric strategy $(\delta_\signalL, \delta_\signalH)$ with $\frac{1}{2} -\delta_\signalL = \chi_\signalL^{(\typeL)}$ and $\frac{1}{2} + \delta_\signalH = \chi_\signalH^{(\typeL)}$.
    Some type-$\typeH$ agents can conduct a ``counter-strategy'' to absorb the minority type's impact on votes. That is, the symmetric strategy $(-\delta_\signalL, -\delta_\signalH)$ so that $\beta_\signalL = \frac{1}{2} -(- \delta_\signalL) = 1 - \chi_\signalL^{(\typeL)}$ and $\beta_\signalH = \frac{1}{2} + (-\delta_\signalH) = 1 - \chi_\signalH^{(\typeL)}$. The remaining of the majority type-$\typeH$ agents can ensure the dominance of the informed majority decision adopting the optimal strategy $(\delta_\signalL^*, \delta_\signalH^*)$, or any strategy satisfying inequalities in (\ref{eqn:onetypestrategy}). 
    Formally, the majority type-$\typeH$ agents can take strategy $(-\delta_\signalL,-\delta_\signalH)$ with a probability of $\frac{1-\alpha}{\alpha}$ and the optimal strategy $(\delta_\signalL^*, \delta_\signalH^*)$ otherwise.
    This strategy divides the majority into two groups. One group of size $(1-\alpha)\cdot n$ negates the minority's impact, while the other group of size $(2 \alpha - 1)\cdot n$ secures the informed majority decision in both states. The winning probability of the informed majority decision will increase to $1$ when the majority-type agents take the above-mentioned strategy. Therefore, any strategy profile that does not lead to the informed majority decision cannot be a strong Bayes Nash equilibrium. \qedhere
\end{proof}

\subsubsection{When the Majority Proportion is Small...}
On the other hand, when the population of the majority-type agents is not large enough to create a large margin, the informed majority decision is no longer secured.
For whatever strategies the majority agents play, the minority agents can choose either $\accept$ or $\reject$ to vote (deterministically) so that they can win in at least one world state. We find no equilibrium exists in such cases. Consequently, the informed majority decision cannot be aggregated.

\begin{lemma}\label{lem:infinitynoBNE}
For configuration $\config$ where the majority proportion $\alpha \le \frac1{2M}$ and the number of agents $n \to \infty$, no strong Bayes Nash equilibrium exists in the majority vote mechanism.
\end{lemma}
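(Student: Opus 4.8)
The plan is to argue by contradiction: assume a strong Bayes Nash equilibrium $\Sigma$ exists under $\alpha \le \thetaMajority$, and exhibit a single profitable coalitional deviation for the minority type-$\typeL$ agents, contradicting Definition~\ref{def:BNE}. The first step pins down what $\Sigma$ must produce. By Lemma~\ref{lem: BNE to majority wish} (whose counter-strategy construction only uses $\alpha>\frac12$ and $M>\frac12$, and therefore applies verbatim in this regime) any strong Bayes Nash equilibrium aggregates the informed majority decision with probability $1$; that is, under $\Sigma$ the alternative $\accept$ wins in state $\stateH$ and $\reject$ wins in state $\stateL$ with probability approaching $1$. In particular the type-$\typeL$ minority, whose preferences are antagonistic to this, is receiving its least preferred alternative in \emph{both} states, so any deviation that flips the winner of either state strictly helps all of them.

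Next I would summarize the majority's behaviour under $\Sigma$ by two aggregate quantities, $A^\stateH$ and $A^\stateL$, the average probability that a type-$\typeH$ agent votes $\accept$ in state $\stateH$ and in state $\stateL$ respectively. Since $p_\accept^\stateH$ and $p_\reject^\stateL$ are affine in the voting probabilities (see \eqref{eqn:onetypep}), averaging the possibly asymmetric or mixed majority strategies yields an effective symmetric strategy $(\delta_\signalL,\delta_\signalH)$ with $A^\stateH=p_\accept^\stateH(\delta_\signalL,\delta_\signalH)$ and $1-A^\stateL=p_\reject^\stateL(\delta_\signalL,\delta_\signalH)$. Consequently $\min\{A^\stateH,\,1-A^\stateL\}=P(\delta_\signalL,\delta_\signalH)\le M$, since $M$ is by definition the maximum of $P(\cdot,\cdot)$ over all feasible strategies (Lemma~\ref{lem: optimal strategy} and \eqref{eqn:M}). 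Because $n\to\infty$, vote shares concentrate, so the winner in each state is determined solely by whether the expected $\accept$ share exceeds $\frac12$, with a fair coin flip at exactly $\frac12$.

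I would then consider the two extreme deviations available to the coalition $D$ consisting of all type-$\typeL$ agents: vote $\accept$ unconditionally (an attack on state $\stateL$) or vote $\reject$ unconditionally (an attack on state $\stateH$). Holding the majority fixed, the all-$\reject$ deviation drives the $\accept$ share in $\stateH$ down to $\alpha A^\stateH$, weakly flipping $\stateH$ to $\reject$ precisely when $\alpha A^\stateH\le\frac12$; symmetrically the all-$\accept$ deviation drives the $\reject$ share in $\stateL$ down to $\alpha(1-A^\stateL)$, weakly flipping $\stateL$ when $\alpha(1-A^\stateL)\le\frac12$. Now $\alpha\le\thetaMajority=\frac1{2M}$ gives $\frac1{2\alpha}\ge M\ge\min\{A^\stateH,1-A^\stateL\}$, hence $\min\{\alpha A^\stateH,\,\alpha(1-A^\stateL)\}\le\frac12$, so at least one attack succeeds (at least to a tie). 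Each successful flip replaces the minority's worst alternative with their favourite in the attacked state, strictly raising the ex-ante utility of every member of $D$; thus $D$ is a blocking coalition and $\Sigma$ is not an equilibrium.

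The one point needing care — and the main obstacle — is the boundary case $\alpha=\thetaMajority$, where the relevant inequality can hold with equality so the winning attack produces a tie rather than an outright flip (this arises exactly when the majority plays the optimal strategy, making $\alpha A^\stateH=\alpha(1-A^\stateL)=\frac12$ by Proposition~\ref{prop: pAH and pRL are equal}). Here I would argue that a tie is still a strict improvement: it converts a sure loss in the attacked state into a fair coin flip, giving the minority strictly positive probability of their preferred alternative and hence strictly higher ex-ante utility, under the mild standing assumption $\mu\in(0,1)$ that both states occur with positive probability. The only other technical check is that $\min\{A^\stateH,1-A^\stateL\}\le M$ holds for arbitrary — including asymmetric and mixed — majority strategies, which follows from the affinity of $p_\accept^\stateH,p_\reject^\stateL$ (so the averaged strategy is itself feasible) together with the maximality of $M$; these two checks completed, the contradiction establishes that no strong Bayes Nash equilibrium exists.
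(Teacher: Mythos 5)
Your proposal is correct and follows essentially the same route as the paper: invoke Lemma~\ref{lem: BNE to majority wish} to force the putative equilibrium to deliver the informed majority decision with probability $1$, reduce the majority's (possibly asymmetric) play to an effective symmetric strategy bounded by $M$, and then let the minority coalition attack the weaker state with a deterministic vote, using $\alpha \le \frac{1}{2M}$ to guarantee $\min\{\alpha p_\accept^\stateH, \alpha p_\reject^\stateL\}\le\frac12$. Your explicit treatment of the boundary tie case (where the attack yields a coin flip rather than an outright win) is a small refinement the paper glosses over, but it does not change the argument.
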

\begin{proof}
If there were a strong Bayes Nash equilibrium, by Lemma~\ref{lem: BNE to majority wish}, the winning probability of informed majority decision equals to $1$. We will show that the minority-type agents can form a coalition and decrease the winning probability in this case, which leads to a contradiction.

Suppose the majority type-$\typeH$ agents take strategies (either symmetric or asymmetric) such that the expected fraction of $\accept$ votes among them upon receiving signal $\signalL$ and signal $\signalH$ are $\chi_\signalL^{(\typeH)}$ and $\chi_\signalH^{(\typeH)}$. In this case, the impact of the majority agents is equivalent to that when they adopt a symmetric strategy $(\delta_\signalL, \delta_\signalH)$ with $\frac{1}{2} -\delta_\signalL = \chi_\signalL^{(\typeH)}$ and $\frac{1}{2} + \delta_\signalH = \chi_\signalH^{(\typeH)}$. We can assume without loss of generality that the majority type-$\typeH$ agents play symmetric strategy $(\delta_\signalL,\delta_\signalH)$.

By Lemma~\ref{lem: optimal strategy}, the minimum vote share of the informed majority decision among the majority-type agents, $P(\delta_\signalL, \delta_\signalH) = \min\{ p_\accept^\stateH(\delta_\signalL, \delta_\signalH),  p_\reject^\stateL(\delta_\signalL, \delta_\signalH)\}$, is at most $P(\delta_\signalL^*, \delta_\signalH^*) = M$. 
Given that $\alpha M \le \frac{1}{2}$, either $\alpha p_\accept^\stateH(\delta_\signalL, \delta_\signalH)\le \frac{1}{2}$ or $\alpha p_\reject^\stateL(\delta_\signalL, \delta_\signalH)\le \frac{1}{2}$ holds. That is, the majority type-$\typeH$ agents' votes are not enough to secure the informed majority decision. For strategy profiles that lead the winning probability of the informed majority decision to be $1$, the minority type-$\typeL$ agents can form a coalition to vote strategically and increase their utilities. If $\alpha p_\accept^\stateH(\delta_\signalL, \delta_\signalH)\le \frac{1}{2}$, type-$\typeL$ agents can deterministically vote for $\reject$ to make the vote share of $\reject$ in total be more than one half in state $\stateH$. If $\alpha p_\reject^\stateL(\delta_\signalL, \delta_\signalH)\le \frac{1}{2}$, type-$\typeL$ agents can deterministically vote for $\accept$ to make $\accept$ win in state $\stateL$.
\end{proof}

The following example applies our results to parameters in Table~\ref{tab: signal distribution}. 

\begin{example}[Threshold for the Existence of Strong Equilibrium]
Within signal distributions given in Table~\ref{tab: signal distribution}, the maximum vote share of the informed majority decision 
cast by the majority type agents is $M = \frac{0.75}{0.5 + 0.75} = \frac35$. 

When the majority proportion, $\alpha$, is greater than $\frac{1}{2M} = \frac56$ (and $n \to \infty$),
it forms a strong Bayes Nash equilibrium and aggregates the informed majority decision as long as the majority type-$\typeH$ agents adopt the optimal strategy.
    
When the majority proportion $\alpha$ is less than or equal to $\frac56$, there is no strong Bayes Nash equilibrium.\qed
        
\end{example}

Lemmas~\ref{lem:infinityBNE}, \ref{lem: BNE to majority wish}, and \ref{lem:infinitynoBNE} prove Theorem~\ref{thm:MAJ_main} for the special case $n\rightarrow\infty$.
\subsection{Equilibrium Analysis for Finite $n$}
\label{sec:MAJ_finite_n}
Our characterizations in Lemmas~\ref{lem:infinityBNE}, \ref{lem: BNE to majority wish}, and \ref{lem:infinitynoBNE} continue to hold for finite $n$ when we consider the approximate strong Bayes Nash equilibrium. 

\begin{lemmarep}\label{lem: strong nash equilibrium for finite n}
For configuration $\config$ where the majority proportion $\alpha > \frac1{2M}$, any strategy profile $\Sigma = (\sigma_1, \ldots, \sigma_n)$ such that the majority type-$\typeH$ agents adopt the optimal strategy forms an $\epsilon$-strong Bayes Nash equilibrium in the majority vote mechanism, where $\epsilon=2B^2 \exp(-2c^2\lfloor\alpha n\rfloor)$ and $c$ is a constant defined by $c = \frac{1}{3}(\alpha M - \frac{1}{2})$.
\end{lemmarep}

The following two propositions, Propositions~\ref{prop: the majority wins with high prob} and \ref{prop: no win-win in majority vote}, are pivotal to validate this theorem.
Proposition~\ref{prop: the majority wins with high prob} says that the optimal strategy in the previous section can still guarantee the victory of majority type-$\typeH$ agents with high probability.
Proposition~\ref{prop: no win-win in majority vote} says that, when the majority type-$\typeH$ agents adopt the optimal strategy, no other strategy profile can (significantly) benefit both a type-$\typeH$ agent and a type-$\typeL$ agent. We defer the proof of these two lemmas and the proof of Lemma~\ref{lem: strong nash equilibrium for finite n} into the appendix.

\begin{propositionrep}[Informed Majority Decision Wins with High Probability]\label{prop: the majority wins with high prob}
When the majority proportion $\alpha > \frac1{2M}$, if the majority type-$\typeH$ agents adopt the optimal strategy in \eqref{eqn:optimalstrategy}, then the informed majority decision wins the majority vote with probability at least $1 - 2\exp(-2c^2\lfloor\alpha n\rfloor)$.
\end{propositionrep}

Proposition~\ref{prop: the majority wins with high prob} holds for the same reasons as before (the margin created by the majority-type agents is large enough to counter any strategies played by the minority agents), with additional application of the Chernoff bound.

\begin{propositionrep}[No Win-Win Scenario in Majority Vote]\label{prop: no win-win in majority vote}
When the majority proportion $\alpha > \frac1{2M}$, 
    consider any strategy profile $\Sigma$ such that the majority type-$\typeH$ agents adopt the optimal strategy in \eqref{eqn:optimalstrategy}. Let $i_1$ be an arbitrary $\typeH$-type agent (i.e., $t_{i_1} = \typeH$) and $i_2$ be an arbitrary $\typeL$-type agent (i.e., $t_{i_2} = \typeL$). Then for any  $\epsilon \ge 2B^2 \exp(-2c^2\lfloor\alpha n\rfloor)$ and any strategy profile $\Sigma'$, neither of the following conditions holds in the majority vote mechanism:
    \begin{enumerate}
          \item[(1)] $u_{i_1}(\Sigma') - u_{i_1}(\Sigma) > \epsilon\text{~and~}u_{i_2}(\Sigma') - u_{i_2}(\Sigma) \ge 0$.
        \item[(2)] $u_{i_2}(\Sigma') - u_{i_2}(\Sigma) > \epsilon\text{~and~}u_{i_1}(\Sigma') - u_{i_1}(\Sigma) \ge 0$.
    \end{enumerate}
\end{propositionrep}

\begin{toappendix}
We first prove Lemma~\ref{lem: strong nash equilibrium for finite n} assuming Proposition~\ref{prop: the majority wins with high prob} and \ref{prop: no win-win in majority vote}.

\begin{proof}[of Lemma~\ref{lem: strong nash equilibrium for finite n}]
    Proposition~\ref{prop: the majority wins with high prob} confirms that the strategy profile $\Sigma$ almost certainly ensures the victory of the informed majority decision. Therefore, a deviation by the majority type agents is not significantly beneficial, as the utility increment would not surpass $2B\exp(-2c^2\lfloor\alpha n\rfloor)$. In other words, for any subset $D$ containing only the majority type agents and any deviation profile $\Sigma'$, there is no $i \in D$ such that $u_i(\Sigma') > u_i(\Sigma) + \epsilon$.

    Proposition~\ref{prop: the majority wins with high prob} also bounds the utility improvement for the minority coalition: irrespective of their strategy, as long as the majority maintains the optimal strategy, the probability of the minority type agents' wish prevailing is capped at $2\exp(-2c^2\lfloor\alpha n\rfloor)$, limiting their utility increment by $2B\exp(-2c^2\lfloor\alpha n\rfloor)$. Formally, for any subset $D$ containing only the minority type agents, any deviation profile $\Sigma' = (\sigma'_1, \cdots, \sigma'_n)$ such that $\sigma'_j = \sigma_j$ for $j \not\in D$, there is no $i \in D$ satisfying $u_i(\Sigma') > u_i(\Sigma) + \epsilon$.

    Moreover, Proposition~\ref{prop: no win-win in majority vote} demonstrates that the cooperation between different types of agents is implausible to result in a substantial joint benefit. This implies that there is no valid coalition $D$ containing both types of agents.

    Therefore, any strategy profile $\Sigma$ such that the majority type-$\typeH$ agents adopt the optimal strategy is an $\epsilon$-strong Bayes Nash equilibrium.
\end{proof}

It now remains to prove the two propositions.

\begin{proof}[of Proposition~\ref{prop: the majority wins with high prob}]
When the majority type-$\typeH$ agents adopt the optimal strategy $(\delta^*_{\signalL}, \delta^*_{\signalH})$, the expected fraction of votes supporting the informed majority decision exceeds $\alpha M$ in both states. By the Chernoff bound, no matter what strategy the minority takes, with probability at least $1 - 2\exp(-2c^2\lfloor\alpha n\rfloor)$, the fraction of votes supporting the informed majority decision by the majority-type agents falls within interval $[\alpha M - c, \alpha M + c]$. Thus, the informed majority decision wins with a probability larger than $1 - 2\exp(-2c^2\lfloor\alpha n\rfloor)$.
\end{proof}

\begin{proof}[of Proposition~\ref{prop: no win-win in majority vote}]
    Without loss of generality, we assume type-$\typeH$ is the majority type. 
    Let $i_1$ be an arbitrary type-$\typeH$ agent, and $i_2$ be an arbitrary type-$\typeL$ agent (i.e., $t_{i_1} = \typeH$ and $t_{i_2} = \typeL$). For simplicity of notations, we shorten their ex-ante and ex-post utility notations as $u_1$, $u_2$, $v_1$, and $v_2$.
    
    By Proposition~\ref{prop: the majority wins with high prob}, the probability of the majority vote mechanism failing to satisfy the informed majority decision is less than $2\exp(-2c^2\lfloor\alpha n\rfloor)$. Thus, any potential increment in utility for a type-$\typeH$ agent will not surpass $2B\exp(-2c^2\lfloor\alpha n\rfloor)$, which is less than $\epsilon$, implying that Condition (1) is never met.

    We are next to show Condition (2) is also not feasible. 

    As we can see from Equation~(\ref{eq: typeH ex-ante utility diff}) and Equation~(\ref{eq: typeL ex-ante utility diff}), for both agents to benefit from the deviation (i.e., $\Delta u_{1}(\Sigma, \Sigma') \ge 0$ and $\Delta u_{2}(\Sigma, \Sigma') \ge 0$), the term $(\lambda^\stateH_\accept(\Sigma') -\lambda^\stateH_\accept(\Sigma))$ and $(\lambda^\stateL_\accept(\Sigma') -\lambda^\stateL_\accept(\Sigma))$ must either both be non-negative or both be non-positive.

    Intuitively, for agent $i_1$ of type-$\typeH$, the deviation should result in increasing utility in one state but decreasing utility in the other. Since the optimal strategy of the majority type-$\typeH$ agents already ensures that the informed majority decision is fulfilled with a probability close to $1$, any increment of utility for agent $i_1$ is relatively small. Consequently, the decrement of utility in the other state is also small. Such deviations do not significantly alter the outcome in both states, so the impact on the utility of type-$\typeL$ agent $i_2$ is also minor.
    
    Formally, we consider two cases based on the value $\lambda^\stateH_\accept(\Sigma') - \lambda^\stateH_\accept(\Sigma)$: 
    \begin{itemize}
        \item If $\lambda^\stateH_\accept(\Sigma') - \lambda^\stateH_\accept(\Sigma) \ge 0$, it holds that $\lambda^\stateL_\accept(\Sigma') - \lambda^\stateL_\accept(\Sigma) \ge 0$ (otherwise type-$\typeL$ agents' utilities decrease). 
        We say the probability of majority vote mechanism outputting $\accept$ in state $\stateH$ will not increase a lot, i.e., $\mu(\lambda^\stateH_\accept(\Sigma') - \lambda^\stateH_\accept(\Sigma)) \le 2B\exp(-2c^2\lfloor\alpha n\rfloor)$, because the term $\mu(\lambda^\stateH_\accept(\Sigma') - \lambda^\stateH_\accept(\Sigma))$ is bounded by $\mu(1 - \lambda^\stateH_\accept(\Sigma))$, whose value, by Proposition~\ref{prop: the majority wins with high prob}, is less than $2\exp(-2c^2\lfloor\alpha n\rfloor)$. 
        Notice that $1 \le \Delta v_1^\stateL, \Delta v_2^\stateH \le B$ hold for both types of agents. According to Equation (\ref{eq: typeH ex-ante utility diff}), for type-$\typeH$ agent $i_1$ to benefit, the change amount of utility in state $\stateL$, represented as $(1-\mu)(\lambda^\stateL_\accept(\Sigma') -\lambda^\stateL_\accept(\Sigma))\Delta v_{1}^\stateL$, must not exceed $2B\exp(-2c^2\lfloor\alpha n\rfloor)$. Substituting the inequality relationship into Equation~(\ref{eq: typeL ex-ante utility diff}), we can derive that the ex-ante utility difference for the type-$\typeL$ agent $i_2$ will be less than $2B^2 \exp(-2c^2\lfloor\alpha n\rfloor)$.
        \begin{align*}
            \Delta u_{2}(\Sigma, \Sigma') &< (1-\mu)(\lambda^\stateL_\accept(\Sigma') -\lambda^\stateL_\accept(\Sigma))\Delta v_{2}^\stateL\\
            &= (1-\mu)(\lambda^\stateL_\accept(\Sigma') -\lambda^\stateL_\accept(\Sigma))\Delta v_{1}^\stateL\cdot \frac{\Delta v_{2}^\stateL}{\Delta v_{1}^\stateL}\\
            &\le 2B \exp(-2c^2\lfloor\alpha n\rfloor) \cdot B\\
            & = 2B^2 \exp(-2c^2\lfloor\alpha n\rfloor).
        \end{align*}

        \item If $\lambda^\stateH_\accept(\Sigma') -\lambda^\stateH_\accept(\Sigma) < 0$, it holds that $\lambda^\stateL_\accept(\Sigma')- \lambda^\stateL_\accept(\Sigma) < 0$. Similar to the previous analysis, the utility increment of type-$\typeH$ agent $i_1$ in state $\stateL$ is small: $$(1-\mu)|\lambda^\stateL_\accept(\Sigma') -\lambda^\stateL_\accept(\Sigma)|\cdot \Delta v_{1}^\stateL < 2B\exp(-2c^2\lfloor\alpha n\rfloor).$$
        To prevent the overall ex-ante utility of type-$\typeH$ agent $i_1$ from decreasing, it must hold that $$-2B\exp(-2c^2\lfloor\alpha n\rfloor) < \mu(\lambda^\stateH_\accept(\Sigma') - \lambda^\stateH_\accept(\Sigma))\Delta v_{1}^\stateH < 0.$$
        Consequently, for a type-$\typeL$ agent $i_2$, his/her utility increment is small:
        $$\Delta u_{2} < -\mu(\lambda^\stateH_\accept(\Sigma') - \lambda^\stateH_\accept(\Sigma))\Delta v_{2}^\stateH < 2B^2\exp(-2c^2\lfloor\alpha n\rfloor) \le \epsilon.$$
    \end{itemize}
    In both cases, Condition (2) is not satisfied.
\end{proof}    
\end{toappendix}

Proposition~\ref{prop: no win-win in majority vote} is non-trivial: while agents may exhibit antagonistic preferences in \emph{ex-post}, this antagonism may not appear in \emph{ex-ante}. Agents' different sensitivities in the two states contribute to this complexity. An illustrative example is provided in Example~\ref{example: ex-ante utility}.

To explain the example, we define some notations.
Let $\lambda^\stateL_\accept(\Sigma)$ be the probability that the mechanism outputs alternative $\accept$ in state $\stateL$ under agents' strategy profile $\Sigma$. 
Similarly, $\lambda^\stateH_\accept(\Sigma)$ refers to the corresponding probability in state $\stateH$.
Denote the ex-ante utility difference $u_{i}(\Sigma') - u_{i}(\Sigma)$ for agent $i$ in shifting from profile $\Sigma$ to profile $\Sigma'$ as $\Delta u_i(\Sigma, \Sigma')$. 
Also, we denote the ex-post utility difference in state $\stateH$ for $i$ as $\Delta v_{i}^\stateH = |v_{i}(\stateH, \accept) - v_{i}(\stateH, \reject)|$, and similarly for state $\stateL$.
The following two equations demonstrate the relationship between ex-ante and ex-post utility for both types of agents.
\begin{itemize}
    \item For agent $i$ with type $t_i = \typeH$,
    \begin{align}\label{eq: typeH ex-ante utility diff}
        \Delta u_{i}(\Sigma, \Sigma') = \mu\left(\lambda^\stateH_\accept(\Sigma') - \lambda^\stateH_\accept(\Sigma)\right)\Delta v_{i}^\stateH - (1-\mu)\left(\lambda^\stateL_\accept(\Sigma') -\lambda^\stateL_\accept(\Sigma)\right)\Delta v_{i}^\stateL.
    \end{align}
    \item For agent $i$ with type $t_i = \typeL$,
    \begin{equation}\label{eq: typeL ex-ante utility diff}
             \Delta u_{i}(\Sigma, \Sigma') = (1-\mu)\left(\lambda^\stateL_\accept(\Sigma') -\lambda^\stateL_\accept(\Sigma)\right)\Delta v_{i}^\stateL - \mu\left(\lambda^\stateH_\accept(\Sigma') - \lambda^\stateH_\accept(\Sigma)\right)\Delta v_{i}^\stateH. 
    \end{equation}
\end{itemize}

\begin{example}[Agreement of Ex-Ante Benefit]\label{example: ex-ante utility}
Consider a scenario where both states are equally probable (i.e., $\mu = \frac{1}{2}$). The signal distribution is shown in Table~\ref{tab: signal distribution}. We have three agents, two of type-$\typeH$ and one of type-$\typeL$. Their utilities are detailed in Table~\ref{tab: H-type utility} and Table~\ref{tab: L-type utility}.

\begin{table}[h]
    \centering
    \begin{minipage}{.45\linewidth}
        \centering
        \begin{tabular}{ccc}
        \hline
             &  Accept & Reject\\
        \hline
             State $\stateH$ &  1 & 0\\
        \hline
             State $\stateL$ & 0 & 5\\
        \hline
        \end{tabular}
        \caption{Utility of Type-$\typeH$ Agents}
        \label{tab: H-type utility}
    \end{minipage}%
    \begin{minipage}{.45\linewidth}
        \centering
        \begin{tabular}{ccc}
        \hline
             &  Accept & Reject\\
        \hline
             State $\stateH$ & 0 & 5\\
        \hline
             State $\stateL$ & 1 & 0\\
        \hline
        \end{tabular}
        \caption{Utility of Type-$\typeL$ Agents}
        \label{tab: L-type utility}
    \end{minipage}
\end{table}

    Consider the following two strategy profiles:
    \begin{itemize}
        \item Strategy Profile $\Sigma$: All agents vote for $\accept$ on signal $\signalH$, and $\reject$ on signal $\signalL$.
        
        \item Strategy Profile $\Sigma'$: All agents adopt an uninformative strategy. They vote for $\accept$ with a probability of $\frac13$ and vote for $\reject$ otherwise.
    \end{itemize}
    
    Under strategy profile $\Sigma$, the probability of the majority vote mechanism outputting $\accept$ in state $\stateL$ is $\lambda_\accept^\stateL(\Sigma) = \frac{1}{2}$ and the probability of outputting $\accept$ in state $\stateH$ is $\lambda_\accept^\stateH(\Sigma) = \frac{27}{32}$. 

    Under strategy profile $\Sigma'$, the probability of the majority vote mechanism outputting $\accept$ in both states is $\lambda_\accept^\stateH(\Sigma') = \lambda_\accept^\stateH(\Sigma') = \frac{7}{27}$.

    In the shifting from $\Sigma$ to $\Sigma'$, the change in the probability of outputting $\accept$ in state $\stateL$ is $(1-\mu)(\lambda^\stateL_\accept(\Sigma') - \lambda^\stateL_\accept(\Sigma)) = \frac{1}{2}\times(\frac{7}{27} - \frac{1}{2})\approx -0.12$.
    That change in state $\stateH$ is $\mu(\lambda^\stateH_\accept(\Sigma') - \lambda^\stateH_\accept(\Sigma)) = \frac{1}{2}\times(\frac{7}{27} - \frac{27}{32})\approx -0.29$.
    
For an agent $i_1$ of type-$\typeH$, by Equation~(\ref{eq: typeH ex-ante utility diff}), this shift yields an ex-ante utility difference of $\Delta u_{i_1}(\Sigma, \Sigma') \approx -0.29 \times 1 - (-0.12)\times 5 = 0.31 > 0$.
For an agent $i_2$ of type-$\typeL$, by Equation~(\ref{eq: typeL ex-ante utility diff}), the ex-ante utility difference is $\Delta u_{i_2}(\Sigma, \Sigma') \approx -0.12\times1 - (-0.29) \times 5 = 1.33 > 0$.
Both types of agents find it beneficial to accept the change in strategy, despite their antagonistic preferences. \qed
\end{example}

\begin{lemmarep}\label{lem: BNE2majoritywish}
   Any $\epsilon$-strong Bayes Nash equilibrium in the majority vote mechanism leads to the informed majority decision with a probability of at least $1 - 2\epsilon$, where $\epsilon = B\exp(-2c^2n)$ and constant $c$ is defined as $\frac{1}{2}(2\alpha - 1)(M - \frac{1}{2})$.
\end{lemmarep}

The proof idea of Lemma~\ref{lem: BNE2majoritywish} is analogous to Lemma~\ref{lem: BNE to majority wish}, with Hoeffding’s inequality applied to bound the probability of bad cases.

\begin{proof}
    For contradiction, we assume the probability of the informed majority decision failing is more than $2B\exp(-2c^2n)$, i.e., $\mu\lambda^\stateH_\reject(\Sigma) + (1-\mu)\lambda^\stateL_\accept(\Sigma) \ge 2B\exp(-2c^2 n)$. We challenge this by showing that type-$\typeH$ agents can significantly improve their utility. 

    Without loss of generality, we suppose the minority type-$\typeL$ agents take a symmetric strategy $(\delta_\signalL, \delta_\signalH)$. In this case, the majority type-$\typeH$ agents can take a mirror strategy, i.e., $(-\delta_\signalL, -\delta_\signalH)$ with probability $\frac{1-\alpha}{\alpha}$, and the optimal strategy $(\delta_\signalL^*, \delta_\signalH^*)$ for the rest of the time to almost dominate the game. In this way, the expected fraction of votes for the informed majority decision in both states will be $(2\alpha - 1)(M - \frac{1}{2}) + \frac{1}{2}$.
         Using Hoeffding's inequality, we can see that the probability of the informed majority decision failing is less than $\exp(-2c^2 n)$ in both states. By Equation~(\ref{eq: typeH utility diff}), this deviation leads to an increment in type-$\typeH$ agents' utilities of at least $2B\exp(-2c^2 n) - B\exp(-2c^2 n) = B \exp(-2c^2 n) = \epsilon$, implying that type-$\typeH$ agents have a strong incentive to deviate from $\Sigma$. Therefore, if a strategy profile $\Sigma$ fails the informed majority decision with probability more than $2\epsilon$, such a strategy profile $\Sigma$ is not an $\epsilon$-strong Bayes Nash equilibrium.
\end{proof}

\begin{lemmarep}\label{lem: no strong BNE for finite n}
For configuration $\config$ where the majority proportion $\alpha \le \frac1{2M}$, no $\epsilon$-strong Bayes Nash equilibrium exists in the majority vote mechanism. Here, $\epsilon$ is defined as $\epsilon = \frac{1}{4}\min\{\mu, 1-\mu\} - 2B^2\exp(-2c^2n)$ and constant $c$ is $\frac{1}{2}(2\alpha - 1)\min\{M - \frac{1}{2}, \frac{1}{2} - \alpha M\}$.
\end{lemmarep}
\begin{proof}
     Still, suppose type-$\typeH$ is the majority type. Without loss of generality, we consider the case that agents with the same type use symmetric strategies. If agents with the same type take asymmetric strategies, there exists a symmetric strategy imposing the same effect on the votes.
    
   To validate this lemma, we rewrite agents' ex-ante utility difference $\Delta u_t(\Sigma, \Sigma') = u_t(\Sigma') - u_t(\Sigma)$ using the probability that the informed majority decision fails in both states (i.e., the probability that the majority vote outputs alternative $\reject$ in state $\stateH$, denoted as $\lambda_\reject^\stateH$, and the probability that alternative $\accept$ wins in state $\stateL$, denoted as $\lambda_\accept^\stateL$). The relationship of these amounts is demonstrated as follows:
   \begin{itemize}
    \item For agent $i$ with type $t_i = \typeH$,
    \begin{align}
        \Delta u_{i}(\Sigma, \Sigma') = [\mu\lambda^\stateH_\reject(\Sigma)\Delta v_{i}^\stateH + (1-\mu)\lambda^\stateL_\accept(\Sigma)\Delta v_{i}^\stateL] - [\mu\lambda^\stateH_\reject(\Sigma')\Delta v_{i}^\stateH + (1-\mu)\lambda^\stateL_\accept(\Sigma')\Delta v_{i}^\stateL].\label{eq: typeH utility diff}
    \end{align}
    
    \item For agent $i$ with type $t_i = \typeL$,
    \begin{align}
        \Delta u_{i}(\Sigma, \Sigma') = [\mu\lambda^\stateH_\reject(\Sigma')\Delta v_{i}^\stateH + (1-\mu)\lambda^\stateL_\accept(\Sigma')\Delta v_{i}^\stateL] - [\mu\lambda^\stateH_\reject(\Sigma)\Delta v_{i}^\stateH + (1-\mu)\lambda^\stateL_\accept(\Sigma)\Delta v_{i}^\stateL].\label{eq: typeL utility diff}
    \end{align}
   \end{itemize}
    We show any strategy profile $\Sigma$ cannot form an $\epsilon$-strong Bayes Nash equilibrium based on the value of $\mu\lambda^\stateH_\reject(\Sigma) + (1-\mu)\lambda^\stateL_\accept(\Sigma)$.
    \begin{itemize}
        \item If $\mu\lambda^\stateH_\reject(\Sigma) + (1-\mu)\lambda^\stateL_\accept(\Sigma) \ge \frac{1}{4}\min\{\mu, 1-\mu\}$, type-$\typeH$ agents can significantly improve their utility. By adopting a strategy that mirrors the type-$\typeL$ agents' strategy (i.e., take strategy $(-\delta_\signalL, -\delta_\signalH)$ if type-$\typeL$ agents take symmetric strategy $(\delta_\signalL, \delta_\signalH)$) with probability $(1-\alpha)/\alpha$, and adopt the optimal strategy $(\delta_\signalL^*, \delta_\signalH^*)$ for the rest of the time, the expected fraction of votes for the informed majority decision in both states will be $(2\alpha - 1)(M - \frac{1}{2}) + \frac{1}{2}$.
         Using Hoeffding's inequality, we can see that the probability of the informed majority decision failing is less than $\exp(-2c^2 n)$ in both states. By Equation~(\ref{eq: typeH utility diff}), this strategy leads to an increment in utility of at least $\frac{1}{4}\min\{\mu, 1-\mu\} - B\exp(-2c^2 n) = \epsilon$, indicating that type-$\typeH$ agents have a strong incentive to deviate from $\Sigma$.
        \item If $\mu\lambda^\stateH_\reject(\Sigma) + (1-\mu)\lambda^\stateL_\accept(\Sigma) < \frac{1}{4}\min\{\mu, 1-\mu\}$, we show type-$\typeL$ agents can benefit from deviating. Given that $\alpha M \le \frac{1}{2}$, no matter what strategy $(\delta_\signalL, \delta_\signalH)$ type-$\typeH$ agents adopt, either $\alpha p_\accept^\stateH(\delta_\signalL, \delta_\signalH)\le \frac{1}{2}$ or $\alpha p_\reject^\stateL(\delta_\signalL, \delta_\signalH)\le \frac{1}{2}$. In both scenarios, type-$\typeL$ agents can strategically vote to reduce the approval rate of informed majority decision (by always voting for $\accept$ if $\alpha p_\accept^\stateH(\delta_\signalL, \delta_\signalH)\le \frac{1}{2}$ or always voting for $\reject$ when $\alpha p_\reject^\stateL(\delta_\signalL, \delta_\signalH)\le \frac{1}{2}$). Again using the Hoeffding’s inequality, we can see that there exists one state such that the minority agents' wish wins for more than half of the chance, thus $\mu\lambda^\stateH_\reject(\Sigma') + (1-\mu)\lambda^\stateL_\accept(\Sigma') \ge \frac{1}{2}\min\{\mu, 1-\mu\}$. Consequently, by Equation~(\ref{eq: typeL utility diff}) the utility increment of type-$\typeL$ agents is at least $\frac{1}{2}\min\{\mu, 1-\mu\} - \frac{1}{4}\min\{\mu, 1-\mu\} = \frac{1}{4}\min\{\mu, 1-\mu\}$, which is above $\epsilon$.
    \end{itemize}
    In summary, we can conclude that no strategy profile $\Sigma$ can form an $\epsilon$-strong Bayes Nash Equilibrium in this setting. 
\end{proof}

Intuitively, in this case, where the group sizes of both types are relatively balanced, no matter to which extent the informed majority decision wins, at least one group of agents will deviate for benefit. If the informed majority decision fails with a non-negligible probability, the majority-type agents will collude to improve the outcome. On the other hand, if the informed majority decision almost always wins, the minority-type agents can find a strategy to vote against the majority agents and make their favored alternative win in one state.

We defer the proof of Lemmas~\ref{lem: BNE2majoritywish} and \ref{lem: no strong BNE for finite n} to the appendix.

\section{Results of Truthful Mechanism Design}

In this section, we study the threshold for the majority proportion to ensure the informed majority decision in general \emph{anonymous} mechanisms, where we say that a mechanism is \emph{anonymous} if it always outputs the same (distribution of) alternative(s) for any report profiles $\vek{r}_1,\vek{r}_2\in \mathcal{R}^n$ such that $\vek{r}_1$ is a permutation of $\vek{r}_2$. We design an easy-to-employ mechanism and establish a more attainable threshold $\theta^\ast$.
This mechanism has the lowest requirement for the majority proportion: when $\alpha$ is below $\theta^\ast$, the informed majority decision cannot always be extracted in any anonymous mechanism, even as $n \to \infty$. 



\begin{theorem}\label{thm:ourMechanism}
There exists an anonymous mechanism $\mechanism$ with a critical threshold $\thetaOurs = \frac1{\Delta + 1}$ for the majority proportion such that
\begin{itemize}
    \item If majority proportion $\alpha$ exceeds this threshold $\thetaOurs$, i.e., $\alpha > \thetaOurs$, then 
    truthful reporting becomes an $\epsilon(n)$-strong Bayes Nash equilibrium and this equilibrium ensures the informed majority decision with probability at least $p(n)$, where $\epsilon(n) \to 0$ and $p(n) \to 1$ as $n \to \infty$.
    
\end{itemize}
\end{theorem}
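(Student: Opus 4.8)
The plan is to exhibit a single anonymous mechanism that \emph{self-calibrates} from the reports. Each agent is asked to report a triple: its signal $s_i\in\{\signalL,\signalH\}$, its type $t_i\in\{\typeL,\typeH\}$ (equivalently its preferred alternative in each state), and a prediction of the empirical frequency of signal $\signalH$ among the others. Because the mechanism knows no instance parameter, it uses the predictions---in the spirit of the surprisingly-popular elicitation---to learn the signal structure, in particular the gap $\Delta=P_\signalH^\stateH-P_\signalH^\stateL$ and the value $P_\signalH^\stateH$, and then declares the world state $\hat\omega$ by comparing the reported $\signalH$-frequency against the threshold $\tau=\frac{P_\signalH^\stateH}{1+\Delta}=\thetaOurs\cdot P_\signalH^\stateH$. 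Finally it reads off the majority type from the reported types and outputs that type's preferred alternative in state $\hat\omega$. Note already that $\tau$ depends only on the signal parameters and not on $\alpha$, so one fixed rule can serve every admissible configuration.

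I would first check correctness under truthful reporting. One verifies $P_\signalH^\stateL<\tau<P_\signalH^\stateH$ (both reduce to $0<P_\signalH^\stateL<1$ and $\Delta>0$), so by a Chernoff bound the empirical $\signalH$-frequency concentrates on the correct side of $\tau$ and $\hat\omega=\omega$ except with probability exponentially small in $n$; since the reported types identify the majority correctly, the output is the informed majority decision with probability $p(n)\to1$. I would then argue truthful reporting is an $\epsilon(n)$-strong equilibrium by ruling out profitable coalitions of each shape. A majority-only coalition cannot gain more than the (exponentially small) failure probability, since truthful play already secures its favored alternative. A minority-only coalition is powerless because of the threshold: in state $\stateH$ the truthful majority alone drives the $\signalH$-frequency to about $\alpha P_\signalH^\stateH$, while in state $\stateL$ a fully adversarial minority can raise it to at most $\alpha P_\signalH^\stateL+(1-\alpha)$, and
\[
\alpha P_\signalH^\stateH>\tau>\alpha P_\signalH^\stateL+(1-\alpha)\quad\Longleftrightarrow\quad \alpha\Delta>1-\alpha\quad\Longleftrightarrow\quad \alpha>\tfrac{1}{\Delta+1}=\thetaOurs .
\]
Thus for $\alpha>\thetaOurs$ the declared state stays correct with high probability (again via Chernoff), and since a minority of size $1-\alpha<\tfrac12$ can never flip the reported-type majority either, its utility gain is exponentially small; the displayed chain also shows $\tau$ sits in this robustness gap for \emph{every} $\alpha>\thetaOurs$.

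For mixed coalitions containing both types I would reuse a no-win-win argument in the spirit of Proposition~\ref{prop: no win-win in majority vote}: because truthful play already delivers the informed majority decision with probability approaching $1$, any outcome change that strictly benefits a minority member ex-ante must strictly hurt a majority member, so no joint deviation can improve a member of each type by more than $\epsilon(n)$. Converting every expected-frequency statement above into a high-probability one via Chernoff and Hoeffding bounds then yields the claimed $\epsilon(n)\to0$ and $p(n)\to1$.

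The main obstacle is the robustness of the self-calibration step. Since the mechanism must \emph{learn} $\tau$ from reports and the minority may falsify its predictions (as well as its signals and types), I must show the minority cannot shift the estimated threshold enough to flip $\hat\omega$. The leverage is that the truthful majority is a strict majority ($\alpha>\tfrac12$), so a robust aggregate of the reported predictions pins the estimate of $(P_\signalH^\stateH,\Delta)$ near the truth; combined with the tight count $\alpha P_\signalH^\stateH$ versus $\alpha P_\signalH^\stateL+(1-\alpha)$, the inference survives exactly when $\alpha>\thetaOurs$, which is also why the threshold is sharp.
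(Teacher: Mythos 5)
Your overall architecture matches the paper's: your threshold $\tau=\frac{P_\signalH^\stateH}{1+\Delta}$ is identical to the paper's $\hat\delta=\frac{P_\signalL^\stateL}{P_\signalL^\stateL+P_\signalH^\stateH}$ (just phrased for the $\signalH$-count rather than the $\signalL$-count, using $P_\signalL^\stateL+P_\signalH^\stateH=1+\Delta$), your two inequality chains are exactly Proposition~\ref{claim:frequency}, your coalition case analysis (majority-only, minority-only, mixed via a no-win-win argument) mirrors Lemma~\ref{lem: strong nash equilibrium} and Proposition~\ref{prop: no win-win in majority vote}, and your equivalence $\alpha P_\signalH^\stateH>\tau>\alpha P_\signalH^\stateL+(1-\alpha)\iff\alpha>\frac{1}{\Delta+1}$ is correct and is precisely why the threshold is what it is.

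The one genuine gap is in the elicitation/calibration step. You ask each agent for a single scalar --- a prediction of the empirical $\signalH$-frequency among the others --- and claim the mechanism can ``learn'' both $P_\signalH^\stateH$ and $\Delta$ from a robust aggregate of these. It cannot: a truthful agent's prediction is $\Pr[S_j=\signalH\mid S_i=s]=P_\signalH^\stateH\Pr[\omega=\stateH\mid S_i=s]+P_\signalH^\stateL\Pr[\omega=\stateL\mid S_i=s]$, so even after separating the two signal groups you have two equations in three unknowns ($P_\signalH^\stateH$, $P_\signalH^\stateL$, and $\mu$, which drives the posteriors), and the threshold $\tau$ is not identified. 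Moreover, since truthful predictions differ across signal groups, a plain median of them does not converge to anything useful. The paper sidesteps this entirely in its main mechanism by directly eliciting the signal-independent quantity $\frac{P_\signalL^\stateL}{P_\signalL^\stateL+P_\signalH^\stateH}$ from every agent and taking the median (so a truthful strict majority pins it exactly); its appendix shows that if you insist on prediction-style questions, you need both the prediction under the received signal \emph{and} under the counterfactual signal, plus the posterior on the world state, to solve for $\Delta$ and $P_\signalL^\stateL$. Your construction is fixable by adding those questions (or by eliciting $\tau$ directly), but as written the self-calibration step you flag as ``the main obstacle'' is indeed not resolved.
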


\begin{remark}
Our proposed mechanism provides a solution that aggregates the informed majority decision when agents truthfully report. However, there is no guarantee of agents' truthful reporting when $\alpha \le \thetaOurs$.  
Our impossibility results in Section~\ref{section: impossibility results} show that, in such cases, under some mild assumptions, any anonymous mechanism cannot lead to an $\epsilon$-strong Bayes Nash equilibrium that aggregates the informed majority decision. 
This complements the positive result in Theorem~\ref{thm:ourMechanism}.
\end{remark}


The threshold of our mechanism $\thetaOurs$ is generally more attainable than the threshold of the majority vote mechanism $\thetaMajority$. In other words, the inequality $\frac1{2M} \ge \frac1{\Delta + 1}$ always holds with the equality occurring when $P_\signalH^\stateL + P_\signalH^\stateH = 1$. The proof of this finding is provided below.

\begin{proposition}\label{obs: threshold ineq}

  For any configuration $\config$, it holds that $\thetaMajority \ge \thetaOurs$, i.e., $\frac1{2M} \ge \frac1{\Delta + 1}$. Particularly, the equality holds when $ P_\signalH^\stateL + P_\signalH^\stateH = 1$.
\end{proposition}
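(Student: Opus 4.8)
The plan is to clear the reciprocals. Since $M>\tfrac12$ (it is the optimal vote share of the informed majority decision, shown to exceed one half) and $\Delta>0$, both $2M$ and $\Delta+1$ are strictly positive, so the claimed inequality $\frac{1}{2M}\ge\frac{1}{\Delta+1}$ is equivalent to the polynomial inequality $2M\le\Delta+1$. I would prove this reduced form directly. To keep the algebra transparent, I would set $a:=P_\signalH^\stateL$ and $b:=P_\signalH^\stateH$, use $P_\signalL^\stateL=1-a$ and $P_\signalL^\stateH=1-b$, and re-express everything through the sum $s:=a+b$ and the gap $d:=b-a=\Delta>0$, noting $a=(s-d)/2$ and $b=(s+d)/2$ and $P_\signalL^\stateL+P_\signalL^\stateH=2-s$.

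The argument then splits along the same case distinction that defines $M$. When $P_\signalH^\stateL+P_\signalH^\stateH\le 1$, i.e.\ $s\le 1$, we have $M=\frac{P_\signalL^\stateL}{P_\signalL^\stateL+P_\signalL^\stateH}=\frac{1-a}{2-s}$ with $2-s>0$, so multiplying $2M\le\Delta+1$ through by $2-s$ gives $2(1-a)\le(1+d)(2-s)$. Expanding $2(1-a)=2-s+d$ and $(1+d)(2-s)=2-s+2d-ds$, I expect this to collapse to the single clean statement $d(1-s)\ge 0$, which holds because $d=\Delta>0$ and $s\le 1$, with equality precisely at $s=1$. When $P_\signalH^\stateL+P_\signalH^\stateH> 1$, i.e.\ $s>1$, we instead have $M=\frac{b}{s}=\frac{s+d}{2s}$, so $2M=1+\frac{d}{s}$ and the inequality $2M\le 1+d$ reduces to $\frac{d}{s}\le d$, i.e.\ $s\ge 1$, which is exactly the (strict) case hypothesis.

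Combining the two cases yields $\thetaMajority\ge\thetaOurs$ for every configuration, and the equality analysis isolates $s=P_\signalH^\stateL+P_\signalH^\stateH=1$ as the unique boundary where equality is attained: the first case gives equality only at $s=1$, and the second is strict throughout. I would also record that the two formulas for $M$ agree at $s=1$ (both equal $1-a$), so the result is consistent across the case boundary. The computation is elementary; the only point requiring genuine care is the bookkeeping across the two-piece definition of $M$ — in particular verifying that the equality witness $s=1$ of the first branch sits exactly on, and is excluded by the strict inequality of, the second branch, so that equality holds nowhere else.
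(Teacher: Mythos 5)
Your proposal is correct and follows essentially the same route as the paper: the paper also reduces to $2M\le\Delta+1$ and parametrizes by the average $k=\tfrac12(P_\signalH^\stateL+P_\signalH^\stateH)$ (your $s/2$), obtaining $2M=1+\frac{\Delta}{2(1-k)}$ or $1+\frac{\Delta}{2k}$ in the two branches and concluding with equality exactly at $k=\tfrac12$, i.e.\ $s=1$. Your algebra in both cases checks out, so this is just a change of variable away from the paper's argument.
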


\begin{proof}

It suffices to show $2M \le \Delta +1$.

Let $k$ denote the average occurrence rate of signal $\signalH$ across both worlds, i.e., $k = \frac{1}{2}( P_\signalH^\stateL + P_\signalH^\stateH)$. Using $\Delta$ and $k$, we can rewrite the signal frequencies as: 
\begin{align*}
    &P_\signalH^\stateH = k + \Delta/2, &P_\signalL^\stateH = 1-k - \Delta/2,\\
    &P_\signalH^\stateL = k - \Delta/2, &P_\signalL^\stateL = 1-k + \Delta/2.
\end{align*}

By Equation~(\ref{eqn:M}), the value of $M$ is $\frac{1-k + \Delta/2}{2(1-k)}$ for $k \le \frac{1}{2}$ and $\frac{k + \Delta/2}{2k}$ for $k > \frac{1}{2}$.

Thus, we know that $2M$ is either $1 + \frac{\Delta}{2(1 - k)}$ (for $k \le \frac{1}{2}$) or $1 + \frac{\Delta}{2k}$ (for $k > \frac{1}{2}$). In both cases, the inequality $2M \le \Delta + 1$ holds, and the equation holds when $k = \frac{1}{2}$.
\end{proof}

\subsection{Our Mechanism}




Our mechanism draws inspiration from Schoenebeck and Tao's mechanism~\cite{schoenebeck2021wisdom}. This work extends Prelec et al.'s surprisingly popular algorithm~\cite{prelec2017solution} to a truthful mechanism, where agents' posterior estimations of the signal frequencies serve as a threshold to distinguish two world states. 
We follow this idea while a nuanced threshold value is elicited. 

\begin{framed}
 \noindent\textbf{Our Mechanism}\\
 \rule{\textwidth}{0.5pt}
\begin{enumerate}
\item \textbf{Information Collection}.\\
Each agent $i$ reports his/her private type ($\typeL$ or $\typeH$),  the received signal ($\signalL$ or $\signalH$), and a threshold value $\hat{\delta}_i$ defined as $\frac{P_{\signalL}^\stateL}{P_{\signalL}^\stateL + P_{\signalH}^\stateH}$.

\item \textbf{Majority Type Identification}.\\
Determine the majority-type, $\hat{M}$, from the agents' reports.
\item \textbf{Threshold Calculation}.\\
Calculate the collective threshold, $\hat{\delta}$, as the median of the reported $\hat{\delta}_i$ values.
\item \textbf{State Assessment}. \\
 Determine the true world state, $\hat{\omega}$, based on whether the reporting frequency of $\signalL$ exceeds the threshold $\hat{\delta}$. Set $\hat{\omega} = \stateL$ if the threshold is exceeded; $\hat{\omega} = \stateH$ otherwise.

\item \textbf{Outcome Selection}. \\Choose the preferred alternative of majority-type $\hat{M}$ in world state $\hat{\omega}$.
\end{enumerate}
\end{framed}

This mechanism elicits a threshold value $\hat{\delta}$ (expected to be $\delta = \frac{P_{\signalL}^\stateL}{P_{\signalL}^\stateL + P_{\signalH}^\stateH}$). Such a threshold value may seem artificial and challenging to estimate, as it demands that agents have a precise understanding of the parameters of the information structure. However, we can obtain this value in a user-friendly way by asking agents for their posterior forecasts on the world state (e.g., $\Pr[\omega = \stateL \mid S_i = \signalL]]$) and the estimations of signals received by peers (e.g., $\Pr[S_j = \signalL \mid s_i = \signalH]$). The details are referred to in Appendix~\ref{sect: threshold elicitation}.

Below is a running example of our mechanism and Schoenebeck and Tao's mechanism~\cite{schoenebeck2021wisdom}. As is shown, ours provides a more attainable threshold on the majority proportion to guarantee the informed majority decision under minority agents' group strategic behavior.

\begin{example}[Running Example of Mechanisms]
Given the signal distribution in Table~\ref{tab: signal distribution} and a prior probability of state $\mu = \frac{1}{2}$, the thresholds for our mechanism and Schoenebeck and Tao's mechanism are calculated as follows.

If all the majority agents report truthfully, the threshold $\delta$ of ours is $\frac{P_{\signalL}^\stateL}{P_{\signalL}^\stateL + P_{\signalH}^\stateH} = \frac{0.5}{0.5 + 0.75} = \frac25$. In contrast, in Schoenebeck and Tao's mechanism, the threshold values reported by agents are their posterior estimations about the frequency of signal $\signalL$. These estimates come from agents' beliefs about the true world state, which is either $\Pr[\omega = \stateH \mid S_i = \signalL] = \frac13$ or $\Pr[\omega = \stateH \mid S_i = \signalH] = \frac35$. So the threshold values are either $\Pr[s_j = \signalL \mid S_i = \signalL] =  \frac{1}{3}\times0.25+ \frac{2}{3}\times 0.5 = \frac{5}{12}$ or $\Pr[s_j = \signalL \mid S_i = \signalH] = \frac{3}{5}\times0.25 + \frac{2}{5}\times 0.5 = \frac{7}{20}$. The collective threshold $\delta'$ is the median of all the reported thresholds, so it must lie in the range between $\frac{7}{20}$ and $\frac{5}{12}$.

When the true world state is $L$, the frequency of signal $\signalL$, which is $0.5$, lies above both the threshold $\delta$ and $\delta'$. On the other hand, the frequency of signal $\signalL$, which is $0.25$, is below the thresholds in state $\stateH$. 

The reporting frequency of $\signalL$ has the same properties as long as the proportion of the majority type, $\alpha$, is large enough. However, when the majority proportion is smaller, for example, when $\alpha = 0.84$ and $n\to \infty$, the reporting frequency of $\signalL$ in state $\stateH$ may exceed the threshold in Schoenebeck and Tao's mechanism. That is, when the minority type agents uninformatively claim they receive signal $\signalL$, 
the reporting frequency of $\signalL$, i.e., $\alpha P^\stateH_\signalL + 1-\alpha = 0.84\times 0.25 + 0.16 = 0.37$, may exceed the threshold $\delta'$, which is in the range between $\frac{7}{20}=0.35$ and $\frac{5}{12}\approx0.42$.

On the other hand, our mechanism correctly infers the true world state if $\alpha > \frac45$. In the above case, the reporting rate of signal $\signalL$ in state $\stateL$ and state $\stateH$ are $0.84\times0.5 + 0.16=0.58$ and $0.37$ respectively. Our threshold $\delta = \frac{2}{5}$ succeeds in distinguishing two worlds.
\qed
\end{example}


The following properties of the expected threshold $\delta$ play key roles in our mechanism.

\begin{proposition}\label{claim:frequency}
\begin{equation}
    P_{\signalL}^\stateL > \frac{P_{\signalL}^\stateL}{P_{\signalL}^\stateL + P_{\signalH}^\stateH} > P_{\signalL}^\stateH \text{ and } P_{\signalH}^\stateL < \frac{P_{\signalH}^\stateH}{P_{\signalL}^\stateL + P_{\signalH}^\stateH} < P_{\signalH}^\stateH. \label{ineq: truthful-prob-comparison}
\end{equation}
Furthermore, for $\alpha > \frac1{\Delta + 1}$, we have the following inequalities:
\begin{subequations}
\label{ineq: strategic-prob-comparison}
\begin{align}
    \alpha P_{\signalL}^\stateL &> \frac{P_{\signalL}^\stateL}{P_{\signalL}^\stateL + P_{\signalH}^\stateH} > \alpha P_{\signalL}^\stateH + (1 - \alpha) \\
    \alpha P_{\signalH}^\stateL + (1 - \alpha) &< \frac{P_{\signalH}^\stateH}{P_{\signalL}^\stateL + P_{\signalH}^\stateH} < \alpha P_{\signalH}^\stateH
\end{align}
\end{subequations}
\end{proposition}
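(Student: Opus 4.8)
The plan is to reduce every one of the six inequalities to one of two elementary facts: the consequence $P_\signalL^\stateL + P_\signalH^\stateH > 1$ of positive correlation, and the identity $P_\signalL^\stateL + P_\signalH^\stateH = 1 + \Delta$. The latter follows immediately from the law of total probability $P_\signalL^\stateH + P_\signalH^\stateH = 1$ together with the definition $\Delta = P_\signalL^\stateL - P_\signalL^\stateH$, since $P_\signalL^\stateL + P_\signalH^\stateH = (P_\signalL^\stateH + \Delta) + P_\signalH^\stateH = 1 + \Delta$. In particular $\thetaOurs = \frac1{\Delta+1} = \frac{1}{P_\signalL^\stateL + P_\signalH^\stateH}$, which is the coincidence that makes the threshold hypothesis $\alpha > \thetaOurs$ so clean to use. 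Throughout I write $\tau = \frac{P_\signalL^\stateL}{P_\signalL^\stateL + P_\signalH^\stateH}$ for the expected threshold; note the companion fraction $\frac{P_\signalH^\stateH}{P_\signalL^\stateL + P_\signalH^\stateH} = 1 - \tau$, so the two ``signal-$\signalH$'' chains in each display are exact mirror images of the two ``signal-$\signalL$'' chains and need not be argued separately.

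For the truthful comparison \eqref{ineq: truthful-prob-comparison}, I would first observe that positive correlation gives $P_\signalL^\stateL > P_\signalL^\stateH = 1 - P_\signalH^\stateH$, hence $P_\signalL^\stateL + P_\signalH^\stateH > 1$. The left inequality $P_\signalL^\stateL > \tau$ is then just $P_\signalL^\stateL + P_\signalH^\stateH > 1$ after cancelling the common positive factor $P_\signalL^\stateL$. For the right inequality $\tau > P_\signalL^\stateH$, clearing the positive denominator and substituting $P_\signalL^\stateH = 1 - P_\signalH^\stateH$ reduces it to $P_\signalH^\stateH\,(P_\signalL^\stateL + P_\signalH^\stateH - 1) > 0$, which is again exactly $P_\signalL^\stateL + P_\signalH^\stateH > 1$. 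The two inequalities for signal $\signalH$ follow by replacing $\tau$ with $1 - \tau$ and invoking the same fact.

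For the strategic comparison \eqref{ineq: strategic-prob-comparison}, the hypothesis $\alpha > \thetaOurs$ is used directly. After cancelling $P_\signalL^\stateL$, the inequality $\alpha P_\signalL^\stateL > \tau$ becomes $\alpha > \frac{1}{P_\signalL^\stateL + P_\signalH^\stateH} = \thetaOurs$. For $\tau > \alpha P_\signalL^\stateH + (1 - \alpha)$ I would rewrite the right-hand side as $1 - \alpha P_\signalH^\stateH$ and use $1 + \Delta - P_\signalL^\stateL = P_\signalH^\stateH$ to collapse the inequality to $\alpha P_\signalH^\stateH > \frac{P_\signalH^\stateH}{1+\Delta}$, i.e.\ again $\alpha > \thetaOurs$. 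The two inequalities on the second line are the $\tau \mapsto 1-\tau$ mirrors of those on the first line and reduce to $\alpha > \thetaOurs$ in the same manner.

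There is no genuine obstacle here; the only thing to get right is the bookkeeping that makes each chain collapse to the single threshold condition, and the one idea worth isolating is the identity $P_\signalL^\stateL + P_\signalH^\stateH = 1 + \Delta$, which both explains why $\thetaOurs = \frac1{\Delta+1}$ is the correct threshold and lets all four strategic inequalities be read off at once. I would therefore state that identity and the mirror relation $\frac{P_\signalH^\stateH}{P_\signalL^\stateL + P_\signalH^\stateH} = 1 - \tau$ as the first two lines of the proof and then dispatch the six inequalities in a few lines each.
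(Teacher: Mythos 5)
Your proof is correct and follows essentially the same route as the paper's: both hinge on the identity $P_\signalL^\stateL + P_\signalH^\stateH = 1 + \Delta$ (so that $\thetaOurs = 1/(P_\signalL^\stateL + P_\signalH^\stateH)$), reduce each chain to the single condition $\alpha > \thetaOurs$, and dispatch the remaining inequalities via the complementation/mirror relations. The only cosmetic difference is that the paper obtains the truthful chain \eqref{ineq: truthful-prob-comparison} as the $\alpha = 1$ instance of the strategic one, while you argue it separately from $P_\signalL^\stateL + P_\signalH^\stateH > 1$ --- an equivalent fact.
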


The inequality chains in (\ref{ineq: truthful-prob-comparison}) ensure the aggregation of the informed majority decision when all the agents play truthfully. It says that 
the expected fraction of agents receiving signal $\signalL$ in state $\stateL$, which is $P_\signalL^\stateL$, is above our threshold $\delta$; 
that fraction in state $\stateH$, which is $P_\signalL^\stateH$, falls below threshold $\delta$.
These properties ensure the correct state assessment in our mechanism.

The inequality chains in (\ref{ineq: strategic-prob-comparison}) are the key for the truthful-telling profile to be an approximate strong Bayes Nash equilibrium.
Intuitively, it says that, when $\alpha$ is sufficiently large, the minority agents cannot invalidate \eqref{ineq: truthful-prob-comparison} by voting for one alternative deterministically. 

\begin{proof}
    It suffices to demonstrate that $\alpha P_{\signalL}^\stateL > \frac{P_{\signalL}^\stateL}{P_{\signalL}^\stateL + P_{\signalH}^\stateH}$ and $\frac{P_{\signalH}^\stateH}{P_{\signalL}^\stateL + P_{\signalH}^\stateH} < \alpha P_{\signalH}^\stateH$ hold for $1/(\Delta + 1) < \alpha \le 1$. 
    The rest inequalities follow directly from them because each term is 1 minus a corresponding term, for example, $\alpha  P_{\signalL}^\stateH + (1 - \alpha)$ is $1- \alpha P_{\signalH}^\stateH$.

Recall that $\Delta = P_{\signalL}^\stateL - P_{\signalL}^\stateH = P_{\signalH}^\stateH - P_{\signalH}^\stateL$, it can be derived that $P_{\signalL}^\stateL + P_{\signalH}^\stateH = P_{\signalL}^\stateL + P_{\signalH}^\stateL - P_{\signalH}^\stateL + P_{\signalH}^\stateH = 1 + \Delta$.
Thus, we have $\alpha P_{\signalL}^\stateL >\frac{P_{\signalL}^\stateL}{\Delta + 1}= \frac{P_{\signalL}^\stateL}{P_{\signalL}^\stateL + P_{\signalH}^\stateH}$ and $\frac{P_{\signalH}^\stateH}{P_{\signalL}^\stateL + P_{\signalH}^\stateH} =\frac{P_{\signalH}^\stateH}{\Delta + 1} < \alpha P_{\signalH}^\stateH,$
which concludes the proof.
\end{proof}

According to Proposition~\ref{claim:frequency}, when $n\rightarrow\infty$, the mechanism can distinguish between the two possible world states, by comparing the reporting frequency of signal $l$ to the threshold $\delta$. That is to say, if all agents report truthfully, our mechanism aggregates the informed majority decision with probability $1$ as $n \to \infty$. 

More generally, when the number of agents is finite, truthful reporting can still lead to the informed majority decision (with a high probability) in our mechanism.
Although a gap exists between the actual and the expected signal frequencies, this gap, caused by the randomness of signal realization, will exponentially diminish due to the Chernoff bound. 

\begin{lemmarep}
\label{lem: majority wish}
If all agents report truthfully, our mechanism will output the informed majority decision with probability at least $1 - 2\exp(-2c^2n)$. The constant $c$ is defined as \[c = \frac{1}{3}\min\left\{ P_{\signalL}^\stateL - \frac{P_{\signalL}^\stateL}{P_{\signalL}^\stateL + P_{\signalH}^\stateH}, \frac{P_{\signalL}^\stateL}{P_{\signalL}^\stateL + P_{\signalH}^\stateH} -  P_{\signalL}^\stateH \right\}.\]
\end{lemmarep}

We defer the full proof for finite $n$ to the appendix.

\begin{proof}
When all the agents report truthfully, our mechanism can identify the majority type correctly. We only need to show that the true world state can be identified with a high probability.

Suppose the true world state is $\stateL$. The expected fraction of the agents receiving signal $\signalL$ is $P_{\signalL}^\stateL$. By the Chernoff bound, with probability at least $1 - 2\exp(-2c^2n)$, the fraction of agents reporting signal $\signalL$ falls within the interval $[P_\signalL^\stateL - c, P_\signalL^\stateL + c]$, which is larger than the threshold $\hat{\delta} = \frac{P_{\signalL}^\stateL}{P_{\signalL}^\stateL + P_{\signalH}^\stateH}$. The analysis for state $\stateH$ is similar. To sum up, with a probability larger than $1 - 2\exp(-2c^2n)$, the mechanism can correctly identify the world state (i.e., $\hat{\omega} = \omega$), leading to the winning of the majority's preferred alternative.
\end{proof}

\subsection{Analysis for $\alpha > \theta^*$}
Similar to the majority vote mechanism, when the majority proportion is large enough, we find that (approximate) strong equilibria exist: truthful reporting not only can lead to the informed majority decision (with a high probability) but also form an (approximate) strong equilibrium. 



\begin{lemmarep}\label{lem: strong nash equilibrium}
 For configuration $\config$ where the majority proportion \(\alpha > \frac1{\Delta + 1}\) and large enough $n$, truthful reporting forms an $\epsilon$-strong Bayes Nash equilibrium in our mechanism. Here, we define $\epsilon$ as \(\epsilon = \max \{B\exp(-2c^2\lfloor\alpha n\rfloor), 2B^2\exp(-2c^2n)\}\), and define the constant $c$ as \[c = \frac{1}{3}\min\left\{\alpha P_{\signalL}^\stateL - \frac{P_{\signalL}^\stateL}{P_{\signalL}^\stateL + P_{\signalH}^\stateH}, \frac{P_{\signalL}^\stateL}{P_{\signalL}^\stateL + P_{\signalH}^\stateH} - \left[\alpha  P_{\signalL}^\stateH + (1 - \alpha)\right]\right\}.\]
\end{lemmarep}

Below, we prove the above lemma for the case $n\rightarrow\infty$ (in which case $\epsilon=0$).
The full proof with finite $n$ is deferred to the appendix.

According to Lemma~\ref{lem: majority wish}, when $n$ tends to infinity and all agents report truthfully, our mechanism outputs the informed majority decision with probability $1$. Therefore, agents of the majority type achieve their maximum utilities and have no incentive to deviate. Coalition $D$ contains only minority agents. 

We then show the minority agents cannot alter the outcome of our mechanism:
Since the majority type fraction is more than one half ($\alpha > \frac{1}{2}$), minority agents are unable to change the identified majority type $\hat{M}$. Similarly, these agents cannot change the median $\hat{\delta}$ calculated in Step $3$.
The only thing they can influence is the fraction of agents reporting signal $\signalL$. However, in both world states, the deviations in signal fraction cannot cross $\hat{\delta}$.
        \begin{itemize}
            \item In state $\stateL$, even if all minority agents report signal $\signalH$, the fraction reporting signal $\signalL$ will be $\alpha P_{\signalL}^\stateL$.
            By the inequalities in (\ref{ineq: strategic-prob-comparison}), this lower bound of signal $\signalL$'s reporting rate, exceeds the threshold $\hat{\delta} = \frac{P_{\signalL}^\stateL}{P_{\signalL}^\stateL + P_{\signalH}^\stateH}$.  
            \item In state $\stateH$, even if all minority agents report signal $\signalL$, 
            the fraction reporting signal $\signalL$ will be $\alpha P_{\signalL}^\stateH + (1 - \alpha)$. Inequalities in (\ref{ineq: strategic-prob-comparison}) show that this upper bound falls below the threshold.
        \end{itemize} 
        
In both cases, the minority coalition cannot alter the mechanism's output. Thus, there is no beneficial coalition of deviating agents $D$, implying that truth-telling is a strong Nash equilibrium.

\begin{toappendix}
When considering a scenario with a finite agent number $n$, we first need to show that agents of different types are unlikely to cooperate.

\begin{proposition}[No Win-Win Scenario]\label{prop: no win-win}
    Let $\Sigma^*$ be the truthful reporting strategy profile. Let $i_1$ be an arbitrary type-$\typeH$ agent and $i_2$ be an arbitrary type-$\typeL$ agent. For any $\epsilon \ge 2B^2 \exp(-2c^2n)$ and any strategy profile $\Sigma'$, neither of the following conditions holds in our mechanism:
    \begin{itemize}
        \item[(1)] $u_{i_1}(\Sigma') - u_{i_1}(\Sigma^*) > \epsilon\text{~and~}u_{i_2}(\Sigma') - u_{i_2}(\Sigma^*) \ge 0$.
        \item[(2)] $u_{i_2}(\Sigma') - u_{i_2}(\Sigma^*) > \epsilon\text{~and~}u_{i_1}(\Sigma') - u_{i_1}(\Sigma^*) \ge 0$.
    \end{itemize}
\end{proposition}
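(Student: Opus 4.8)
The plan is to mirror the proof of Proposition~\ref{prop: no win-win in majority vote}, taking advantage of the fact that the two ex-ante utility-difference identities~(\ref{eq: typeH ex-ante utility diff}) and~(\ref{eq: typeL ex-ante utility diff}) are mechanism-agnostic: they express $\Delta u_i(\Sigma^*,\Sigma')$ purely in terms of the output probabilities $\lambda^\stateH_\accept$ and $\lambda^\stateL_\accept$, hence apply verbatim to our mechanism. Writing $\Delta\lambda^\stateH = \lambda^\stateH_\accept(\Sigma') - \lambda^\stateH_\accept(\Sigma^*)$ and $\Delta\lambda^\stateL = \lambda^\stateL_\accept(\Sigma') - \lambda^\stateL_\accept(\Sigma^*)$, agent $i_1$'s gain is $\mu\,\Delta\lambda^\stateH\,\Delta v_{i_1}^\stateH - (1-\mu)\,\Delta\lambda^\stateL\,\Delta v_{i_1}^\stateL$ and agent $i_2$'s gain is the same expression with the two terms interchanged and $v_{i_2}$ in place of $v_{i_1}$. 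Assuming without loss of generality that type-$\typeH$ is the majority, the only mechanism-specific input needed is Lemma~\ref{lem: majority wish}, which applied state by state gives $\lambda^\stateH_\accept(\Sigma^*) \ge 1 - 2\exp(-2c^2 n)$ and $\lambda^\stateL_\accept(\Sigma^*) \le 2\exp(-2c^2 n)$; this plays exactly the role that Proposition~\ref{prop: the majority wins with high prob} played in the majority-vote argument.

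Condition~(1) is immediate: under $\Sigma^*$ the informed majority decision (type-$\typeH$'s preferred alternative in each state) wins with probability at least $1-2\exp(-2c^2n)$, so $u_{i_1}(\Sigma^*)$ is within $2B\exp(-2c^2n)$ of the largest value agent $i_1$ can ever attain. Hence no $\Sigma'$ can raise it by more than $2B\exp(-2c^2n)\le 2B^2\exp(-2c^2n)\le\epsilon$ (using $B\ge 1$), and the strict requirement $u_{i_1}(\Sigma')-u_{i_1}(\Sigma^*)>\epsilon$ fails.

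The substance is Condition~(2). I would first argue that if both agents weakly benefit then $\Delta\lambda^\stateH$ and $\Delta\lambda^\stateL$ must share a common sign: since every gap $\Delta v$ lies in $[1,B]$ and the two gain expressions place $\mu\,\Delta\lambda^\stateH$ and $(1-\mu)\,\Delta\lambda^\stateL$ on opposite sides, opposite signs would force one of the two gains to be strictly negative. I then split on the common sign. If both are non-negative, the near-$1$ baseline value forces $\mu\,\Delta\lambda^\stateH\le\mu(1-\lambda^\stateH_\accept(\Sigma^*))\le 2\exp(-2c^2n)$; agent $i_1$'s weak-benefit constraint then bounds $(1-\mu)\,\Delta\lambda^\stateL\,\Delta v_{i_1}^\stateL\le\mu\,\Delta\lambda^\stateH\,\Delta v_{i_1}^\stateH\le 2B\exp(-2c^2n)$, and substituting this into the type-$\typeL$ identity (discarding the non-positive $-\mu\,\Delta\lambda^\stateH\,\Delta v_{i_2}^\stateH$ term and using $\Delta v_{i_2}^\stateL/\Delta v_{i_1}^\stateL\le B$) yields $u_{i_2}(\Sigma')-u_{i_2}(\Sigma^*)\le 2B^2\exp(-2c^2n)\le\epsilon$. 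The case where both are negative is symmetric, now using $\lambda^\stateL_\accept(\Sigma^*)\le 2\exp(-2c^2n)$ to bound $(1-\mu)|\Delta\lambda^\stateL|$ and chaining through the $\stateH$ terms with $\Delta v_{i_2}^\stateH/\Delta v_{i_1}^\stateH\le B$. Either way $\Delta u_{i_2}\le\epsilon$, contradicting Condition~(2).

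The main obstacle is bookkeeping rather than conceptual: the real work is the careful sign-chasing across the two cases and the faithful transfer of the smallness bound from agent $i_1$'s weak-benefit inequality to agent $i_2$'s gain through the bounded utility-gap ratios lying in $[1/B,B]$. The single genuinely mechanism-specific fact---that truthful reporting pins $\lambda^\stateH_\accept(\Sigma^*)$ and $1-\lambda^\stateL_\accept(\Sigma^*)$ to within $2\exp(-2c^2n)$ of $1$---is already delivered by Lemma~\ref{lem: majority wish}, so no fresh concentration bound is needed.
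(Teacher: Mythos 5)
Your proposal is correct and follows essentially the same route as the paper's own proof: Condition~(1) is dispatched via Lemma~\ref{lem: majority wish} exactly as the paper does, and for Condition~(2) you use the same mechanism-agnostic utility-difference identities, the same common-sign observation, and the same two-case bound chaining through the utility-gap ratios to get $\Delta u_{i_2}\le 2B^2\exp(-2c^2n)\le\epsilon$. No substantive differences to report.
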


\begin{proof}[of Proposition~\ref{prop: no win-win}]
    Let us assume type-$\typeH$ is the majority type. By Lemma~\ref{lem: majority wish}, the probability of failing to satisfy the majority agents' wish is less than $2\exp(-2c^2n)$. Thus, any potential increment in the utility for a type-$\typeH$ agent will not surpass $2B\exp(-2c^2n)$, which is less than $\epsilon$. Consequently, Condition (1) is never met.

    Next, we show Condition (2) is also not feasible. 

    Similar to what we have done in the proof of Proposition~\ref{prop: no win-win in majority vote}, we can rewrite agents' ex-ante utility using the ex-post utility. Here we shorten the ex-ante utility notations for agent $i_1, i_2$ as $u_1$ and $u_2$. Analogously, we shorten their ex-post utility notations as $v_1$ and $v_2$. The following two equations hold for arbitrary type-$\typeH$ agent $i_1$ and arbitrary type-$\typeL$ agent $i_2$.
        \begin{equation}
        \Delta u_{1}(\Sigma^*, \Sigma') = \mu(\lambda^\stateH_\accept(\Sigma') - \lambda^\stateH_\accept(\Sigma^*))\Delta v_{1}^\stateH - (1-\mu)(\lambda^\stateL_\accept(\Sigma') -\lambda^\stateL_\accept(\Sigma^*))\Delta v_{1}^\stateL. \label{eq: uH}
    \end{equation}
    \begin{equation}
             \Delta u_{2}(\Sigma^*, \Sigma') = (1-\mu)(\lambda^\stateL_\accept(\Sigma') -\lambda^\stateL_\accept(\Sigma^*))\Delta v_{2}^\stateL - \mu(\lambda^\stateH_\accept(\Sigma') - \lambda^\stateH_\accept(\Sigma^*))\Delta v_{2}^\stateH. \label{eq: uL}
         \end{equation}

    As we can see from the above two equations, for both agents to benefit from the deviation, i.e., $\Delta u_{1}(\Sigma^*, \Sigma') \ge 0$ and $\Delta u_{2}(\Sigma^*, \Sigma') \ge 0$, 
    the term $(\lambda^\stateH_\accept(\Sigma') -\lambda^\stateH_\accept(\Sigma^*))$ and $(\lambda^\stateL_\accept(\Sigma') -\lambda^\stateL_\accept(\Sigma^*))$ must either both be non-negative or both be non-positive.
    
    Intuitively, for type-$\typeH$ agent $i_1$, the deviation results in increasing utility in one state but decreasing utility in the other state. Since truthful reporting already ensures that the informed majority decision is fulfilled with a probability close to $1$, any increment of utility for type-$\typeH$ agent $i_1$ is relatively small. Consequently, the decrement of utility in the other world state is also small. Such deviations do not significantly alter the probability of outputting $\accept$ in either world state, which implies that the impact on the utility of type-$\typeL$ agent $i_2$ is also minor.
   
    Formally, we consider two cases based on the value of $\lambda^\stateH_\accept(\Sigma') - \lambda^\stateH_\accept(\Sigma^*)$: 
    \begin{itemize}
        \item If $\lambda^\stateH_\accept(\Sigma') - \lambda^\stateH_\accept(\Sigma^*) \ge 0$, it holds that $\lambda^\stateL_\accept(\Sigma') - \lambda^\stateL_\accept(\Sigma^*) \ge 0$. By Lemma~\ref{lem: majority wish}, the term $\mu(\lambda^\stateH_\accept(\Sigma') - \lambda^\stateH_\accept(\Sigma^*))$, which is bounded by $\mu(1 - \lambda^\stateH_\accept(\Sigma^*))$, is less than $2\exp(-2c^2n)$. Since $1 \le \Delta v_{1}^\stateL, \Delta v_{1}^\stateH \le B$, According to Equation~(\ref{eq: uH}), for type-$\typeH$ agent $i_1$ to benefit, the value of term $(1-\mu)(\lambda^\stateL_\accept(\Sigma') -\lambda^\stateL_\accept(\Sigma^*))\Delta v_{1}^\stateL$ must not exceed $2B\exp(-2c^2n)$. Combining Equation~(\ref{eq: uL}), we can obtain that \begin{align*}
            \Delta u_{2} &\le (1-\mu)(\lambda^\stateL_\accept(\Sigma') -\lambda^\stateL_\accept(\Sigma^*))\Delta v_{2}^\stateL\\
            &= (1-\mu)(\lambda^\stateL_\accept(\Sigma') -\lambda^\stateL_\accept(\Sigma^*))\Delta v_{1}^\stateL\frac{\Delta v_{2}^\stateL}{\Delta v_{1}^\stateL} \\
            &\le 2B^2 \exp(-2c^2n).
        \end{align*}

        \item If $\lambda^\stateH_\accept(\Sigma') -\lambda^\stateH_\accept(\Sigma^*) < 0$, $\lambda^\stateL_\accept(\Sigma')- \lambda^\stateL_\accept(\Sigma^*) < 0$ holds. Similar to the previous analysis, by Equation~(\ref{eq: uH}), to prevent the utility of type-${\typeH}$ agent $i_1$ from decreasing, it must hold that $$0 > \mu(\lambda^\stateH_\accept(\Sigma') - \lambda^\stateH_\accept(\Sigma^*))\Delta v_{1}^\stateH \ge (1-\mu)(\lambda^\stateL_\accept(\Sigma') -\lambda^\stateL_\accept(\Sigma^*))\Delta v_{1}^\stateL \ge -2B\exp(-2c^2n).$$
        Consequently, by Equation~(\ref{eq: uL}), $$\Delta u_{2} \le -\mu(\lambda^\stateH_\accept(\Sigma') - \lambda^\stateH_\accept(\Sigma^*))\Delta v_{2}^\stateH \le 2B^2\exp(-2c^2n) \le \epsilon.$$
    \end{itemize}
    In both cases, Condition (2) is not satisfied.
\end{proof}
\end{toappendix}

\begin{proof}[of Lemma~\ref{lem: strong nash equilibrium}]
Proposition~\ref{prop: no win-win} demonstrates that, from the truthful-telling profile, a beneficial coalition $D$ contains only minority agents, as no advantageous coalition of both types can exist.
To finish the proof of Lemma~\ref{lem: strong nash equilibrium}, we show that a coalition containing only the minority type agents cannot significantly increase their utility. 

We still assume type-$\typeL$ is the minority type.
This proof follows a similar reasoning to the case when $n \to \infty$. The minority agents cannot alter the identified majority type $\hat{M}$ or the threshold $\hat{\delta}$, they can only deviate the fraction of agents reporting signal $\signalL$.
However, in state $\stateL$, no matter what strategy they use, the expected fraction of signal $\signalL$ reports contributed by the majority type will be $\frac{\lfloor\alpha n\rfloor}{n}P_\stateL^\signalL$. 
Given that $\frac{\lfloor\alpha n\rfloor}{n}P_\stateL^\signalL > (\alpha - \frac{1}{n})P_\stateL^\signalL$, 
the expectation will surpass $\alpha P_\signalL^\stateL - c$ for a large enough $n$. 
According to the Chernoff bound, with a probability of at least $1 - \exp(-2c^2\lfloor\alpha n\rfloor)$, this fraction is greater than $\alpha P_\stateL^\signalL - 2c$, which is above the threshold $\hat{\delta}$. Consequently, the probability of misidentifying the real world state is lower than $\exp(-2c^2\lfloor\alpha n\rfloor)$, suggesting that the potential increment of utility for the minority type is bounded by $B\exp(-2c^2\lfloor\alpha n\rfloor)$.
A similar analysis works for state $\stateH$ where the maximum utility increment for the minority type does not exceed $B\exp(-2c^2\lfloor\alpha n\rfloor)$.

In conclusion, there is no beneficial coalition $D$ for $\epsilon$ value of $$\max\{B\exp(-2c^2\lfloor\alpha n\rfloor), 2B^2\exp(-2c^2n)\}.$$
\end{proof}

\begin{remark}
The above results show that when $\alpha > \frac1{\Delta + 1}$, our mechanism can lead to a ``good'' strong Bayes Nash equilibrium that aggregates the informed majority decision. Not only that, with $\alpha > \frac1{\Delta + 1}$, we can show all the equilibria of our mechanism are ``good'', using the same technique for proving Lemma~\ref{lem: BNE2majoritywish}.

\end{remark}
\subsection{Negative Result for $\alpha \le \theta^\ast$}\label{section: impossibility results}

In this section, we explore the feasibility of inducing a strong Bayes Nash equilibrium to aggregate the informed majority decision when $\alpha \le \theta^\ast$.
Our general impossibility results show that no reasonable mechanisms can meet our requirements.

Inspired by the revelation principle, we first focus on designing truthful mechanisms.
While truthfulness has a clear meaning in our proposed mechanism, it is unclear what truthful means in a general mechanism.
To define the truthfulness of general mechanisms, we assume each mechanism $\mechanism$ provides a mapping $\Pi^\mechanism: \mathcal{T} \times \mathcal{S} \to \mathcal{R}$ that maps agent's private type and signal to the report, instructing how to answer questions. We say agent $i$ play the truthful strategy if $\sigma_i(s) = \Pi^\mechanism(t_i, s)$ for all $s \in \mathcal{S}$. Take the majority vote mechanism as an example, the informative voting strategy can be truthful if we define $\Pi^{\texttt{maj}}(t_i, s)$ as $\accept$ if $t_i = \typeH$ and $s = \signalH$, or $t_i = \typeL$ and $s = \signalL$, and as $\reject$ otherwise. 

Notice that the above definition of truthfulness implicitly implies that the mechanism has to be \emph{ordinal}, in other words, the mechanism does not elicit \emph{cardinal utilities} from agents.
This is because, in the truthful strategy specification $\Pi^\mechanism$, the specified truthful report $\mathcal{R}$ can only depend on $t_i\in\mathcal{T}$ and $s\in\mathcal{S}$ and cannot depend on the utility function $v_i(\cdot,\cdot)$.
Note that our mechanism described in this section is ordinal, and most existing social choice mechanisms are ordinal (if not working on those equilibrium concepts, cardinal preferences are not even defined in most cases).


In the first impossibility result, we examine the existence of truthful anonymous mechanisms, where truthful-telling forms a strong Bayes Nash equilibrium.

\begin{toappendix}
\subsection{Proof Sketch of Theorem~\ref{theorem: impossible result}}
\end{toappendix}

\begin{theoremrep}\label{theorem: impossible result}
For configuration $\config$ where the majority proportion $\alpha \le \frac1{\Delta + 1}$, $0 < \mu < 1, P_\signalH^\stateH = 1/2 + \Delta/2$, and $P_\signalH^\stateL = 1/2 - \Delta/2$, even the number of agents $n \to \infty$, in any anonymous mechanism $\mechanism$, truthful reporting cannot form an $\epsilon$-strong Bayes Nash equilibrium that leads to the informed majority decision with probability $1$. Here, $\epsilon$ is defined as $\epsilon= \frac14\min\{\mu, 1-\mu\}$. 
\end{theoremrep}

To prove this theorem, we construct two environments that are indistinguishable from the mechanisms' view. However, the informed majority decisions in these environments are different, leading to a result that any mechanism will fail to output the informed majority decision in at least one environment. The construction and proof are rather lengthy, so we defer them to the appendix.

\begin{toappendix}
 We try to provide a proof sketch here and defer the formal proof to Section~\ref{section: impossibility proof}.
\begin{proof}[Proof Sketch]
Without loss of generality, let's assume the majority type is type-$\typeL$. We construct two deviations of the minority type and show that at least one of them can bring a utility increment.

Here are some notations involved: let $\pi^T_i = \Pi^\mechanism(t_i, \cdot)$ denote the truth-telling strategy for agent $i$, i.e., the agent answers questions honestly based on his/her knowledge and private signal.
Let $\pi^{\typeL,\signalL} = \Pi^\mechanism(\typeL, \signalL)$ be the strategy that one gives answers as if (s)he was a type-$\typeL$ agent with received signal $\signalL$.

Consider the following two strategy profiles, deviated by type-$\typeH$ agents from the truthful reporting profile:
\begin{itemize}
\item [Profile A:]
\begin{itemize}
    \item All type-$\typeL$ agents report truthfully. Formally, $$\sigma^{(A)}_i = \pi^T_i\text{ for } i\text{ with }t_i = \typeL.$$
    \item Each agent of type-$\typeH$ adopts a strategy where:
    \begin{itemize}
        \item with a probability $q(n) = (1 - \Delta \cdot \lfloor \alpha n\rfloor/\lceil (1 -\alpha)n\rceil)/2$, the agent reports as if she was a type-$\typeL$ agent who received the signal $\signalH$;
        \item with the remaining probability $1 - q(n)$, the agent reports as if she was a type-$\typeL$ agent who received the signal $\signalL$. 
    \end{itemize}
    Formally, $$\sigma^{(A)}_i = \left\{
\begin{array}{rcl}
\pi^{\typeL,\signalH} & & {with~prob~q(n) =(1 - \Delta\cdot\lfloor\alpha n\rfloor/\lceil(1 - \alpha)n\rceil)/2}, \\
\pi^{\typeL,\signalL} & & {otherwise}.
\end{array}
\right.\text{ for } i\text{ with }t_i = \typeH.$$
\end{itemize}

\item [Profile B:] 
\begin{itemize}
    \item Once again, all agents of type-$\typeL$ report truthfully. $$\sigma^{(B)}_i = \pi^T_i~\text{ for } i\text{ with }t_i = \typeL.$$
    \item Each agent of type-$\typeH$ now adopts a strategy where:
    \begin{itemize}
        \item with a probability $1 - q(n)$, she reports as if she was a type-$\typeL$ agent who received the signal $\signalH$.
        \item otherwise, with probability $q(n)$, she reports as if she was a type-$\typeL$ agent who received the signal $\signalL$.
    \end{itemize}

$$\sigma^{(B)}_i = \left\{
\begin{array}{rcl}
\pi^{\typeL,\signalH} & & {with~prob~1 - q(n)}, \\
\pi^{\typeL,\signalL} & & {otherwise}.
\end{array}
\right.\text{ for }i\text{ with }t_i = \typeH.$$
\end{itemize}
\end{itemize}

In both strategies, type-$\typeH$ agents imitate the reporting behavior of type-$\typeL$ agents. The difference between these two strategies lies in the probabilities with which they report having received one signal over the other. 

Combining the two strategy profiles and the two possible world states results in four possible environments. We label them as $A\stateL$, $A\stateH$, $B\stateL$, and $B\stateH$ and focus on the mechanism inputs in environment $A\stateH$ and $B\stateL$. In both environments, the mechanism observes all agents appear to be of type-$\typeL$. The difference is that
\begin{itemize}
    \item In environment $A\stateH$, the count of agents who report receiving signal $\signalH$, denoted by $C_1$, is the sum of two independent binomial distributions:
    \begin{itemize}
        \item $X_1\sim\Bin(\lfloor \alpha \cdot n\rfloor, P_{\signalH}^\stateH = 1/2 + \Delta/2)$, corresponding to the number of actual type-$\typeL$ agents who receive signal $\signalH$.
        \item $Y_1\sim \Bin(\lceil (1-\alpha)\cdot n\rceil, q(n))$, representing the number of type-$\typeH$ agents who report signal $\signalH$ according to the deviation strategy A.
    \end{itemize}
    \item In enrironment $B\stateL$, the number of agents reporting signal $\signalH$, denoted by $C_2$, is the sum of two other independent binomial distributions:
    \begin{itemize}
        \item $X_2\sim \Bin(\lfloor \alpha \cdot n\rfloor, P_{\signalH}^\stateL = 1/2 - \Delta/2)$,
        \item and $Y_2 \sim \Bin(\lceil(1-\alpha)\cdot n\rceil, 1-q(n))$. 
    \end{itemize}
\end{itemize}

We show that any anonymous mechanism $\mechanism$ cannot differentiate between these two environments when the number of agents $n$ is large enough. This inability to distinguish leads to the similarity of the mechanism's outputs, which implies that in at least one environment, type-$\typeH$ agents can benefit from deviation. Therefore, for any mechanism $\mechanism$, truthful reporting is not a strong Bayes Nash equilibrium (even not an approximate strong Bayes Nash equilibrium) succeeding in leading to the informed majority decision.

Detail analysis in formal proof shows that even when $n \to \infty$, for any mechanism, truthful reporting cannot form an $\epsilon$-strong Bayes Nash equilibrium that leads to the informed majority decision, where $\epsilon$ is set to $\epsilon= \frac14\min\{\mu, 1-\mu\}$.
\end{proof}

\subsection{Formal Proof of Theorem~\ref{theorem: impossible result}}
\label{section: impossibility proof}
As aforesaid, we construct two deviation strategies for the minority type agents (without loss of generality, we assume the minority agents are of type-$\typeH$). These strategies lead to two distinct agent profiles, Profile A and Profile B. Combining them with the two possible world states, we obtain four environments: $A\stateL$, $A\stateH$, $B\stateL$, and $B\stateH$.

The only difference between environments $A\stateH$ and $B\stateL$, from an anonymous mechanism's view, lies in the count of agents reporting signal $\signalH$. We denote this count as $C_1$ in environment $A\stateH$ and $C_2$ in environment $B\stateL$. We're now showing that any anonymous mechanism cannot differentiate between these two environments when the number of agents $n$ is large enough.

To quantify this difference, we introduce the concept of total variation distance.

\textbf{Total Variation Distance(TVD):} Let 
$P$ and $Q$ be two probability distributions on a common sample space. The total variation distance $\tvd(P, Q)$ between $P$ and $Q$ is defined as $$\tvd(P, Q) = \sup_{A \in \mathcal{F}} |P(A) - Q(A)|$$ where the supremum is taken over all possible events $A$ in the sample space.\\
For discrete distributions $P$, $Q$, the total variation distance can be written as
    $$\tvd(P, Q) = \frac{1}{2} \sum_{x \in S} |p(x) - q(x)|$$ where $p(x)$ and $q(x)$ are the probability function of $P$ and $Q$, and $S$ is the value range the random variables.\\
And for continuous distributions $P$, $Q$ with density functions $p(x)$ and $q(x)$, the total variation distance can be written as
$$\tvd(P, Q) = \frac{1}{2} \int_{-\infty}^{\infty} |p(x) - q(x)| \, dx.$$

\begin{lemma}[Small TVD between Two Counting Variables]\label{lemma:TVD is small}

For $0 < \alpha \le 1/(\Delta + 1)$, $q(n) =(1 - \Delta\cdot\lfloor\alpha n\rfloor/\lceil(1 - \alpha)n\rceil)/2$, and $C_1, C_2$ corresponding to the counting variables in two environments, the total variation distance between $C_1$ and $C_2$ is at most $o(1)$ when $n \to \infty$.
\end{lemma}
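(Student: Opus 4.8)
The plan is to exploit a reflection symmetry between $C_1$ and $C_2$, reducing the claim to the asymptotic symmetry of a single centered binomial-type sum, which a local limit theorem then delivers. Write $a=\lfloor\alpha n\rfloor$ and $b=\lceil(1-\alpha)n\rceil$, so $a+b=n$, and recall $C_1=X_1+Y_1$ with $X_1\sim\Bin(a,\frac12+\frac\Delta2)$, $Y_1\sim\Bin(b,q(n))$ independent, while $C_2=X_2+Y_2$ with $X_2\sim\Bin(a,\frac12-\frac\Delta2)$, $Y_2\sim\Bin(b,1-q(n))$ independent. The first step is the elementary observation that reflecting a binomial flips its parameter: $a-X_1\sim\Bin(a,\frac12-\frac\Delta2)\stackrel{d}{=}X_2$ and $b-Y_1\sim\Bin(b,1-q(n))\stackrel{d}{=}Y_2$; since $X_1,Y_1$ are independent this gives the distributional identity $C_2\stackrel{d}{=}n-C_1$, valid for \emph{every} value of $q(n)$. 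The second step checks that the specific choice $q(n)=(1-\Delta\,a/b)/2$ is exactly what centers $C_1$ at $n/2$: indeed $E[C_1]=a(\frac12+\frac\Delta2)+b\,q(n)=\frac a2+\frac{a\Delta}2+\frac{b-\Delta a}2=\frac n2$.

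Combining the two steps, $\tvd(C_1,C_2)=\tvd(C_1,n-C_1)=\frac12\sum_k\bigl|\Pr[C_1=k]-\Pr[C_1=n-k]\bigr|$, so it suffices to show that the law of $C_1$ is asymptotically \emph{symmetric} about its mean $n/2$. To control the individual point masses I would invoke the local central limit theorem. The variance $V:=\mathrm{Var}(C_1)\ge\mathrm{Var}(X_1)=a(\frac14-\frac{\Delta^2}4)=\Theta(n)\to\infty$, and $C_1$ is a sum of $n$ independent $\{0,1\}$ variables whose dominant block $X_1$ is a non-degenerate binomial with fixed parameter $\frac12+\frac\Delta2\in(0,1)$; hence the lattice span is $1$ and the local CLT provides the uniform estimate $\sup_k\bigl|\Pr[C_1=k]-g_V(k-\tfrac n2)\bigr|=o(1/\sqrt V)$, where $g_V(x)=(2\pi V)^{-1/2}e^{-x^2/(2V)}$. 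The essential point is that $g_V$ is symmetric, $g_V(k-\frac n2)=g_V((n-k)-\frac n2)$.

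With this in hand I would finish by a triangle inequality together with a bulk/tail split. Fix $T>0$. For the bulk $|k-\frac n2|\le T\sqrt V$ (which contains $O(T\sqrt V)$ indices), each summand is at most $|\Pr[C_1=k]-g_V(k-\frac n2)|+|g_V((n-k)-\frac n2)-\Pr[C_1=n-k]|=o(1/\sqrt V)$ by the local CLT and the symmetry of $g_V$, so the bulk contributes $O(T\sqrt V)\cdot o(1/\sqrt V)=o_n(1)$. For the tail $|k-\frac n2|>T\sqrt V$, the contribution is bounded by the two point-mass tails $\Pr[|C_1-\frac n2|>T\sqrt V]$ and $\Pr[|(n-C_1)-\frac n2|>T\sqrt V]$, each at most $2e^{-c'T^2}$ by Hoeffding's inequality (the increments are bounded and $V=\Theta(n)$). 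Hence $\limsup_{n\to\infty}\tvd(C_1,C_2)\le 2e^{-c'T^2}$ for every fixed $T$; letting $T\to\infty$ yields $\tvd(C_1,C_2)=o(1)$.

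The main obstacle is the validity of the sharp local CLT for $C_1$, and in particular the boundary regime $\alpha=1/(\Delta+1)$, where $q(n)\to0$ forces the $Y_1$-block to degenerate (its mean and variance collapse to $O(1)$), so one cannot appeal to a local CLT for $Y_1$ itself. I would circumvent this by a convolution-smoothing argument: only the local CLT for the non-degenerate binomial $X_1$ (classical de Moivre--Laplace) is needed, and then $\Pr[C_1=k]=\sum_j\Pr[Y_1=j]\,\Pr[X_1=k-j]$ inherits the uniform Gaussian approximation from $X_1$ regardless of how $Y_1$ is distributed, because $g_{\mathrm{Var}(X_1)}$ varies on the scale $\sqrt{\mathrm{Var}(X_1)}=\Theta(\sqrt n)$; folding the mean shift and variance of $Y_1$ into $V$ then recovers $g_V$. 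This removes any need to assume $q(n)$ is bounded away from $0$.
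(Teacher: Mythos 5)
Your argument is correct in substance but takes a genuinely different route from the paper's. Your first move is the reflection identity $a-X_1\stackrel{d}{=}X_2$ and $b-Y_1\stackrel{d}{=}Y_2$ (hence $C_2\stackrel{d}{=}n-C_1$), together with the check that the formula for $q(n)$ centers $C_1$ at $n/2$; this collapses the two-distribution comparison into the asymptotic self-symmetry of the single law of $C_1$ about $n/2$, which you then get from a local CLT with a bulk/tail split. The paper instead approximates each of the four binomials separately by a discretized Gaussian with $\tvd$ error $O(1/\sigma)$ (citing a normal-approximation bound), applies subadditivity of $\tvd$ under independent sums, and aligns the means of the two Gaussian sums via an integer shift plus a fractional-shift estimate. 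Your reflection trick is shorter and makes transparent exactly where the hypotheses $P_\signalH^\stateH=\tfrac12+\tfrac\Delta2$, $P_\signalH^\stateL=\tfrac12-\tfrac\Delta2$ and the precise value of $q(n)$ enter (they are what make $C_2$ the mirror image of $C_1$ and put the center at $n/2$); the paper's route gives an explicit $O(1/\sqrt n)$ rate and does not depend on this parameter symmetry. One caveat on your last step: you correctly flag the boundary regime $\alpha=1/(\Delta+1)$, where $q(n)\to 0$ and the $Y_1$-block degenerates --- a regime the paper's own write-up treats loosely, since its $O(1/\sigma_{Y_1})$ Gaussian-approximation bound for $Y_1$ is then not $o(1)$ --- but your fix is overstated: convolving $g_{\mathrm{Var}(X_1)}$ with the law of $Y_1$ does not in general produce $g_V$ for an arbitrary $Y_1$ of comparable variance. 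What closes the argument is a two-case split. If $\mathrm{Var}(Y_1)=o(n)$, a second-order Taylor expansion of the Gaussian profile of $X_1$, using that $Y_1$ and its reflection $b-a\Delta-Y_1$ share the same mean, gives a pointwise error $O(\mathrm{Var}(Y_1)\cdot n^{-3/2})$ and hence a total error $o(1)$ over the bulk. If instead $\mathrm{Var}(Y_1)=\Theta(n)$, then $q(n)$ is bounded away from $0$ and $1$, $Y_1$ is itself a non-degenerate binomial with its own local CLT, and you are back in your main (or the paper's) case. With that case split made explicit, your proof is complete.
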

The proof of this lemma will be referred to later. Applying this lemma, we can see for every anonymous mechanism $\mechanism$, the difference in outputs for the two environments is small when $n$ is large enough: 

For a fixed number of agents, let $\mechanism(c)$ be the output of mechanism $\mechanism$ when there are $c$ agents receiving signal $\signalH$. The difference of mechanism outputs in environment $A\stateH$ and $B\stateH$ is bounded by the TVD between $C_1$ and $C_2$.
$$|\Pr_{c\sim C_1}[\mechanism(c) = \accept] - \Pr_{c \sim C_2}[\mechanism(c) = \accept]| \le \tvd(C_1, C_2).$$
Therefore, there exists $n_0$ such that for all $n > n_0$, $$|\Pr_{C_1}[\mechanism(c) = \accept] - \Pr_{ C_2}[\mechanism(c) = \accept]| = |\Pr_{C_1}[\mechanism(c) = \reject] - \Pr_{ C_2}[\mechanism(c) = \reject]| \le 1/4.$$

For these $n$ beyond $n_0$, deviation from truthful reporting can benefit type-$\typeH$ agents in at least one of the two environments: 

In environment $A\stateH$, either $\Pr_{C_1}[\mechanism(c) = \accept] \ge \frac{1}{2}$ or $\Pr_{C_1}[\mechanism(c) = \reject] \ge \frac{1}{2}$ holds. 

\begin{itemize}
    \item When $\Pr_{C_1}[\mechanism(c) = \accept] \ge \frac{1}{2}$, for each agent $i$ with type $t_i = \typeH$, deviation to profile A can bring her a utility 
    $$\begin{aligned}
    &u_i(\sigma_1^{(A)}, \cdots, \sigma_n^{(A)}) \\
    \ge& \mu \Pr_{C_1}[\mechanism(c) = \accept](v_i(\stateH, \accept)-v_i(\stateH, \reject)) +\mu \cdot v_i(\stateH,\reject)+ (1 - \mu)\cdot v_i(\stateL, \accept)\\
    \ge& \mu/2 \cdot [v_i(\stateH, \accept)-v_i(\stateH, \reject)]+\mu \cdot v_i(\stateH,\reject)+ (1 - \mu)\cdot v_i(\stateL, \accept). 
\end{aligned}$$
 \item When $\Pr_{C_1}[\mechanism(c) = \reject] \ge \frac{1}{2}$, $\Pr_{C_2}[\mechanism(c) = \reject]\ge \frac{1}{4}$. For each agent $i$ with $t_i = \typeH$, deviation to profile B can bring her utility
$$\begin{aligned}
    &u_i(\sigma_1^{(B)}, \cdots, \sigma_n^{(B)}) \\
    \ge& (1-\mu) \Pr_{C_2}[\mechanism(c) = \reject](v_i(\stateL, \reject)-v_i(\stateL, \accept)) +\mu \cdot v_i(\stateH,\reject)+ (1 - \mu)\cdot v_i(\stateL, \accept)\\
    \ge& (1-\mu)/4 \cdot [v_i(\stateL, \reject)-v_i(\stateL, \accept)]+\mu \cdot v_i(\stateH,\reject)+ (1 - \mu)\cdot v_i(\stateL, \accept).
\end{aligned}$$    
\end{itemize}

If truthful $\mechanism$ outputs majority wish with probability $1$, then for every agent $i$ with $t_i = \typeH$, truthful reporting gives her utility $$u_i(\sigma^T_1, \cdots, \sigma^T_n) = \mu \cdot v_i(\stateH,\reject)+ (1 - \mu)\cdot v_i(\stateL, \accept).$$ 

No matter in which case, at least one deviation bring an increment about type-$\typeH$ agents' utility of at least $\frac14\min\{\mu, 1-\mu\}$. 
Therefore, truthful reporting is not a 
strong Bayes Nash equilibrium (even not an $\epsilon$-strong Bayes Nash equilibrium with $\epsilon = \frac14\min\{\mu, 1-\mu\}$), conflicting the truthfulness of $\mechanism$. 

\subsubsection{Proof of Lemma~\ref{lemma:TVD is small}}
Now we verify Lemma~\ref{lemma:TVD is small} to finish our proof of the impossible result, whose formal statement is shown below.
\begin{lemma}[Formal Statement of Lemma~\ref{lemma:TVD is small}]\label{formal statement}
Given the conditions \begin{itemize}
    \item $0 < \alpha \le 1/(\Delta + 1)$,
    \item and $q(n) =(1 - \Delta\cdot\lfloor\alpha n\rfloor/\lceil(1 - \alpha)n\rceil)/2$,
\end{itemize} 
     let $C_1$ be the sum of random variables $X_1$ and $Y_1$, let $C_2$ be the sum of random variables$X_2$ and $Y_2$ where
     \begin{itemize}
    \item $X_1$ follows binomial distribution $\Bin(\lfloor \alpha \cdot n\rfloor, 1/2 + \Delta/2)$,
    \item $Y_1$ follows binomial distribution $\Bin(\lceil (1-\alpha)\cdot n\rceil, q)$,
    \item $X_2$ follows $\Bin(\lfloor \alpha \cdot n\rfloor, 1/2 - \Delta/2)$,
    \item $Y_2$ follows $\Bin(\lceil (1-\alpha)\cdot n\rceil, 1-q)$, 
    \item $X_1$, $X_2$, $Y_1$ and $Y_2$ are mutually independent. 
\end{itemize}

As $n$ goes to infinity, the total variation distance between $C_1$ and $C_2$ is $O(\frac{1}{\sqrt{n}})$.
\end{lemma}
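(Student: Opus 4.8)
The plan is to exploit a hidden symmetry between the two environments that is built into the definition of $q(n)$. Writing $m=\lfloor\alpha n\rfloor$, $m'=\lceil(1-\alpha)n\rceil$ and $N=m+m'$, I would first record that $q(n)=(1-\Delta m/m')/2$ is engineered precisely so that the two counting variables share their first two moments: a one-line computation gives $\mathbb{E}[C_1]=m(\tfrac12+\tfrac\Delta2)+m'q=\tfrac{N}{2}=m(\tfrac12-\tfrac\Delta2)+m'(1-q)=\mathbb{E}[C_2]$. In fact something stronger holds. Since $\Bin(m,\tfrac12-\tfrac\Delta2)$ is the law of $m$ minus a $\Bin(m,\tfrac12+\tfrac\Delta2)$ variable and $\Bin(m',1-q)$ is the law of $m'$ minus a $\Bin(m',q)$ variable, independence yields the distributional reflection identity $C_2\stackrel{d}{=}N-C_1$. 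In particular $\operatorname{Var}(C_1)=\operatorname{Var}(C_2)=:\sigma^2$, and because the $X$-component alone contributes $m(\tfrac14-\tfrac{\Delta^2}{4})=\Theta(n)$ to the variance (recall $0<\Delta<1$ and $\tfrac12<\alpha\le 1/(\Delta+1)<1$ are fixed constants), we have $\sigma=\Theta(\sqrt n)$ regardless of whether $q$ degenerates to $0$ at the boundary $\alpha=1/(\Delta+1)$.

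The reflection identity is the whole point: it turns $\tvd(C_1,C_2)$ into a measure of the asymmetry of the single distribution $C_1$ about its mean $\mu_0=N/2$. Indeed $\Pr[C_2=k]=\Pr[C_1=N-k]$, so
$$2\,\tvd(C_1,C_2)=\sum_{s}\bigl|\Pr[C_1=\mu_0+s]-\Pr[C_1=\mu_0-s]\bigr|.$$
Were $C_1$ exactly symmetric this would vanish; in general its asymmetry is governed by the third cumulant $\kappa_3=\sum_j p_j(1-p_j)(1-2p_j)=O(n)=O(\sigma^2)$ of the underlying independent Bernoullis. The engine is a local central limit theorem with one Edgeworth correction for the lattice sum $C_1$: uniformly in $k$, $\Pr[C_1=k]=\tfrac1\sigma\phi(x)\bigl(1+\tfrac{\kappa_3}{6\sigma^3}(x^3-3x)\bigr)+R(k)$ with $x=(k-\mu_0)/\sigma$ and $\phi$ the standard normal density. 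The even Gaussian term cancels in the antisymmetric difference above, leaving the odd skewness term, whose total mass is $\frac{\kappa_3}{3\sigma^4}\sum_s|x^3-3x|\phi(x)\approx\frac{\kappa_3}{3\sigma^3}\int|x^3-3x|\phi(x)\,dx=O(\kappa_3/\sigma^3)=O(1/\sqrt n)$, since the lattice spacing in the variable $x$ is $1/\sigma$.

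The main obstacle is that the target is an $\ell_1$ (total variation) bound, not a pointwise or Kolmogorov one, so I must control the accumulation of the remainder $R(k)$ over the $\Theta(\sqrt n)$ lattice points carrying appreciable mass without losing a spurious $\sqrt{\log n}$ factor. Two ingredients handle this. First, Hoeffding's inequality shows both $C_1$ and $C_2$ concentrate within $O(\sigma\sqrt{\log n})$ of $\mu_0$, so the tails contribute negligibly. Second, for the bulk I would avoid the lossy ``support size times uniform remainder'' estimate by passing to the Fourier side: since $C_1,C_2$ are integer valued, Parseval for Fourier series gives $\sum_k(\Pr[C_1=k]-\Pr[C_2=k])^2=\frac1{2\pi}\int_{-\pi}^{\pi}|f_1(t)-f_2(t)|^2\,dt$, and the reflection identity yields $|f_1(t)-f_2(t)|=2|\operatorname{Im} g(t)|$, where $g$ is the characteristic function of the centered $C_1-\mu_0$. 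The cumulant expansion $\log g(t)=-\tfrac{\sigma^2t^2}{2}-\tfrac{i\kappa_3t^3}{6}+\cdots$ exposes $\operatorname{Im} g$ as the genuinely small odd-cumulant part, while the modulus bound $|g(t)|\le e^{-\sigma^2(1-\cos t)}$ confines the integral to $|t|=O(1/\sigma)$ and gives $\sum_k(\cdots)^2=O(\sigma^{-3})$; a weighted Cauchy--Schwarz inequality with weight $1+((k-\mu_0)/\sigma)^2$ then converts this $L^2$ estimate into $\sum_k|\cdots|=O(\sigma^{-1})=O(1/\sqrt n)$ with no logarithmic loss. I expect the technically heaviest part to be verifying the uniform validity of these estimates in $n$ (equivalently the decay of $g$ and $g'$ away from $t=0$), in particular when $q(n)\to 0$ at the boundary case, where the $Y$-component degenerates but the $X$-component still keeps $\sigma^2=\Theta(n)$.
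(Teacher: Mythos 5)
Your proposal is correct, but it takes a genuinely different route from the paper. The paper proves the lemma by approximating each of the four binomials by a discretized Gaussian (citing an $O(1/\sqrt{n})$ total-variation bound for the normal approximation of $\Bin(n,p)$), invoking the subadditivity $\tvd(X_1+Y_1,X_2+Y_2)\le\tvd(X_1,X_2)+\tvd(Y_1,Y_2)$, and then observing that the mean offset $\lfloor\alpha n\rfloor\Delta$ between $Z_{X_1}$ and $Z_{X_2}$ is exactly the opposite of that between $Z_{Y_1}$ and $Z_{Y_2}$; shifting by the integer part of this offset leaves only a sub-unit residual shift, whose TVD cost against a standard deviation of $\Theta(\sqrt n)$ is $O(1/\sqrt n)$. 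Your argument instead rests on the exact distributional identity $C_2\stackrel{d}{=}N-C_1$ (which I verified: $m-X_1\sim\Bin(m,\tfrac12-\tfrac\Delta2)$ and $m'-Y_1\sim\Bin(m',1-q)$, and independence lets you add these), together with the computation $\mathbb{E}[C_1]=N/2$ that exposes what $q(n)$ is really engineered for; this reduces the whole problem to bounding the skewness of a single Poisson-binomial about its mean, which you control at the right rate $O(\kappa_3/\sigma^3)=O(1/\sqrt n)$ via an Edgeworth/Fourier analysis. Your reflection identity is a cleaner conceptual explanation of the construction than the paper's integer-shift trick, and your route is self-contained where the paper leans on the discretized-Gaussian approximation as a black box; the price is that you must actually execute a local CLT for non-identically distributed Bernoullis, and the technically heaviest unexecuted step is the weighted Parseval estimate $\sum_k a_k^2\bigl(1+((k-\mu_0)/\sigma)^2\bigr)=O(\sigma^{-3})$ needed to convert the $L^2$ bound into an $\ell_1$ bound without a logarithmic loss (it requires controlling $\frac{d}{dt}\bigl(f_1(t)-f_2(t)\bigr)$ as well as the function itself, which is routine but not free). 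Your handling of the boundary case is also right: the condition $\alpha\le 1/(\Delta+1)$ is exactly what makes $q(n)\ge 0$ a valid probability, and even when $q(n)\to 0$ the $X$-component alone keeps $\sigma^2=\Theta(n)$ and $\kappa_3=O(n)$, so nothing degenerates.
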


To prove this lemma, we use the discretized Gaussian distribution to approximate the binomial distribution and then upper bound the distance between two sums. 

\begin{definition}[Discretized Gaussian]
Let $Z_{\mu, \sigma^2}$ be the discretization of the Gaussian distribution $N(\mu,\sigma^2)$. The probability distribution function of $Z_{\mu, \sigma^2}$ is given as follows: $$\Pr(Z_{\mu, \sigma^2} = i) = \Pr(\frac{i - \mu - 1/2}{\sigma} < Z \le \frac{i - \mu + 1/2}{\sigma})~\forall i \in \mathbb{Z},$$ where $Z$ follows the standard normal distribution $N(0,1)$.
\end{definition}

\begin{proposition}[Distance between Binomial Distribution and Discretized Gaussian~\cite{chen2011normal} ]\label{lemma: Bin and Normal are close}
Let $X$ be a random variable that follows a binomial distribution $\Bin(n, p)$. Define $\mu$ and $\sigma^2$ as the mean and variance of $X$,  where $\mu = np$, $\sigma^2 = np(1-p)$. The total variation distance between the binomial distribution $X$ and its corresponding discretized Gaussian distribution $Z_{\mu, \sigma^2}$
 is at most $\frac{7.6}{\sigma}=O(\frac{1}{\sqrt{n}})$.
\end{proposition}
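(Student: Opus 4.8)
The plan is to read this bound as a uniform local central limit theorem for the binomial, measured in total variation. Write $X=\sum_{i=1}^n B_i\sim\Bin(n,p)$ as a sum of i.i.d.\ $\mathrm{Bernoulli}(p)$ variables. Both $X$ and $Z_{\mu,\sigma^2}$ are supported on $\mathbb{Z}$, so
\[\tvd(X,Z_{\mu,\sigma^2})=\frac12\sum_{k\in\mathbb{Z}}\left|\Pr[X=k]-\Pr[Z_{\mu,\sigma^2}=k]\right|.\]
By the definition of the discretized Gaussian, $\Pr[Z_{\mu,\sigma^2}=k]=\int_{k-1/2}^{k+1/2}\frac1\sigma\phi\!\left(\frac{x-\mu}\sigma\right)dx$, where $\phi$ is the standard normal density; a second-order Taylor expansion shows this differs from the point value $\frac1\sigma\phi\!\left(\frac{k-\mu}\sigma\right)$ by only $O(\sigma^{-3})$ per lattice point. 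It therefore suffices to compare $\Pr[X=k]$ against $\frac1\sigma\phi\!\left(\frac{k-\mu}\sigma\right)$ term by term.

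First I would establish a pointwise local limit estimate through Fourier inversion. Using $\Pr[X=k]=\frac1{2\pi}\int_{-\pi}^{\pi}\varphi_X(t)e^{-itk}\,dt$ with $\varphi_X(t)=(1-p+pe^{it})^n$, I split the integral into a central window $|t|\le\delta$ and its complement. On the central window, $\log\varphi_X(t)=it\mu-\tfrac{\sigma^2 t^2}{2}+O(n|t|^3)$, so after Gaussian inversion (rescaling $t=s/\sigma$) one obtains, for $k$ in the bulk, $\Pr[X=k]=\frac1\sigma\phi(x_k)\bigl(1+O((1+|x_k|^3)/\sqrt n)\bigr)$, where $x_k=(k-\mu)/\sigma$; on the complement $|\varphi_X(t)|$ is exponentially small and contributes a negligible amount. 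The essential feature is that the error is \emph{relative} to $\phi(x_k)$ and carries a polynomial factor $(1+|x_k|^3)$.

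With this estimate in hand, the summation is routine. In the bulk the per-term errors are weighted by $\phi(x_k)$, and $\sum_k\frac1\sigma\phi(x_k)(1+|x_k|^3)$ Riemann-approximates $\int\phi(x)(1+|x|^3)\,dx=O(1)$, so the bulk contributes $O(1/\sqrt n)$ to the $L^1$ sum. In the tails, where $|k-\mu|$ exceeds $\sigma\sqrt{\log n}$, a Chernoff bound for $X$ and the Gaussian tail bound for $Z_{\mu,\sigma^2}$ show that both distributions place only a super-polynomially small total mass, which is therefore negligible. Adding the two contributions yields $\tvd(X,Z_{\mu,\sigma^2})=O(1/\sigma)=O(1/\sqrt n)$; the explicit constant $7.6$ comes from the sharper Stein's-method bookkeeping of \cite{chen2011normal}, but only the $O(1/\sqrt n)$ rate is needed downstream.

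The step I expect to be the main obstacle is securing the local limit estimate with a \emph{relative} error of order $(1+|x_k|^3)/\sqrt n$ rather than a crude uniform additive error: a merely additive $O(1/\sqrt n)$ bound, summed over the $\Theta(\sigma)$ bulk lattice points, would lose a spurious $\sqrt{\log n}$ factor and miss the clean rate. Obtaining the relative form hinges on carefully controlling the cubic remainder $O(n|t|^3)$ in $\log\varphi_X(t)$ uniformly in $k$ and on choosing the split radius $\delta$ so that both the central approximation error and the exponentially small tail of $\varphi_X$ are simultaneously tamed. A secondary point is keeping the constants uniform in $p$, but since the statement is phrased through $\sigma=\sqrt{np(1-p)}$, the degeneracy as $p\to0$ or $p\to1$ is already absorbed into $\sigma$ and causes no real difficulty.
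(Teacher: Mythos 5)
The paper does not prove this proposition at all---it is imported verbatim from Chen, Goldstein, and Shao's monograph on Stein's method, and the text explicitly defers both statement and proof to that reference. Your Fourier-analytic local-CLT route is therefore a genuinely different (and self-contained) argument, and its skeleton is sound: the reduction of the cell integral $\int_{k-1/2}^{k+1/2}\frac1\sigma\phi(\frac{x-\mu}{\sigma})\,dx$ to the point value with a summable $O(\sigma^{-2})$ total error, the split of the inversion integral, and the crucial insistence on a \emph{relative} error weighted by $\phi(x_k)$ so that the $L^1$ sum Riemann-approximates a finite integral are exactly the right ingredients, and you correctly identify that a merely additive local estimate would leak a $\sqrt{\log n}$ factor. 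Two small calibration points: the cubic remainder in $\log\varphi_X(t)$ is governed by the third cumulant $\kappa_3=np(1-p)(1-2p)$, so after rescaling $t=s/\sigma$ the relative error is more honestly $O((1+|x_k|^3)(1-2p)/\sigma)$ rather than $O((1+|x_k|^3)/\sqrt n)$; the resulting bound is $O(1/\sigma)$, which is what the proposition asserts, and collapses to $O(1/\sqrt n)$ only when $p$ is bounded away from $0$ and $1$ (which holds in the paper's application, where $p$ is a fixed signal probability). And, as you acknowledge, this route cannot produce the explicit constant $7.6$---that is precisely what the Stein's-method derivation in the cited source buys---but since the paper only ever uses the $O(1/\sqrt n)$ rate inside Lemma~\ref{formal statement}, your argument suffices for every downstream use.
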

Proposition~\ref{lemma: Bin and Normal are close} and its proof are provided in literature~\cite{chen2011normal}. 

\begin{proposition}[TV Distance of Sum $\le$ Sum of TV Distances]\label{lemma: sum of TVD is bounded}
Consider four random variables $X_1$, $X_2$, $Y_1$, and $Y_2$ where $X_1$ is independent of $Y_1$, and $X_2$ is independent of $Y_2$. The total variation distance between the sums $X_1 + Y_1$ and $X_2 + Y_2$ is bounded by the combined total variation distance of $X_1$ and $X_2$, and $Y_1$ and $Y_2$. Formally, $$\tvd(X_1 + Y_1, X_2 +Y_2) \le \tvd(X_1, X_2) + \tvd(Y_1, Y_2).$$
\end{proposition}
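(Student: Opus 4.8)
The plan is to prove this via the triangle inequality for total variation distance, introducing an intermediate (hybrid) distribution in which one of the two summands is swapped. The crucial auxiliary fact is that convolving with a fixed independent random variable cannot increase the total variation distance: if $Z$ is independent of both $U$ and $V$, then $\tvd(U+Z,\, V+Z) \le \tvd(U,V)$.

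First I would establish this auxiliary fact. For discrete variables, independence gives $\Pr[U+Z=m] = \sum_z \Pr[Z=z]\Pr[U=m-z]$, and likewise for $V$, so
$$\tvd(U+Z,\, V+Z) = \frac12 \sum_m \left| \sum_z \Pr[Z=z]\big(\Pr[U=m-z]-\Pr[V=m-z]\big)\right|.$$
Moving the absolute value inside the $z$-sum by the triangle inequality and exchanging the order of summation bounds this by $\sum_z \Pr[Z=z]\cdot\tvd(U,V) = \tvd(U,V)$, where I use that $\frac12\sum_m|\Pr[U=m-z]-\Pr[V=m-z]| = \tvd(U,V)$ for each fixed $z$ (a shift of the summation index). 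The continuous case is identical with integrals in place of sums.

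With this in hand, I would introduce the hybrid distribution $X_2 + Y_1$, defined by coupling $X_2$ and $Y_1$ independently. Applying the triangle inequality for $\tvd$ gives
$$\tvd(X_1+Y_1,\, X_2+Y_2) \le \tvd(X_1+Y_1,\, X_2+Y_1) + \tvd(X_2+Y_1,\, X_2+Y_2).$$
To the first term I apply the auxiliary fact with the common summand $Y_1$ (which may be taken independent of both $X_1$ and $X_2$), bounding it by $\tvd(X_1,X_2)$; to the second term I apply the auxiliary fact with the common summand $X_2$ (taken independent of both $Y_1$ and $Y_2$), bounding it by $\tvd(Y_1,Y_2)$. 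Adding the two bounds yields exactly the claimed inequality.

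There is no real obstacle here — the argument is routine once the ``convolution does not increase $\tvd$'' lemma is isolated. The only point needing care is the legitimacy of the hybrid $X_2+Y_1$: since the statement only guarantees independence within each pair, I must note that total variation distance depends solely on the marginal laws, so choosing an independent coupling for the hybrid is without loss of generality and does not alter any of the three distances appearing in the triangle inequality.
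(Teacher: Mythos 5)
Your proof is correct and is essentially the paper's argument in a more modular form: the paper's insertion of the cross term $P_{X_2}(x)P_{Y_1}(y)$ inside a single chain of inequalities on the joint pmfs is exactly your hybrid distribution $X_2+Y_1$, and the paper's opening step (bounding the distance between the sums by the distance between the joints) plays the role of your ``convolution does not increase $\tvd$'' lemma, both being instances of the data-processing inequality. Your remark that the independent coupling of the hybrid is legitimate because total variation distance depends only on the marginal laws correctly handles the one point requiring care.
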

Proposition~\ref{lemma: sum of TVD is bounded} upper bounds the distance between two sum-distributions by the sum of distances between respective addends.

\begin{proof}[of Proposition~\ref{lemma: sum of TVD is bounded}]
$$\begin{aligned}
    &\tvd(X_1+Y_1, X_2+Y_2) \\
    \le&\frac{1}{2}\sum_{x,y}|P_{X_1,Y_1}(x,y) - P_{X_2,Y_2}(x,y)|\\
    =&\frac{1}{2}\sum_{x,y}|P_{X_1}(x)P_{Y_1}(y) - P_{X_2}(x)P_{Y_2}(y)|\\
    =&\frac{1}{2}\sum_{x,y}|P_{X_1}(x)P_{Y_1}(y) - P_{X_2}(x)P_{Y_1}(y) + P_{X_2}(x)P_{Y_1}(y)- P_{X_2}(x)P_{Y_2}(y)|\\
    \le & \frac{1}{2}\sum_{x,y}| P_{X_1}(x)P_{Y_1}(y) - P_{X_2}(x)P_{Y_1}(y)| + \frac{1}{2}\sum_{x,y}|P_{X_2}(x)P_{Y_1}(y)- P_{X_2}(x)P_{Y_2}(y)|\\
    = & \frac{1}{2}\sum_{x} |P_{X_1}(x) - P_{X_2}(x)|\sum_{y}P_{Y_1}(y) + \frac{1}{2}\sum_{x} P_{X_2}(x)\sum_{y}|P_{Y_1}(y) - P_{Y_2}(y)|\\
    = & \frac{1}{2}\sum_{x} |P_{X_1}(x) - P_{X_2}(x)|+\frac{1}{2}\sum_{y}|P_{Y_1}(y) - P_{Y_2}(y)|\\
    = & \tvd(X_1, X_2) + \tvd(Y_1, Y_2)
\end{aligned}
$$
\end{proof}

\begin{proof}[of Lemma~\ref{formal statement}]
We begin with defining $Z_{X_1}$, a discretized Gaussian distribution corresponding to the binomial random variable $X_1$. The mean $\mu$ and variance $\sigma^2$ of $Z_{X_1}$ are $\mu = \mu_{X_1} = \lfloor\alpha n\rfloor \frac{1 + \Delta}{2}$ and $\sigma^2 = \sigma^2_{X_1} = \lfloor\alpha n\rfloor \frac{1 + \Delta}{2}\cdot\frac{1 - \Delta}{2}$, respectively. According to Proposition~\ref{lemma: Bin and Normal are close}, the total variation distance between $X_1$ and $Z_{X_1}$ is small,  specifically at most $O(\frac{1}{\sqrt{n}})$.

Similarly, we can define $Z_{X_2}$, $Z_{Y_1}$, and $Z_{Y_2}$ for $X_2$, $Y_1$, and $Y_2$. These discretized Gaussian distributions are also very close to their corresponding binomial distributions in terms of total variation distance.

Considering the independence of both $Z_{X_1}$ with $Z_{Y_1}$ and $X_1$ with $Y_1$, we apply Proposition~\ref{lemma: sum of TVD is bounded}. This allows us to bound the total variation distance between the sums $Z_{X_1} + Z_{Y_1}$ and $X_1 + Y_1$ as follows: 
$$\tvd(Z_{X_1} + Z_{Y_1}, X_1+Y_1)\le \tvd(Z_{X_1}, X_1) + \tvd(Z_{Y_1}, Y_1) = O(\frac{1}{\sqrt{n}}).$$ 

In a similar reasoning, the total variation distance between $Z_{X_2} + Z_{Y_2}$ and $X_2+Y_2$ is also $O(\frac{1}{\sqrt{n}})$. 

Since the triangle inequality holds for total variation distance, now we only need to show the total variation distance between $Z_{X_1} + Z_{Y_1}$ and $Z_{X_2} + Z_{Y_2}$ is small.
Note that the mean values of $Z_{X_1}$ and $Z_{X_2}$ differ by $\lfloor \alpha n\rfloor \cdot \Delta$, which is also the mean value difference between $Z_{Y_1}$ and $Z_{Y_2}$. This difference can be decomposed into an integer part $k$ and a decimal part $r, 0\le r < 1$.

Adjusting $Z_{X_2}$ and $Z_{Y_2}$ with $k$, we can get $Z'_{X_2} = Z_{X_2} + k$ and $Z'_{Y_2} = Z_{Y_2} - k$, which have the same summation as $Z_{X_2}$ and $Z_{Y_2}$: $$Z'_{X_2} + Z'_{Y_2} = Z_{X_2} + Z_{Y_2}.$$ 
If $Z'_{X_2}$ is in $O(\frac{1}{\sqrt{n}})$ distance with $Z_{X_1}$, and $Z'_{Y_2}$ is in $O(\frac{1}{\sqrt{n}})$ distance with $Z_{Y_1}$, using Proposition~\ref{lemma: sum of TVD is bounded} we have $$\begin{aligned}
    \tvd(Z_{X_1}+Z_{Y_1}, Z_{X_2} + Z_{Y_2}) &= \tvd(Z_{X_1}+Z_{Y_1}, Z'_{X_2} + Z'_{Y_2})\\& \le \tvd(Z_{X_1}, Z'_{X_2}) + \tvd(Z_{Y_1}, Z'_{Y_2})\\& = O(\frac{1}{\sqrt{n}}).
\end{aligned}$$
Therefore,
$$\begin{aligned}    
&~d_{TV}(X_1+Y_1, X_2 + Y_2)\\
\le &~d_{TV}(X_1+Y_1,Z_{X_1} + Z_{Y_1}) + d_{TV}(Z_{X_1} + Z_{Y_1}, Z_{X_2} + Z_{Y_2}) + d_{TV}(Z_{X_2} + Z_{Y_2}, X_2 + Y_2) \\= &~O(\frac{1}{\sqrt{n}}).
\end{aligned}$$

The remaining thing is to show both $\tvd(Z_{X_1}, Z'_{X_2})$ and $\tvd(Z_{Y_1}, Z'_{Y_2})$ are $O(\frac{1}{\sqrt{n}})$.

The proximity of $Z_{X_1}$ and $Z'_{X_2}$ can be found from their definitions:
$$\Pr[Z_{X_1} = i] = \Pr(i - 1/2 < \sigma_{X_1} * Z + \mu_{X_1}\le i + 1/2),$$and $$\Pr[Z'_{X_2} = i] = \Pr(i - 1/2 < \sigma_{X_2} * Z + \mu_{X_2} + k\le i + 1/2).$$
Since $\sigma_{X_1} = \sigma_{X_2}, \mu_{X_1} = \mu_{X_2} + k + r$, 
$$\Pr[Z'_{X_2} = i] = \Pr(i - 1/2 < \sigma_{X_1} * (Z - \frac{r}{\sigma_{X_1}}) + \mu_{X_1}\le i + 1/2).$$
Thus, 
\begin{align*}
    \tvd(Z_{X_1}, Z'_{X_2}) & \le \tvd(Z, Z - \frac{r}{\sigma_{X_1}})\\
    & = \Phi(\frac{r}{2\sigma_{X_1}}) - \Phi(-\frac{r}{2\sigma_{X_1}})\tag{$\Phi(x)$ is the cumulative distribution function of $N(0,1)$}\\
    &= \int_{-r/2\sigma_{X_1}}^{r/2\sigma_{X_1}} \frac{1}{\sqrt{2\pi}}\exp(-x^2/2) \, dx\\
    &\le \frac{r}{\sigma_{X_1}} \cdot \frac{1}{\sqrt{2\pi}} \tag{$\exp(-x^2/2) \le 1$ holds for all $x$}\\
    &= O(\frac{1}{\sqrt{n}}).
\end{align*}

A similar argument applies to $Z_{Y_1}$ and $ Z'_{Y_2}$, where $\tvd(Z_{Y_1}, Z'_{Y_2}) \le O(\frac{1}{\sqrt{n}})$. 
\end{proof}
\end{toappendix}

\begin{toappendix}
\subsection{Proof of Theorem~\ref{theorem: general impossibility}}
\end{toappendix}

Moreover, the impossibility of informed majority decision aggregation is not restricted to truthful mechanisms. In the following theorem, we consider any \emph{symmetric strong Bayes Nash equilibrium}. We demonstrate that no anonymous mechanism can lead to a symmetric strong Bayes Nash equilibrium that ensures the informed majority decision.

\begin{theoremrep}\label{theorem: general impossibility}
For configuration $\config$ where the majority proportion $\alpha \le \frac1{\Delta + 1}$, $0 < \mu < 1, P_\signalH^\stateH = 1/2 + \Delta/2$, and $P_\signalH^\stateL = 1/2 - \Delta/2$, even the number of agents $n \to \infty$, in any anonymous mechanism $\mechanism$, no symmetric $\epsilon$-strong Bayes Nash equilibrium can guarantee the informed majority decision with probability $1$. Here, we define $\epsilon$ as $\epsilon= \frac14\min\{\mu, 1-\mu\}$.
\end{theoremrep}

A symmetric strategy profile where all agents play the strategy $\sigma:\mathcal{T}\times\mathcal{S}\to\Delta(\mathcal{R})$ can be characterized by the tuple $\{\sigma^\typeL, \sigma^\typeH\}$ with $\sigma^\typeL(\cdot)=\sigma(\typeL,\cdot)$ and $\sigma^\typeH(\cdot)=\sigma(\typeH,\cdot)$, where all type-$\typeL$ agents uniformly adopt strategy $\sigma^\typeL$ and all type-$\typeH$ agents take strategy $\sigma^\typeH$.
In the proof of Theorem~\ref{theorem: general impossibility}, we will use this alternative interpretation of symmetric strategies.

The proof of this theorem is built upon Theorem~\ref{theorem: impossible result}. By employing a technique similar to the revelation principle, we demonstrate that a symmetric strong Bayes Nash equilibrium, which leads to the informed majority decision, suggests a truth-telling equilibrium in a new mechanism that similarly ensures the informed majority decision. Consequently, such symmetric strong Bayes Nash equilibria cannot exist.

\begin{proof}
     The proof of this theorem is supported by Theorem~\ref{theorem: impossible result}. For contradiction, suppose there exists a mechanism $\mechanism$ and a symmetric $\epsilon$-strong Bayes Nash equilibrium $\{\sigma^\typeL, \sigma^\typeH\}$ that consistently yields the informed majority decision. 
     
     We challenge this by considering a new mechanism $\mechanism'$ that requests agents' preference types and private signals. Similar to the revelation principle, the new mechanism simulates the equilibrium strategy for the agents. For a type-$\typeL$ agent receiving signal $s$, $\mechanism'$ simulates his/her report according to the distribution $\sigma^\typeL(s)$. For a type-$\typeH$ agent, the mechanism emulates $\sigma^\typeH(s)$. After generating the report profile $(r_1, r_2, \ldots, r_n)\in \mathcal{R}^n$, $\mechanism'$ runs $\mechanism$ and adopts its decision.

    Then we examine the truthfulness of mechanism $\mechanism'$. Given that $\{\sigma^\typeL, \sigma^\typeH\}$ ensures the informed majority decision under $\mechanism$, truthful reporting under $\mechanism'$ should also lead to the informed majority decision.
    For the majority-type agents, any deviation on the outcome lowers their utility, thus they will not join a deviation coalition. For the minority-type agents, any effective deviation from truthful reporting implies a significantly advantageous deviation from $\{\sigma^\typeL, \sigma^\typeH\}$ under the original mechanism $\mechanism$, which is not possible because strategies $\{\sigma^\typeL, \sigma^\typeH\}$ is an $\epsilon$-strong Bayes Nash equilibrium. Therefore, no coalition can benefit from deviating truthful strategy.
    Truthful reporting forms an $\epsilon$-strong Bayes Nash equilibrium under $\mechanism'$. 
    
    However, this contradicts the understanding in Theorem~\ref{theorem: impossible result} that no truthful mechanism can consistently secure the informed majority decision for the assumed configuration, thereby challenging the assumption.
\end{proof}


\begin{toappendix}    

\section{Elicitation of $\frac{P_\signalL^\stateL}{P_\signalL^\stateL + P_\signalH^\stateH}$}\label{sect: threshold elicitation}

Our mechanism requires agents to accurately report $\frac{P_\signalL^\stateL}{P_\signalL^\stateL + P_\signalH^\stateH}$. This is challenging to implement in practical scenarios like political forecasting, where extracting the conditional probability of signals across different world states is intricate.

To address this complexity, we develop a more user-friendly questionnaire to implement our mechanism. This questionnaire gathers agents' predictions of others and their confidence in the world state, which has been proven to be pragmatic in previous research \cite{prelec2017solution}, and enables us to derive the value of $\frac{P_\signalL^\stateL}{P_\signalL^\stateL + P_\signalH^\stateH}$. 

Here is our questionnaire:
\begin{itemize}
    \item[(1)] Choose one of the followings:
    \begin{itemize}
        \item [(a)]I prefer alternative $\accept$ when the world state is $\stateL$.
        \item [(b)]I prefer alternative $\reject$ when the world state is $\stateL$.
    \end{itemize}
    \item[(2)] Choose one of the followings:
       \begin{itemize}
        \item [(a)]My received signal is $\signalL$. 
        \item [(b)]My received signal is $\signalH$. 
    \end{itemize}
    \item[(3)]What percentage of the participants do you believe will choose option (a) in the second question?
    \item[(4)]
    How likely do you think the state of the world will be $\stateL$?
    \item[(5)]Suppose you received the opposite signal, please answer the third and the fourth questions again.
\end{itemize}

The first two questions in the provided questionnaire are designed to gather information about participants' preference types and their private signals. The third question seeks to elicit participants' predictions regarding the signals of others, namely, $\Pr[S_j = \signalL \mid S_i = s]$, where $i \neq j$ and $s\in \{\signalL, \signalH\}$. 
The fourth question asks participants' confidence in the state being ``$\stateL$'', corresponding to $\Pr[\omega = \stateL \mid S_i = s], s \in \{\signalL,\signalH\}$.
In the followings, we show that the information gathered above is sufficient to recover the value of $\frac{P_\signalL^\stateL}{P_\signalL^\stateL + P_\signalH^\stateH}$. 

Relationships between these extracted probabilities are described as follows:
\begin{subequations}
\begin{align}
    \Pr[S_j = \signalL \mid S_i = \signalL] = P_\signalL^\stateH + P[\omega = \stateL \mid S_i = \signalL](P_\signalL^\stateL - P_\signalL^\stateH),\label{eq: PeerPrediction1}\\
    \Pr[S_j = \signalL \mid S_i = \signalH] = P_\signalL^\stateH + P[\omega = \stateL \mid S_i = \signalH](P_\signalL^\stateL - P_\signalL^\stateH).\label{eq: PeerPrediction2}
\end{align}
\end{subequations}

These two equations allow us to calculate the values of interest:
\begin{equation}\label{eq: signal difference}
    \Delta = P_\signalL^\stateL - P_\signalL^\stateH = \frac{\Pr[S_j = \signalL \mid S_i = \signalL] - \Pr[S_j = \signalL \mid S_i = \signalH]}{\Pr[\omega = \stateL \mid S_i = \signalL] - \Pr[\omega = \stateL \mid S_i = \signalH]},
\end{equation}
\begin{equation}\label{eq: P_l_H}
   P_\signalL^\stateH = \Pr[S_j = \signalL \mid S_i = \signalL] - \Pr[\omega = \stateL \mid S_i = \signalL] \cdot (P_\signalL^\stateL - P_\signalL^\stateH), 
\end{equation}
\begin{equation}\label{eq: P_l_L}
    P_\signalL^\stateL = \Pr[S_j = \signalL \mid S_i = \signalL] + (1 - \Pr[\omega = \stateL \mid S_i = \signalL]) \cdot (P_\signalL^\stateL - P_\signalL^\stateH).
\end{equation}
Equation (\ref{eq: signal difference}) is obtained by subtracting (\ref{eq: PeerPrediction2}) from (\ref{eq: PeerPrediction1}), extracting the difference in signal probabilities between two states. Equation (\ref{eq: P_l_H}) follows directly from (\ref{eq: PeerPrediction1}), and Equation (\ref{eq: P_l_L}) can be derived from the following equation $$\Pr[S_j = \signalL \mid S_i = \signalL] = P_\signalL^\stateL + \Pr[\omega = \stateH \mid S_i = \signalL](P_\signalL^\stateH - P_\signalL^\stateL) = P_\signalL^\stateL - (1-\Pr[\omega = \stateL \mid S_i = \signalL])(P_\signalL^\stateL - P_\signalL^\stateH) .$$

Finally, we can derive the value of $\frac{P_\signalL^\stateL}{P_\signalL^\stateL + P_\signalH^\stateH}$ by 
$$\frac{P_\signalL^\stateL}{P_\signalL^\stateL + P_\signalH^\stateH} = \frac{P_\signalL^\stateL}{P_\signalL^\stateL + 1 - P_\signalL^\stateH} = \frac{\Pr[S_j = \signalL \mid S_i = \signalL] + (1 - \Pr[\omega = \stateL \mid S_i = \signalL]) \cdot (P_\signalL^\stateL - P_\signalL^\stateH)}{1 + P_\signalL^\stateL - P_\signalL^\stateH}.$$

\end{toappendix}
\section{Comparing Two Mechanisms}\label{section: connection}
In this section, we compare our mechanism with the majority vote mechanism. 
Both the majority vote and our truthful mechanism utilize the difference in signal frequencies across states, $\Delta$, to distinguish the true state or the informed majority decision. 
However, they do this in different ways and result in different thresholds, $\thetaMajority$ and $\theta^\ast$, for the majority proportion.

\textbf{Different Methods of Leveraging Signal Differences}. In our mechanism, the ratio $\frac{P_\signalL^\stateL}{P_\signalL^\stateL + P_\signalH^\stateH}$ serves as a threshold and the reporting signal frequencies of $\signalL$ in the two states naturally lie at both sides of the threshold.
In particular, we have shown in Proposition~\ref{claim:frequency} that the fraction of agents receiving signal $\signalL$ is below $\frac{P_\signalL^\stateL}{P_\signalL^\stateL + P_\signalH^\stateH}$ in state $\stateH$, and above $\frac{P_\signalL^\stateL}{P_\signalL^\stateL + P_\signalH^\stateH}$ in state $\stateL$. 
Our mechanism uses this observation to identify the true state.

In the majority vote mechanism, the midpoint, $\frac12$, is the threshold of the vote share of alternative $\accept$. 
When the signals are unbiased, i.e., with $P_\signalL^\stateH<\frac12<P_\signalL^\stateL$, the informed majority decision wins in both states as long as the majority agents vote according to their received signals. 
When the signals are biased, with one signal occurring more often than the other in both states, the majority-type agents need to strategically ``shift'' the signal frequencies $P_\signalL^\stateH$ and $P_\signalL^\stateL$ by adjusting their voting strategies. 
By making the adjusted frequencies on the opposite sides of $\frac12$, the majority-type agents ensure their informed decision stands out. 

\textbf{Different Thresholds for Majority Proportion}. Two mechanisms induce different majority proportion thresholds for the strong Bayes Nash equilibrium. 
The threshold is $\frac{1}{2M}$ in the majority vote mechanism and $\frac1{\Delta + 1}$ in our mechanism. Notably, as we verify in Proposition~\ref{obs: threshold ineq}, the majority vote has a stronger requirement to attain equilibrium because $\frac{1}{2M} \ge \frac1{\Delta + 1}$ always holds.

For the existence of strong Bayes Nash equilibrium, which is equivalent to the perfect elicitation of the informed majority decision, the proportion of minority agents should be small so they cannot confuse the mechanism's judgment about the world state. 
Intuitively, the larger the difference between the two states, the harder the minority can confuse the mechanism. 
This difference corresponds to the difference in signal frequencies, $\Delta$, in our mechanism. 
On the other hand, in the majority vote, it becomes the difference in the expected vote share of $\accept$ under the majority type's optimal strategy, whose value is $p_\accept^\stateH(\delta_\signalL^*, \delta_\signalH^*) - p_\accept^\stateL(\delta_\signalL^*, \delta_\signalH^*) = \Delta_\accept^\stateH(\delta_\signalL^*, \delta_\signalH^*) -\Delta_\accept^\stateL(\delta_\signalL^*, \delta_\signalH^*) = \Delta\cdot (\delta_\signalH^* + \delta_\signalL^*)$.
Given that the optimal strategy satisfies $\delta_\signalH^*, \delta_\signalL^* \in (0, \frac12]$, the differential in majority voting, $\Delta\cdot (\delta_\signalH^* + \delta_\signalL^*)$, cannot surpass $\Delta$. 
This leads to a higher threshold on the majority proportion in the majority vote mechanism.

In conclusion, our mechanism ``places'' a threshold $\frac{P_\signalL^\stateL}{P_\signalL^\stateL + P_\signalH^\stateH}$ between $P_\signalL^\stateH$ and $P_\signalL^\stateL$, whereas, in the majority vote mechanisms, agents strategically ``shift'' the values of $P_\signalL^\stateH$ and $P_\signalL^\stateL$ to make them lie on different side of $\frac12$.
The shifting process reduces the margin $\Delta$ by a factor of $(\delta_\signalH^* + \delta_\signalL^*)$, which leaves more room for the minority type agents to flip the outcome of the election.


\section{Conclusion and Future Work}
In this paper, we studied the voting game with two alternatives $\{\accept, \reject\}$ and coalitional agents, where there are two groups of voters: type-$\typeL$ voters with $v_i(\stateL, \accept) > v_i(\stateL, \reject)\text{ and }v_i(\stateH, \reject) > v_i(\stateH, \accept)$ and type-$\typeH$ voters with $v_i(\stateL, \reject) > v_i(\stateL, \accept)\text{ and }v_i(\stateH, \accept) > v_i(\stateH, \reject)$. We identified sharp thresholds for the fraction of the majority-type voters necessary for reaching the informed majority decision in strong Nash equilibrium.

One natural future work is to extend the characterization to the setting with general voters' utility functions $v_i(\cdot,\cdot)$.
Notice that, for general utility functions, voters can be classified into four types.
Other than the two ``contingent'' types (Type-$\typeH$ and Type-$\typeL$) studied in this paper, there are two other ``predetermined'' types of voters:
\begin{itemize}
    \item Type-$\accept$ voters who always prefer the alternative $\accept$: $v_i(\stateL, \accept) > v_i(\stateL, \reject)\text{ and }v_i(\stateH, \accept) > v_i(\stateH, \reject)$, and
    \item Type-$\reject$ voters who always prefer the alternative $\reject$: $v_i(\stateL, \reject) > v_i(\stateL, \accept)\text{ and }v_i(\stateH, \reject) > v_i(\stateH, \accept)$.
\end{itemize}

Han et al.'s work~\cite{han2023wisdom} involves three types of voters (Type-$\typeH$, Type-$\accept$, and Type-$\reject$) and some assumptions on voters' cardinal utilities. In their preference setting, they show any strong Bayes Nash equilibrium under the majority voting rule can achieve the informed majority decision. For general cases with all four types of voters, how to characterize the condition for achieving the informed majority decision? This may require combining the results in Han et al.~\cite{han2023wisdom} and the observations in this paper.

A further extension is to study the setting with more than two world states or more than two alternatives.
Under this setting, the meaning of ``antagonistic preferences'' becomes unclear, and it is more natural to consider general utility functions $v_i:\Omega\times\mathbb{A}\to\{0,1,\ldots,B\}$ where $\Omega$ is no longer binary and $\mathbb{A}=\{\accept_1,\ldots,\accept_m\}$ is the set of $m$ alternatives.
In this case, voters can be classified into $(m!)^{|\Omega|}$ types: one preference ranking over the $m$ alternatives for each world state.
We believe general characterizations for achieving the informed majority decision is worth exploring in cases with $m = 2$ and non-binary world states.
For non-binary alternatives, however, even in the special case when $m=3$ and all the voters are predetermined (whose rankings over the alternatives do not depend on the world state), the Gibbard-Satterthwaite Theorem~\cite{gibbard1973manipulation, satterthwaite1975strategy} implies that no meaningful aggregation of the preferences is possible if voters are strategic.

\begin{credits}
\subsubsection{\ackname} 
This study was supported by the National Natural Science Foundation of China (No. 62172012 and No. 62102252).

\end{credits}

\bibliographystyle{splncs04}
\bibliography{reference}

\end{document}